\documentclass{article}
\usepackage{iclr2025_conference,times}

\usepackage{amsmath,amsfonts,bm}

\def\eqref#1{equation~\ref{#1}}

\def\1{\bm{1}}

\DeclareMathAlphabet{\mathsfit}{\encodingdefault}{\sfdefault}{m}{sl}
\SetMathAlphabet{\mathsfit}{bold}{\encodingdefault}{\sfdefault}{bx}{n}

\usepackage[colorlinks=true, citecolor=blue, linkcolor=blue, urlcolor=blue]{hyperref}
\usepackage{url}

\usepackage[utf8]{inputenc} %
\usepackage[T1]{fontenc}    %
\usepackage{booktabs}       %
\usepackage{amsfonts}       %
\usepackage{nicefrac}       %
\usepackage{microtype}      %
\usepackage{xcolor}         %

\usepackage{tabularx} %
\usepackage{lipsum}   %

\usepackage{longtable}

\usepackage[ruled,norelsize,linesnumbered,noend]{algorithm2e} %
\usepackage{amsmath}
\usepackage{amsthm} %
\usepackage{thmtools}
\usepackage{cleveref} 
\usepackage{graphicx}
\usepackage{svg}
\usepackage{subcaption}
\usepackage{wrapfig}

\newtheorem{theorem}{Theorem}[section] %
\newtheorem{lemma}[theorem]{Lemma}     %
\newtheorem{corollary}[theorem]{Corollary} %
\newtheorem{definition}{Definition}[section] %
\newtheorem{remark}{Remark}

\usepackage{thmtools}

\newcount\Comments  %
\Comments = 0
\newcommand{\kibitz}[2]{\ifnum\Comments=1{\color{#1}{#2}}\fi}
\newcommand{\mjc}[1]{\kibitz{blue}{[MJC: #1]}}
\newcommand{\es}[1]{\kibitz{purple}{[ES: #1]}}

\title{Truthful Aggregation of LLMs\\
with an Application to Online Advertising}

\author{%
Ermis Soumalias \\
University of Zurich \\
ETH AI Center \\
\texttt{ermis@ifi.uzh.ch}
\And
Michael J. Curry \\
Harvard University \\
University of Zurich \\
ETH AI Center \\
\texttt{mcurry@seas.harvard.edu}
\And
Sven Seuken \\
University of Zurich \\
ETH AI Center \\
\texttt{seuken@ifi.uzh.ch}
}

\iclrfinalcopy
\begin{document}

\maketitle
\begin{abstract}
The next frontier of online advertising is revenue generation from LLM-generated content. We consider a setting where advertisers aim to influence the responses of an LLM to align with their interests, while platforms seek to maximize advertiser value and ensure user satisfaction. The challenge is that advertisers' preferences generally conflict with those of the user, and advertisers may misreport their preferences. To address this, we introduce MOSAIC, an auction mechanism that ensures that truthful reporting is a dominant strategy for advertisers and that aligns the utility of each advertiser with their contribution to social welfare. Importantly, the mechanism operates without LLM fine-tuning or access to model weights and provably converges to the output of the optimally fine-tuned LLM as computational resources increase. Additionally, it can incorporate contextual information about advertisers,  which significantly improves social welfare. Through experiments with a publicly available LLM, we show that MOSAIC leads to high advertiser value and platform revenue with low computational overhead. While our motivating application is online advertising, our mechanism can be applied in any setting with monetary transfers, making it a general-purpose solution for truthfully aggregating the preferences of self-interested agents over LLM-generated replies.
\end{abstract} %

\mjc{did a small amount of shortening. correlation is now in appendix, please take a look at that. 2 is done. with comments removed, we are currently at 10 pages}

\section{Introduction}
\textit{Large language models (LLMs)} are becoming ubiquitous -- as coding assistants, as chat interfaces, and even as alternatives to search engines \citep{bommasani2022opportunities}.
To ensure their usefulness, it is essential to closely align LLM outputs with user preferences.
However, in general, there may be multiple interested parties who disagree on the desired behavior of \textit{the same} LLM.
This leads to the question of \textit{how to generate LLM replies given multiple conflicting preferences}.

The motivating application for our work is online advertising, which is the main source of revenue for large tech companies such as Google, Meta, and Twitter. For example, in $2023$, Meta's advertising revenue of $132$ billion USD was more than $97$\% of its total revenue \citep{meta2024}. Ad auctions are the workhorse mechanism to determine the placement and prices of commercial content~\citep{varian2007,edelman2007}.
As platforms are now working on integrating advertisements into LLM-generated content~\citep{Criddle24Perplexity}, new auction mechanisms are needed for this application.

We present a new auction mechanism for this problem, which we call \textit{MOSAIC (\textbf{M}echanism for \textbf{O}ptimally \textbf{S}ampling and \textbf{A}ggregating LLM Outputs with \textbf{I}ncentive \textbf{C}ompatibility)}. The advertisers are paying, not for some particular item or bundle, but rather to influence the output generated by an LLM in a direction closer to their own preferences.
Although our mechanism could be of interest whenever one has to aggregate the preferences of multiple self-interested agents over LLM behavior (as long as it is reasonable to charge monetary payments), we see online advertising as the most natural setting of interest.
For this reason, we refer to participants in the mechanism as \textit{advertisers}.

\subsection{Problem setting}
Consider a situation where a user queries an LLM with a specific question.
We assume the following:
\begin{itemize}
    \item There is a \textit{reference LLM} that can produce useful replies to the user's query. 
    \item There are \textit{advertisers}, each with their own LLM, who want to influence the reply to the user. 
\end{itemize}
In our model, each advertiser is represented via her own LLM
or directly with a \textit{reward function}, similar to the function used to fine-tune LLMs in the \textit{Reinforcement Learning from Human Feedback} pipeline \citep{azar2023}. 
For this reason, we refer to an advertiser's value for a reply as her \textit{reward}.
The auctioneer's task is, given the user's query and the advertisers' preferences, to \textit{produce a reply that is useful for the user while also generating high rewards for the  advertisers}.

\paragraph{Example.}
A user queries an LLM \emph{``What are effective ways to learn a musical instrument online?''} 
There are two advertisers:
\emph{MusicMastery}, which offers online music lessons, and
\emph{InstaTune}, which sells musical instruments. MOSAIC merges the preferred replies from the advertisers' LLMs and the reference LLM to produce one combined reply to the user:
\begin{quote}
    \textit{Effective ways to learn a musical instrument online include:\\
1. Online music lessons: Platforms like \textbf{MusicMastery} offer one-on-one virtual tutoring, group classes, and pre-recorded lessons.\\
2. Interactive learning tools: \textbf{InstaTune} offers a range of interactive learning tools, such as apps, software, and games, that make learning fun and engaging. }
\end{quote}

\subsection{Overview of Contributions}

In this paper, we present MOSAIC, an auction mechanism designed to aggregate the preferences of multiple self-interested advertisers over LLM-generated replies. The first challenge we address is that \mbox{MOSAIC} must produce replies for which the advertisers receive high rewards, but without steering the LLM's behavior too far from that of the user-centric reference LLM. We address this by drawing a strong connection to the RLHF objective (e.g., \cite{ziegler2020finetuning}), equipping MOSAIC with a hyperparameter that enables the auctioneer to balance between the usefulness of the produced reply to the user
and the advertisers in an interpretable and principled way (\Cref{subsection:the_formal_model}).

Second, leveraging an importance sampling-based technique, MOSAIC is the only mechanism in the literature that converges to the optimal distribution as computational resources increase (\Cref{Corollary:mechanism_follows_optimal_distribution}).
Third, MOSAIC can incorporate contextual information, similar to how sponsored search auctions utilize advertiser descriptions. 
This accelerates convergence and increases value for the advertisers and revenue for the auctioneer (\Cref{subsection:efficiency_results}).

Fourth, we address technical feasibility and practicality. We adopt the desiderata proposed by
\citet{duetting2023mechanism}, who argued that auctions must be compatible with existing LLM technology, only using ``information obtainable from current models'' in such a way that they are ``easy to integrate into the system'' and rely only on ``easy manipulations of LLM output.'' Furthermore, it must be computationally feasible to run the auction repeatedly and with different advertiser preferences. In particular, we cannot afford the expensive process of fine-tuning an LLM's weights for each possible query. To address this, we have designed MOSAIC to work by only post-processing multiple LLM outputs -- it requires only ``API access'', not even viewing the model weights 
(\Cref{section:practical_considerations}).

Fifth, we address the fact that advertisers might have an interest in misreporting their preferences (which include their \textit{LLMs}), analogous to over-/underbidding in traditional auctions. To address this challenge, we employ an allocation rule that is compatible with Rochet payments \citep{rochet1987}. These payments uniquely ensure that MOSAIC is strategyproof (i.e., that truthful reporting is a dominant strategy 
for each advertiser, no matter what the others do (\Cref{theorem:truthful_payments})). 
Crucially, Rochet payments ensure that MOSAIC is strategyproof even if the allocation rule has not converged.

Sixth, we introduce an offset to the Rochet payments, to ensure that
each advertiser's utility is proportional to her contribution to the social welfare. 
This alignment is important for the long-term success of the market, incentivizing only the most relevant advertisers to participate. 

Finally, we discuss \textit{individual rationality (IR),} which guarantees that a participant is weakly better off from participating in the mechanism. 
In \Cref{subsec:differences}, we discuss the unique properties of our setting that make \textit{ex-post} IR unattainable. However, in \Cref{subsec:almost_ir}, we show that MOSAIC is ``almost ex-post IR'' in a well-defined sense. In \Cref{subsection:efficiency_results}, we experimentally show that, for the distribution of instances tested, it is \textit{ex-ante} IR (i.e., advertisers are in expectation better off by participating).

In \Cref{section:experiments}, we provide experimental results for the online advertising domain. We demonstrate that MOSAIC quickly converges to the optimal LLM with low computational cost,
generating significant value for the advertisers
and revenue for the auctioneer while also being useful to the user.

\section{Related Work}
In their pioneering work, \citet{duetting2023mechanism} 
introduced the field of mechanism design for LLMs, 
proposing a novel sequential mechanism where the output sequence is generated token by token, and advertisers bid for their LLM to produce the next token.  
This work laid the foundation for future work in this area. 
However, 
several challenges hinder the adoption of their approach in the real-world:
(i) Advertisers face the \textit{exposure problem}: small changes in the generated sequence, such as the addition of a word like "not," can completely alter its meaning. An advertiser might pay a significant amount for the tokens generated up to a certain point, only for the continuation to unexpectedly negate or distort her intended message.\footnote{
As an example, suppose advertiser A has successfully bid for all the tokens in the sequence "Planning your next vacation? For the cheapest flights to Hawaii, book via..." However, she fails to win the bid for the crucial next token, which would have been her brand name. This leaves her having paid for the preceding tokens without getting the benefit of associating her brand with the message.} 
(ii) The mechanism is easily manipulable if the assumption that advertisers cannot misreport their LLMs is dropped.
(iii) For a given prompt, an advertiser's spend grows with the length of the generated sequence. 
MOSAIC addresses all of the above challenges.

\citet{dubey2024auctions} proposed a generalization of the position auction \citep{varian2007, edelman2007} to a setting where each advertiser is interested in having some specific text ad displayed alongside the organic results. In their mechanism, an LLM module coupled with an auction module work in tandem to merge the ads into a single ad summary in an incentive-compatible way.

\citet{Hajiaghayi24Ad} considered advertisers bidding in an auction to have their ads placed into various portions of a retrieval-augmented generation (RAG) response. Their auction only allows advertisers to report (and misreport) scalar values to influence the inclusion of non-misreportable advertising texts in the combined output. This is in contrast to MOSAIC, which allows the aggregation of full LLMs that advertisers may arbitrarily misreport.

\citet{Feizi24Online} presented an abstract design for an LLM advertising system and detailed a number of research challenges that would have to be overcome in the course of implementation.
\citet{Conitzer24Social} drew connections between \textit{computational social choice} and LLM alignment.
\citet{Fish23Generative} presented work in the opposite direction: how can LLMs be used to solve problems in social choice?
\citet{Harris24Algorithmic} studied Bayesian persuasion in an abstract setting where a ``simulator'' (for example, a realistic LLM) of the agent is available. \citet{Werner24Experimental} gave preliminary empirical evidence that LLM-generated content can influence consumers.

In work that appeared after the initial version of this paper, \citet{Bergemann24DataDriven} consider a setting where bidders have both private types reflecting their preferences and private signals reflecting information about a world state (for example, user interests). When applied to LLM auctions, they identify their setting as a generalization of ours, where the signals known to the advertisers may actually improve the output for the user, and where deviations from the reference LLM may be measured by arbitrary distance functions.
They analyze conditions under which a welfare maximizing mechanism can be implemented truthfully (eliciting both types and signals) in posterior equilibrium, but do not consider practical implementations of their mechanism.

\section{Framing Sequence Generation as a Mechanism Design Problem} \label{section:game_theory_problem}

\subsection{Formal Model} \label{subsection:the_formal_model}

A \textit{user} issues a query $x$.
There is a \textit{reference LLM} $\pi_{\text{ref}}$ that the auctioneer aims not to deviate from too much (e.g., because it is responsible for providing useful replies to the user).
Additionally, there is a set $N$ of $n$ advertisers who have their own preferences for the reply (i.e., a token sequence) that will be returned to the user. We use the terms \textit{sequence} and \textit{reply} interchangeably. 

An LLM can be abstracted as a mapping from (partial) token sequences to a distribution over the next token, or equivalently as an implicit distribution over token sequences. 
We use this second abstraction, i.e., $\pi_i(y | x)$ denotes the probability that advertiser $i$'s LLM $\pi_i$ assigns to output sequence (i.e., reply) $y$ for the user query $x$.

We let $r_i(x, y)$ denote advertiser $i$'s \textit{reward} for sequence $y$, given query  $x$.\footnote{
In theory, the advertiser's reward for a generated sequence could also depend on user-specific information, but we abstract that away. 
Equivalently, we assume that the reward function $r_i$ provided by advertiser $i$ is specific to the given user that asked the query $x$.}
Informally, the auctioneer's goal is to sample the final sequence from a distribution that optimizes the  advertisers' expected rewards without substantially diverging from $\pi_{\text{\text{ref}}}$. 
This is motivated by traditional online ad auctions, where the implicit goal is to maximize advertiser value subject to ensuring a chosen level of usefulness to the user (e.g., the balance between allocating search slots to ads and native content).
In the context of an LLM environment, the analogous notion of ``usefulness to the user'' is represented by the closeness of the final distribution to $\pi_{\text{ref}}$.
Formally, the goal is to maximize:
\begin{gather} \label{eq:objective}
    J(\pi) =  \mathbb{E}_{y\sim\pi} \left [\sum_{i \in N} r_i(x, y) \right ] - 
    \tau D_{\text{KL}}(\pi(\cdot | x) || \pi_{\text{\text{ref}}}(\cdot | x)),   
\end{gather}
where $\tau > 0 $ is a hyperparameter enabling the auctioneer to balance producing replies 
closer to the reference LLM or with higher reward for the  advertisers, and $D_{\text{KL}}$ is the Kullback-Leibler divergence. 

This objective is analogous to the standard Reinforcement Learning from Human Feedback (RLHF) approach \citep{ziegler2020finetuning}, but replaces the human feedback reward function $r_{\text{HF}}(x,y)$ with the 
advertisers' aggregate reward function $r(x,y) =  \sum_{i \in N} r_i(x, y)$.
For an overview of RLHF, we recommend \citet[\S 3]{rafailov2023dpo}.

The optimal solution {$\pi_r^*$} to the optimization problem in (\ref{eq:objective}) was derived by \citet{peters2007}: 
\begin{gather} \label{eq:opt_distribution}
    \pi_r^* (y | x) = \frac{1}{Z(x)} \pi_{\text{\text{ref}}}(y|x) \exp \left ( 
    \frac{1}{\tau} \sum_{i \in N} r_i(x, y)
    \right ), 
\end{gather}
where $Z(x) = \sum_{y \in T^*} \pi_{\text{ref}}(y|x) \exp \left ( \frac{1}{\tau} \sum_{i \in N} r_i(x, y)\right ) $ is the partition function.

The performance goal we seek to maximize is the quantity $\tau \log \pi_{\text{ref}} (y| x) + \sum_{i \in N} r_i (x,y) $; as it aggregates  the interests of both the user and the advertisers, we refer to it as the \emph{social welfare} of $y$.

Let $\widehat{R}$ be the set of all possible reports by the  advertisers. 
A \textit{mechanism} is defined as a pair $(\pi, p)$. 
The \textit{allocation rule} 
$\pi: \widehat{R} \rightarrow  ( T^* \rightarrow  \Delta(T^*))$ 
maps any report profile 
$\widehat{r} = (\widehat{r}_1, \widehat{r}_2, \dots, \widehat{r}_n) \in \widehat{R}$ of the  advertisers' rewards 
to an LLM $\pi_{\widehat{r}}$, which in turn is a mapping from a user query $x$ to a distribution over token sequences $\delta(T^*)$. 
We denote the LLM that the allocation rule $\pi$ induces for reports $\widehat{r}$ as  $\pi_{\widehat{r}}$, 
and the optimal LLM for those reports  (i.e., the maximizer of \Cref{eq:objective}) as  $\pi_{\widehat{r}}^*$.
The \textit{payment rule} $p: \widehat{R} \rightarrow \mathbb{R}^n$ maps any report profile $\widehat{r}$ of the advertisers' rewards to a payment profile $p(\widehat{r})$, where $p_i(\widehat{r})$ is the payment of the $i$-th advertiser to the mechanism.

A mechanism is \emph{strategyproof} if and only if, for any advertiser $i \in N$, 
truthfully reporting her reward function is always the optimal strategy, regardless of the reports from other advertisers. In our setting, this property holds in expectation because the reply is \emph{sampled} from the LLM induced by the allocation rule.

\begin{definition}[Strategyproof]
{A mechanism $(\pi, p)$ is \emph{strategyproof} if, for any advertiser $i \in N$, true reward function $r_i$, reported reward function $\widehat{r}_i$, reported reward functions $\widehat{r}_{-i}$ by the other advertisers and prompt $x$:
  $  \mathbb{E}_{y \sim \pi_{(r_i, \widehat{r}_{-i} )} (\cdot | x) } 
  [ r_i(x, y) - p_i(r_i, \widehat{r}_{-i} ) ] \ge
  \mathbb{E}_{y \sim \pi_{(\widehat{r}_i, \widehat{r}_{-i} )} (\cdot | x) } 
  [ r_i(x, y) - p_i(\widehat{r}_i, \widehat{r}_{-i} ) ]
    ,$
where $r_i(x, y) - p_i (\widehat{r}_i, \widehat{r}_{-i} )$ is the utility of advertiser $i$ for reply $y$ when her payment is $p_i(\widehat{r}_i, \widehat{r}_{-i})$.}
\end{definition}

\subsection{The shortcomings of VCG in this setting} \label{section:vcg_problems}
At first sight, it may seem that the Vickrey–Clarke–Groves (VCG) mechanism would be suitable for our setting~\citep{vickrey1961,clarke1971,groves1973}. VCG selects the outcome that maximizes the sum of all agents' values.
This can be either the single optimal sequence in \Cref{eq:objective} or the optimal distribution of \Cref{eq:opt_distribution}. 
The VCG mechanism has a corresponding payment rule to incentivize truthful reporting: it charges each agent her externality, that is,
the total reduction in value (respectively expected value) for the other agents that her participation in the mechanism caused.

However, in our setting, VCG is \textit{not} a viable option:
The VCG allocation rule requires calculating the \textit{exact} optimal solution to the optimization problem, which is intractable for choosing an LLM to maximize \Cref{eq:objective} and is even difficult for choosing a single optimal sequence.
If a sub-optimal solution is chosen, VCG's strategyproofness is no longer guaranteed~\citep{Nisan07Computationally,Nisan1999Algorithmic,Lehmann2002Truth}.
Thus, VCG is not a suitable mechanism in our setting.

\section{The MOSAIC Mechanism: Allocation Rule} \label{section:allocation_rule}
\subsection{Convergence to Optimality, Advertiser Contexts and Importance Sampling}

In this section, we introduce MOSAIC's allocation rule.
The high-level idea is that first, 
a set of $M$ \textit{candidate replies} are generated based on some LLM $\pi_{\text{gen}}$ responsible for that task. 
Then, the probability of returning each candidate reply is re-weighted based on the advertisers' reports and the reference LLM $\pi_{\text{ref}}$ so that in the limit, as $M \to \infty$, the probability of returning each reply converges to its probability under the optimal distribution of \Cref{eq:opt_distribution}. This approach shares similarities with the rejection sampling approach that has been used at various points in the LLM training pipeline (e.g.,~\citet{bai2022constitutional}).  
We defer all proofs to \Cref{appendix:allocationdetails}.

\begin{algorithm}[t!] 
\caption{Allocation Rule for MOSAIC} \label{alg:static_mechanism_improved}
\DontPrintSemicolon  %
\KwIn{User prompt $x$, reference LLM $\pi_{\text{\text{ref}}}$, LLM used for candidate reply generation $\pi_{\text{\text{gen}}}$, 
advertiser reward functions $\{ r_i \}_{i = 1}^n$, number of candidate replies to generate $M$, reference LLM weight $\tau$}
\KwOut{Reply $y$ drawn according to the optimal distribution as defined in \Cref{eq:objective} for the aggregate reward function $r(x, y) = \sum_{i = 1}^N r_i(x,y)$}

\SetKwInput{KwInput}{User Prompt $x$}                %
\SetKwInput{KwOutput}{Output}              %
\SetKwInput{KwInitialization}{Initialization} %
\SetKw{KwTo}{in}                           %
\SetKwFor{For}{for}{do}{endfor}            %
\SetKwIF{If}{ElseIf}{Else}{if}{then}{else if}{else}{endif} %

\BlankLine
Sample $y_j \sim \pi_{\text{gen}}(\cdot | x ), \; 1 \le j \le M$   \;
Calculate $r(x, y_j) = \sum_{i = 1}^N r_i(x, y_j), \; 1 \le j \le M$   \;
\Return $y \sim \text{softmax} \left 
(\frac{r(x, y_1)}{\tau} + \log \frac{\pi_{\text{ref}}(y_1 | x)}{\pi_{\text{gen}}(y_1 | x  )},  
\dots, 
\frac{r(x, y_M)}{\tau} + \log \frac{\pi_{\text{ref}}(y_M | x)}{\pi_{\text{gen}}(y_M | x  )}
\right )   $  \;
\end{algorithm}

\begin{corollary} \label{Corollary:mechanism_follows_optimal_distribution}
For any reported reward functions $r \in R$ by the advertisers and any LLM $\pi_{\text{gen}}$ such that $\pi_{\text{ref}}$ is absolutely continuous with respect to $\pi_{\text{gen}}$,
the MOSAIC policy 
${\pi_{r, M}(\cdot | x)}$ induced by \Cref{alg:static_mechanism_improved} using $M$ candidate replies converges to an optimal solution for the platform's objective (\Cref{eq:objective}) as $M \to \infty$. Formally: 
\begin{gather}
    \lim_{M \to \infty} \pi_{r, M}(\cdot | x) =  \pi^*_r( \cdot | x) \in \arg \max_{\pi \in \Delta(T^*)} \mathbb{E}_{y \sim \pi (\cdot | x)}[r(x, y )] - \tau D_{\text{KL}} (\pi || \pi_{\text{ref}})
\end{gather}
\end{corollary}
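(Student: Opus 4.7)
The core observation is that Algorithm 1 is an instance of \emph{self-normalized importance sampling} (SNIS) with proposal $\pi_{\text{gen}}$ and unnormalized target density $\tilde{\pi}(y \mid x) := \pi_{\text{ref}}(y \mid x)\exp(r(x,y)/\tau)$, whose normalizing constant is precisely the partition function $Z(x)$ in \Cref{eq:opt_distribution}. First I would simplify the softmax weight assigned to candidate $y_j$:
\begin{gather*}
\exp\!\Bigl(\tfrac{r(x,y_j)}{\tau} + \log\tfrac{\pi_{\text{ref}}(y_j \mid x)}{\pi_{\text{gen}}(y_j \mid x)}\Bigr) \;=\; w(y_j), \quad w(y) := \tfrac{\pi_{\text{ref}}(y \mid x)}{\pi_{\text{gen}}(y \mid x)}\exp\!\bigl(r(x,y)/\tau\bigr).
\end{gather*}
The absolute-continuity hypothesis ensures $w$ is $\pi_{\text{gen}}$-almost-surely well defined and that $\mathbb{E}_{y \sim \pi_{\text{gen}}}[w(y)] = Z(x)$.

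Next I would write out the mechanism's marginal distribution explicitly. For any reply $y \in T^*$,
\begin{gather*}
\pi_{r,M}(y \mid x) \;=\; \mathbb{E}_{y_1,\ldots,y_M \overset{\mathrm{iid}}{\sim}\pi_{\text{gen}}}\!\left[\frac{\sum_{j=1}^M w(y_j)\,\mathbf{1}[y_j = y]}{\sum_{j=1}^M w(y_j)}\right],
\end{gather*}
which is exactly an SNIS estimator of $\pi^*_r(y \mid x)$ applied to the indicator $\mathbf{1}[\cdot = y]$. Applying the strong law of large numbers separately to the numerator and denominator gives $\tfrac{1}{M}\sum_j w(y_j) \to Z(x)$ and $\tfrac{1}{M}\sum_j w(y_j)\mathbf{1}[y_j = y] \to \pi_{\text{ref}}(y \mid x)\exp(r(x,y)/\tau)$ almost surely. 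Since $Z(x) > 0$, the continuous mapping theorem yields almost-sure convergence of the ratio to $\pi^*_r(y \mid x)$. The ratio lies in $[0,1]$, so the bounded convergence theorem lifts this to $\pi_{r,M}(y \mid x) \to \pi^*_r(y \mid x)$. The final identification of $\pi^*_r$ with a maximizer of the RLHF-style objective \Cref{eq:objective} is the classical fact attributed to \citet{peters2007}.

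The main obstacles are technical rather than conceptual. First, one needs $Z(x) < \infty$ for the SLLN to apply to the denominator; under bounded rewards (natural in this application) this is immediate, otherwise a mild integrability condition $\mathbb{E}_{\pi_{\text{gen}}}[w] < \infty$ must be assumed, and I would state it explicitly. Second, the corollary phrases the limit in terms of the \emph{distribution} $\pi_{r,M}(\cdot \mid x)$, not a single coordinate; since both $\pi_{r,M}(\cdot \mid x)$ and $\pi^*_r(\cdot \mid x)$ are probability measures on the countable set $T^*$, the pointwise convergence established above combined with Scheff\'e's lemma upgrades the statement to convergence in total variation, which is the natural topology for the corollary's conclusion.
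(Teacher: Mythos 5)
Your proposal is correct and follows essentially the same route as the paper's own proof (Theorem A.1 in Appendix A.1): both simplify the softmax weights to the importance ratio $w(y)=\tfrac{\pi_{\text{ref}}(y\mid x)}{\pi_{\text{gen}}(y\mid x)}\exp(r(x,y)/\tau)$, write $\pi_{r,M}(y\mid x)$ as the expectation of a self-normalized ratio, apply the law of large numbers to numerator and denominator, and then identify the limit with the closed-form maximizer via \citet{peters2007}/\citet{rafailov2023dpo}. If anything you are slightly more careful than the paper --- explicitly invoking bounded convergence to justify exchanging the limit with the outer expectation, flagging the integrability of $w$, and using Scheff\'e's lemma to upgrade pointwise convergence to total variation --- but these are refinements of the same argument, not a different one.
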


Based on \Cref{Corollary:mechanism_follows_optimal_distribution}, MOSAIC converges to the optimal distribution for \textit{any} LLM $\pi_{\text{gen}}$, provided that $\pi_{\text{ref}}$ is absolutely continuous with respect to $\pi_{\text{gen}}$. While $\pi_{\text{gen}} = \pi_{\text{ref}}$ is an intuitive choice, it would result in impractically slow convergence rates.
The reason is that $\pi_{\text{ref}}$ may only generate replies with high rewards for the advertisers exceedingly rarely.
Advertisers have high rewards for responses that explicitly mention their brands, 
but $\pi_{\text{ref}}$ considers replies with mentions of specific brands extremely unlikely. 
Consequently, generating candidate replies directly from $\pi_{\text{ref}}$ leads \Cref{alg:static_mechanism_improved} to sample from a set of low-reward candidates, hindering performance.

To address this challenge, we generate candidate sequences not from \(\pi_{\text{ref}}(\cdot | x)\), but instead from a \textit{context-aware LLM}, \(\pi_{\text{con}}(\cdot | x; c)\). The instance-specific context \(c\) is designed to bridge the gap between the parts of the output space favored by $\pi_{\text{ref}}$ and those valued by the advertisers. 

We formalize this intuition in \Cref{app:importancesampling}, where we make a connection between MOSAIC's allocation rule and an importance-based sampling estimator for $\pi^*_{r}$, proving the following lemma:
\begin{restatable}{lemma}{estimatorvariance} \label{lemma:estimator_variance_main}
For any LLM $\pi_{\text{gen}}$ such that $\pi_{\text{ref}}$ is absolutely continuous with respect to $\pi_{\text{gen}}$,
the variance of the policy $ \pi_{r, M}(\cdot | x)$ as an estimator for 
$\mathbb{E}_{y \sim \pi^*_r(\cdot |  x )} [\pi^*_{r}(y|  x )]$ is
\begin{align}
    \text{Var}(\widehat{\mu}_{\text{IS}}) & = 
    \frac{1}{M}
    \left ( \sum_{y \in Y} \frac{\pi^*_{r} (y | x)^4}{\pi_{\text{gen}}(y | x)}
    - \left ( \sum_{y \in Y} \pi^*_{r} (y | x)^2
    \right )^2 
    \right ). \label{eq:variance_final_statement_main}
\end{align}
\end{restatable}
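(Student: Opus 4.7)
The plan is to recognize the stated formula as the standard variance of a Monte Carlo importance sampling (IS) estimator, where samples are drawn from $\pi_{\text{gen}}$ and reweighted to target the distribution $\pi^*_r$. Since the expectation being estimated is $\mu = \mathbb{E}_{y \sim \pi^*_r(\cdot|x)}[\pi^*_r(y|x)] = \sum_{y \in Y} \pi^*_r(y|x)^2$, the natural IS estimator that matches the reweighting scheme inside the softmax of \Cref{alg:static_mechanism_improved} is
\begin{equation*}
\widehat{\mu}_{\text{IS}} \;=\; \frac{1}{M}\sum_{j=1}^{M} \frac{\pi^*_r(y_j \mid x)^2}{\pi_{\text{gen}}(y_j \mid x)}, \qquad y_1,\dots,y_M \stackrel{\text{iid}}{\sim} \pi_{\text{gen}}(\cdot \mid x).
\end{equation*}
Absolute continuity of $\pi_{\text{ref}}$ (and hence, via \Cref{eq:opt_distribution}, of $\pi^*_r$) with respect to $\pi_{\text{gen}}$ ensures the summand is well-defined on the support of $\pi_{\text{gen}}$.

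Next I would verify that this estimator is the one implicit in the allocation rule. Substituting the closed form $\pi^*_r(y\mid x) \propto \pi_{\text{ref}}(y\mid x)\exp(r(x,y)/\tau)$ into the softmax argument shows that, up to the partition function $Z(x)$, each softmax logit equals $\log\bigl(\pi^*_r(y_j\mid x)/\pi_{\text{gen}}(y_j\mid x)\bigr)$; the denominator of the softmax is then exactly $M\cdot\widehat{\mu}_{\text{IS}}$ divided by an $r$-independent constant, so the same object governs the induced policy $\pi_{r,M}(\cdot\mid x)$. This identification is the step I expect to require the most care, since it hinges on correctly bookkeeping the $Z(x)$ factors and tying the normalization of the softmax to the IS reweighting.

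Once the estimator is identified, the variance is routine. Because the $y_j$ are i.i.d.,
\begin{equation*}
\text{Var}(\widehat{\mu}_{\text{IS}}) \;=\; \frac{1}{M}\,\text{Var}_{y \sim \pi_{\text{gen}}}\!\left[\frac{\pi^*_r(y\mid x)^2}{\pi_{\text{gen}}(y\mid x)}\right] \;=\; \frac{1}{M}\Bigl(\mathbb{E}[X^2] - (\mathbb{E}[X])^2\Bigr),
\end{equation*}
with $X = \pi^*_r(y\mid x)^2/\pi_{\text{gen}}(y\mid x)$. Direct computation gives $\mathbb{E}[X] = \sum_y \pi^*_r(y\mid x)^2$ and $\mathbb{E}[X^2] = \sum_y \pi^*_r(y\mid x)^4/\pi_{\text{gen}}(y\mid x)$, yielding precisely \Cref{eq:variance_final_statement_main}.

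The main obstacle, as noted, is making the correspondence between the softmax-based sampling scheme and the IS estimator of $\mu$ fully rigorous; the remaining variance calculation is a one-line application of $\text{Var}(X) = \mathbb{E}[X^2] - \mathbb{E}[X]^2$ and i.i.d. averaging. A minor secondary check is that the interchange of summation and expectation is justified, which follows immediately from the countable support and the absolute continuity hypothesis.
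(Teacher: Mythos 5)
Your proposal is correct and follows essentially the same route as the paper: the paper's proof likewise defines $\widehat{\mu}_{\text{IS}} = \frac{1}{M}\sum_{j=1}^{M} \pi^*_r(y_j\mid x)^2/\pi_{\text{gen}}(y_j\mid x)$ for i.i.d.\ draws from the proposal distribution, verifies unbiasedness, and computes the variance (the paper expands $\mathbb{E}[\widehat{\mu}_{\text{IS}}^2]$ into diagonal and cross terms explicitly, which is just the long form of your one-line i.i.d.\ variance identity). The softmax-to-IS correspondence you flag as the delicate step is not actually carried out in the paper's proof of this lemma either; it is only asserted in the surrounding discussion, so your computation matches what the paper proves.
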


Applying Chebyshev's inequality to the variance of \Cref{lemma:estimator_variance_main}, MOSAIC's output converges to that of $\pi_r^*$ at a rate of $\sqrt{M}$ (\Cref{lemma:estimator_convergence_rate}).\footnote{In \Cref{sec:app:variation_distance} we establish a more general result,  
showing that for any sequence $y$, its estimated probability converges to the corresponding probability under $\pi_r^*$ at a rate of $1 / \sqrt{M}$. }
However, this convergence rate depends inversely on $ \frac{\pi^*_{r} (y | x)^4}{\pi_{\text{gen}}(y | x)} $. Thus, generating the replies via an LLM $ \pi_{\text{con}}(\cdot | x; c) $ that more closely matches $\pi^*_{r}(\cdot | x)$   reduces the estimator's variance and improves convergence speed.

In our application of integrating advertisers' interests into LLM outputs, $c_i$ is a context-specific description of the $i$-th advertiser. 
These descriptions, supplied by the advertisers themselves, should be easily verifiable and factually accurate, akin to ``MusicMastery: offering online music lessons'', or ``InstaTune: selling musical instruments.''\footnote{A practical way of implementing $\pi_{\text{con}}(\cdot | x ; c)$ given $\pi_{\text{ref}}(\cdot | x)$ is to 
augment the input $x$ to the reference LLM with the advertiser descriptions.
In our example: ``Try to mention \textlangle advertiser x\textrangle, \textlangle advertiser x description\textrangle.'
\label{footnote:context_aware_llm}
} 
This approach is analogous to search engine optimization in sponsored search advertising, where advertisers supply and potentially optimize their own descriptions to influence how they are presented by the auction mechanism.

Our experiments in \Cref{subsection:efficiency_results} demonstrate that, when using reasonable computational resources, using the context-aware LLM to generate candidate replies achieves substantially higher rewards and utility for the advertisers, increased revenue for the auctioneer, and faster convergence.
In the rest of the paper,
we refer to using the reference and context-aware LLMs  as the baseline and context-aware versions of our mechanisms, respectively.

\subsection{Practical Considerations and Computational Efficiency} \label{section:practical_considerations}

\paragraph{Input Methods and Black Box Access to Advertiser LLMs.}
MOSAIC's allocation and payment rules do not depend on the  advertisers' full reward functions, but only on their rewards for the candidate replies. 
Thus, MOSAIC requires only ``API access'' to the involved LLMs without fine-tuning or access to their weights. 
\citet{rafailov2023dpo} established a mapping between an agent's LLM and her implicit reward function, allowing MOSAIC to use as inputs reply probabilities (i.e., LLM inference calls) instead of rewards. For more details, see \Cref{appendix:full_differences}.

\paragraph{Computational Costs.}
Generating a single candidate reply of length \( L \) tokens requires \( L \) forward passes through \( \pi_{\text{gen}} \), as each token is generated sequentially in an autoregressive manner. 
Evaluating each reply involves processing the entire sequence in a single forward pass, as the sequence is already known. 
Thus, with \( M \) candidate replies and \( n \) advertisers (plus the reference LLM for evaluation), the total number of forward passes required by MOSAIC is 
\( M \cdot (L + n + 1) \). 
This underscores the importance of approximating the optimal distribution with relatively few candidate replies.
By comparison, in the only other mechanism for aggregating multiple advertiser LLMs \citep{duetting2023mechanism}, 
all \( n \) advertiser LLMs participate in generating each token, leading to a total of \( L \cdot n \) forward passes. 
\Cref{section:experiments} demonstrates that MOSAIC converges to the optimal distribution using \( M = 20 \) candidate replies in realistic scenarios.
Additionally, \Cref{sec:app:compute_experiments} provides a detailed experimental evaluation showing that MOSAIC requires the compute time of only five LLM queries to converge.

\paragraph{Reward Elicitation and Computational Efficiency.} MOSAIC can be integrated with computationally efficient methods for eliciting the advertisers' reward functions and sampling from $\pi_{\text{con}}$, reducing computational overhead. For instance, \citet{li2024qprobe} demonstrated that a simple linear function on a model's embedding space can approximate an LLM's reward function. 
If advertisers' reward functions are represented this way, MOSAIC would require only $M \cdot L$ forward passes.

\paragraph{{Compatibility with More Efficient Generation Methods.}}
MOSAIC relies solely on LLM forward passes—the core operation that modern architectures are highly optimized to perform. This reliance on standard operations allows MOSAIC to take full advantage of existing optimizations designed for efficient generation. 
As a result, MOSAIC converges to the optimal distribution 
with compute costs equivalent to just five times that of a single LLM query (\Cref{sec:app:compute_experiments}).

\paragraph{Parallelization.}
The generation and evaluation of each candidate sequence are independent processes. This independence allows MOSAIC to be fully parallelized, in which case the response time for a user query is comparable to that of directly querying a \emph{single} LLM.

\paragraph{Inherent Competition.}
 In conventional auctions, competition among bidders is crucial for driving up prices; when the market is thin, prices tend to be low. Auctioneers often use reserve prices to mitigate this problem. However, reserve prices are challenging to calibrate and can lead to lost revenue if not optimized properly. In our design, advertisers inherently compete with the reference LLM, ensuring consistent competition even in thin markets. This built-in competition mitigates the need for reserve prices and ensures a baseline revenue even with only one advertiser.

While MOSAIC introduces a \emph{single} tunable parameter, $\tau$, its behavior differs significantly from reserve prices in conventional auctions. 
A slightly mis-calibrated reserve price can result in a total loss of revenue by excluding all bidders. In contrast, MOSAIC's revenue is a smooth and differentiable function of $\tau$ (see the proof of \Cref{theorem:truthful_payments}), making it more robust to the choice of $\tau$.

{\paragraph{Support for Output Constraints.}
The  candidate replies are generated by $\pi_{\text{gen}}$, which is controlled by the platform. 
Thus, MOSAIC allows the platform to incorporate any desired constraints  on the output replies such as maximum length by adding those constraints to the generation procedure.}

\section{The MOSAIC Mechanism: Payment Rule} \label{section:payments}
In this section, we first show how the allocation rule from \Cref{section:allocation_rule} can be combined with an appropriate payment rule so that the resulting mechanism is strategyproof (\Cref{subsection:payments_through_convexity}). 
Then, we detail how auctions for LLM-generated content differ from standard auctions (\Cref{subsec:differences}). 
Taking those differences into account, we create a payment offset,
so the mechanism is both strategyproof and social welfare aligned (\Cref{section:bidding zero offset}).
We defer all proofs to \Cref{appendix:section:payment_details}.

\subsection{Strategyproof Payments through Cyclic Monotonicity} \label{subsection:payments_through_convexity}

The allocation rule we introduced in \Cref{section:allocation_rule} satisfies \textit{cyclic monotonicity}~\citep[\S~24]{Rockafellar1970}.
 \citet{rochet1987} first proposed the use of cyclic monotonicity in mechanism design as a generalization of the monotonicity property in single-parameter settings \citep{Myerson81Optimal}.
For general settings, cyclic monotonicity of the allocation rule is a sufficient and necessary condition for the existence of a payment rule so that the resulting mechanism is strategyproof. 
By applying these results we can prove:

\begin{theorem} \label{theorem:truthful_payments}

The allocation rule of \Cref{alg:static_mechanism_improved} can be combined with a payment rule  such that for any advertiser $i \in N$ and 
set of candidate replies $\{y_j\}_{j=1}^M$, reporting truthfully is a dominant strategy. 
Advertiser $i$'s expected utility (up to a constant of integration $C$) under truthful reporting is:

\begin{align}
    {\widetilde{U}_{i}}(r_i, \widehat{r}_{-i} ; \pi_{ref}, \pi_{gen} ) 
    & = 
    \tau \log \left ( 
        \sum_{j = 1}^M \exp  \left (  \frac{\sum_{k \in N \setminus \{ i \}} \widehat{r}_k(x, y_j) +  r_i(x,y_j) }{\tau} 
        \right ) 
        + \log \frac{\pi_{\text{ref}}(y_j | x)}{\pi_{\text{gen}}(y_j | x )}
        \right ) 
         + C.
    \label{equation:utility_myerson_spelled_out}
\end{align}
\end{theorem}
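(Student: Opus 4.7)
The plan is to exploit the softmax structure of MOSAIC's allocation rule, which makes it the gradient of a convex potential, and then apply the standard Rochet construction. Fix advertiser $i$, opponents' reports $\widehat{r}_{-i}$, the query $x$, and the candidate set $\{y_j\}_{j=1}^M$, and view advertiser $i$'s report as the vector $v \in \mathbb{R}^M$ with $v_j = \widehat{r}_i(x, y_j)$. Define
\[
\Phi(v) \;:=\; \tau \log \sum_{j=1}^M \exp\!\left(\frac{\sum_{k \in N \setminus \{i\}} \widehat{r}_k(x, y_j) + v_j}{\tau} + \log \frac{\pi_{\text{ref}}(y_j \mid x)}{\pi_{\text{gen}}(y_j \mid x)}\right).
\]
As a log-sum-exp of affine functions of $v$, $\Phi$ is convex and smooth, and a direct differentiation shows that $\partial \Phi / \partial v_j$ equals exactly the softmax probability with which \Cref{alg:static_mechanism_improved} returns $y_j$ when $i$ reports $v$. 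Thus $\nabla \Phi$ is the MOSAIC allocation probability vector $\pi(\widehat{r}_i, \widehat{r}_{-i})$, viewed as a function of the $M$-dimensional report vector.

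Given this, I would set the payment to
\[
p_i(\widehat{r}_i, \widehat{r}_{-i}) \;:=\; \sum_{j=1}^M \pi_j(\widehat{r}_i, \widehat{r}_{-i})\,\widehat{r}_i(x, y_j) \;-\; \Phi(\widehat{r}_i) \;-\; C,
\]
where $C$ may depend on $\widehat{r}_{-i}$ and $\{y_j\}$ but not on $\widehat{r}_i$. The expected utility of an advertiser with true reward vector $r_i$ who reports $\widehat{r}_i$ then telescopes, using $\pi(\widehat{r}_i, \widehat{r}_{-i}) = \nabla \Phi(\widehat{r}_i)$, into the first-order expansion of $\Phi$ at $\widehat{r}_i$ evaluated at $r_i$:
\[
U_i(\widehat{r}_i) \;=\; \nabla \Phi(\widehat{r}_i) \cdot (r_i - \widehat{r}_i) + \Phi(\widehat{r}_i) + C.
\]
By the supporting-hyperplane inequality for the convex function $\Phi$ we have $U_i(\widehat{r}_i) \leq \Phi(r_i) + C$ with equality at $\widehat{r}_i = r_i$, so truthful reporting is a dominant strategy and the utility at truth is exactly $\Phi(r_i) + C$, which is \eqref{equation:utility_myerson_spelled_out}.

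The identity $\pi = \nabla \Phi$ together with convexity of $\Phi$ is exactly the cyclic monotonicity condition invoked in the preamble, so the payment above is simply Rochet's canonical integral, with $C$ the integration constant. I expect the only real subtlety to be notational rather than mathematical: an advertiser's type is a reward \emph{function} on all sequences, but both the allocation probabilities and her realized expected reward depend on $\widehat{r}_i$ only through the $M$ values $\{\widehat{r}_i(x, y_j)\}_j$, so the infinite-dimensional type space collapses to the $M$-dimensional $v$ above with no loss of incentive-relevant information, and one must verify that deviations in the unobserved coordinates of $\widehat{r}_i$ genuinely cannot change the allocation or payment (they cannot, by inspection of \Cref{alg:static_mechanism_improved} and the payment formula).
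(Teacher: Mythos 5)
Your proposal is correct and follows essentially the same route as the paper: identify the log-sum-exp potential whose gradient is the softmax allocation of \Cref{alg:static_mechanism_improved}, invoke its convexity, and define the payment as the Rochet/quasi-linear residual $\pi\cdot\widehat{r}_i-\Phi(\widehat{r}_i)-C$. Your explicit supporting-hyperplane step makes the dominant-strategy claim slightly more self-contained than the paper's appeal to the general cyclic-monotonicity theory, and your parenthesization correctly places the $\log\frac{\pi_{\text{ref}}}{\pi_{\text{gen}}}$ term inside the exponent, consistent with the allocation rule and the paper's own appendix derivation.
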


Note that, based on \Cref{theorem:truthful_payments}, in our mechanism it is \textit{always} a dominant strategy for an advertiser to report truthfully. 
Crucially, this is not the case for VCG, where truthful reporting would be optimal only if the allocation rule had converged to the optimal distribution as defined in \Cref{eq:opt_distribution}.

\subsection{Differences from Auction Mechanisms for Item Allocation} \label{subsec:differences}
Auction mechanisms designed to sell items or ad slots typically rely on simplifying assumptions that do not apply in a setting with LLM-generated content.
Key differences include:
(i) \textit{Non-Negative Values:} These mechanisms assume agents' values are non-negative due to having zero value for the empty bundle and free disposal. In contrast, in our setting, an advertiser's reward can be negative based on the discrepancy between her LLM and the reference LLM.
(ii) \textit{Advertiser-Specific Allocations:} These mechanisms allocate different item bundles to different agents. Here, a single reply is produced, and advertisers' rewards depend on that reply's probability  with respect to their LLMs.
(iii) \textit{Zero Utility for Non-Participation:} In most auction settings, not participating yields zero utility. 
Here, non-participation can result in negative utility since the produced reply may be unfavorable to non-participating advertisers.
For details, see Appendix \ref{appendix:full_differences}.

\subsection{advertiser-Specific Utility Offset} \label{section:bidding zero offset}
We now modify MOSAIC's payment rule by adding a \textit{payment offset} (and thus a utility offset):  
    $C = - \widetilde{U}_{i}({0}, \widehat{r}_{-i} ; \pi_{ref}, \pi_{con} ) $.
Informally, we additionally
charge each advertiser her utility in \Cref{equation:utility_myerson_spelled_out}
if her reward for all candidate replies was zero.
This offset maintains the key properties of our mechanism (i.e., strategyproofness and convergence to the optimal distribution), while also achieving two additional properties that are critical for the long-term success of a market for LLM aggregation. The first is \textit{``Almost IR:''} An advertiser with weakly positive reward for all candidate replies has weakly positive expected utility for all reports by the other advertisers. In particular, an advertiser with zero reward for all outputs is guaranteed zero utility. 
The second is \textit{``What you give is what you get:''} an advertiser's ex-interim expected utility is monotone in how well-aligned her exponentiated reward for the replies is with the interim allocation rule if she were to not participate.

\subsubsection{Our Mechanism is ``Almost Individually Rational''} \label{subsec:almost_ir}
Individual rationality is important to incentivize agents to participate in the mechanism. 
In \Cref{app:almost_ir}, we explain why the standard notion of individual rationality (i.e., weakly positive utility from participation in the mechanism) encountered in standard auction settings is impossible to achieve in our setting while converging to the optimal distribution and maintaining 
strategyproofness.
Then, we explain how, with our payment offset, our mechanism is ``almost IR:''
In \Cref{lemma:zero_bid_offset_ex_interim_utility_zero} we prove that the ex-interim utility of an advertiser who has zero reward for all candidate replies and bids truthfully is deterministically zero, i.e.,  advertisers that do not contribute to the social welfare (but also do not detract from it) have zero utility. 
Similarly, in \Cref{lemma:zero_bid_offset_ex_interim_utility_positive} we prove that 
if an advertiser's reward for all candidate sequences is (weakly) positive, then her ex-interim utility is (weakly) positive.

\begin{remark}
In \Cref{section:experiments}, we experimentally show that our offset payment rule, coupled with our context-aware allocation rule, results in both high expected rewards and positive expected utility for the advertisers (i.e., ex-ante individual rationality), as well as significant revenue for the auctioneer.
\end{remark}

\subsubsection{``What you give is what you get''} \label{section:what_you_give_is_what_you_get}

Our allocation rule, which is the only one over a finite set of replies that converges to the optimal LLM, is also the (sub)gradient of the utility to ensure truthfulness \citep{rochet1987}. 
Because the allocation rule is the same for all advertisers, their utilities must  also be the same, up to advertiser-specific offsets, as indicated by \Cref{equation:utility_myerson_spelled_out}. However, not all advertisers contribute equally to social welfare.
In \Cref{app:market_unraveling}, we explain why implementing the mechanism without carefully designed offsets would lead to free-riding. Specifically, if an advertiser's utility in \Cref{equation:utility_myerson_spelled_out} is positive, even when her relevance to the user query is low, it incentivizes her participation. This dilutes the utility for all participants, discouraging more relevant advertisers from participating. As a result, the expected rewards for advertisers decline, and the generated replies become less useful for users.
To avoid free-riding, it is crucial to align each advertiser’s utility with her contribution to social welfare.

\begin{lemma} \label{lemma:ex_interim_utility_alignment}
For the offset 
{ $C = - \widetilde{U}_{i}(0, \widehat{r}_{-i} ; \pi_{ref}, \pi_{con} ) $
 advertiser $i$'s expected utility under truthful reporting (which is her dominant strategy), with the expectation taken over the returned reply, is:} 

{
\begin{gather}
   U_i(r_i, \widehat{r}_{-i} ; \pi_{ref}, \pi_{con})  =  \tau \log \left ( 
\sum_{j = 1}^M \exp \left ( \frac{r_i(x, y_j)}{\tau} \right ) 
\pi_{\widehat{r}_{-i} } (y_j | x ) \right ). 
\label{eq:expected_utility_static_mech_zero_bid_offset_2}
\end{gather}}
\end{lemma}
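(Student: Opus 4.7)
The plan is to substitute the specified offset $C = -\widetilde{U}_i(0, \widehat{r}_{-i}; \pi_{\text{ref}}, \pi_{\text{con}})$ into the closed-form expected utility from \Cref{theorem:truthful_payments} and simplify by elementary $\log$/$\exp$ manipulations. Because the sum is over the finite set of $M$ candidate replies, no convergence or measure-theoretic issue arises --- the argument is an exact identity rather than a limit.

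First I would introduce the abbreviation
$$w_j^{(0)} \;=\; \frac{1}{\tau}\sum_{k \neq i}\widehat{r}_k(x, y_j) \;+\; \log \frac{\pi_{\text{ref}}(y_j | x)}{\pi_{\text{con}}(y_j | x)},$$
so that \Cref{equation:utility_myerson_spelled_out}, applied with $\pi_{\text{gen}} = \pi_{\text{con}}$, reads $\widetilde{U}_i(r_i, \widehat{r}_{-i}) = \tau \log \sum_{j=1}^M \exp(r_i(x, y_j)/\tau)\exp(w_j^{(0)}) + C$, while setting $r_i \equiv 0$ yields $\widetilde{U}_i(0, \widehat{r}_{-i}) = \tau \log \sum_{j=1}^M \exp(w_j^{(0)}) + C$ with the \emph{same} integration constant $C$. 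Forming $U_i = \widetilde{U}_i(r_i, \widehat{r}_{-i}) - \widetilde{U}_i(0, \widehat{r}_{-i})$ cancels $C$ and, after combining the two logarithms into a log-ratio, produces
$$U_i \;=\; \tau \log \sum_{j=1}^M \exp\!\left(\frac{r_i(x, y_j)}{\tau}\right) \cdot \frac{\exp(w_j^{(0)})}{\sum_{k=1}^M \exp(w_k^{(0)})}.$$

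The only conceptual step --- and the reason the offset is defined in this particular way --- is to recognize that $\exp(w_j^{(0)})/\sum_k \exp(w_k^{(0)})$ is, by inspection, exactly the softmax output in Line 3 of \Cref{alg:static_mechanism_improved} when the reports are $\widehat{r}_{-i}$ and advertiser $i$ is absent from the aggregate reward. By the definition of MOSAIC's allocation rule this quantity equals $\pi_{\widehat{r}_{-i}}(y_j | x)$, and substitution yields \Cref{eq:expected_utility_static_mech_zero_bid_offset_2}. I do not expect any genuine obstacle beyond carefully tracking which terms depend on $i$'s own report and which do not --- the offset is engineered precisely to strip out the log-partition function of the coalition $N \setminus \{i\}$, so that what remains is the log-expectation of $\exp(r_i/\tau)$ against the allocation rule that would have been played in $i$'s absence.
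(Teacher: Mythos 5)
Your proposal is correct and is exactly the derivation the paper intends: the paper defers this lemma to the appendix but never spells it out, leaving it as the immediate consequence of substituting the offset into the Rochet utility of \Cref{theorem:truthful_payments} (with the $\log\frac{\pi_{\text{ref}}}{\pi_{\text{con}}}$ term inside the exponent, as in the appendix proof rather than the main-text typo) and recognizing the resulting softmax weights as the interim allocation $\pi_{\widehat{r}_{-i}}(y_j \mid x)$ over the fixed candidate set. Your cancellation of the integration constant and the identification of $\exp(w_j^{(0)})/\sum_k \exp(w_k^{(0)})$ with Line 3 of \Cref{alg:static_mechanism_improved} in advertiser $i$'s absence are precisely the two steps needed, so there is nothing to add.
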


In words, \Cref{lemma:ex_interim_utility_alignment} proves that every advertiser's expected utility is monotone in how well aligned her exponentiated reward function is with MOSAIC's induced LLM $\pi_{\widehat{r}_{-i} }$, had she not participated in the mechanism.

\begin{remark} \label{remark:what_you_give_is_what_you_get}
In \Cref{subsection:fairness_results}, we experimentally show that the offset described in \Cref{section:bidding zero offset} creates a strong positive correlation between an advertiser’s contribution to social welfare and her expected utility gain from participation in the mechanism. Moreover, our experiments show that for the distribution of instances tested, MOSAIC satisfies ex-ante IR. \end{remark}

\section{Experiments} \label{section:experiments}
In this section, we experimentally evaluate the performance of our mechanism. 
We focus on the online advertising domain, the main motivation application for our mechanism.

\subsection{Experiment Setup} \label{sec:experiment_setup}
We create synthetic instances, each consisting of a user query (e.g., ``How to learn a musical instrument online?'') and {two} advertisers  (e.g., ``MusicMastery, offering online music lessons'').

This is consistent with the experimental setup in \citet{duetting2023mechanism} while also highlighting MOSAIC'S performance and revenue even in scenarios with low competition. 
In \Cref{sec:app:experiments_many_advertisers}, we extend our results to settings with more advertisers. 
Crucially, MOSAIC demonstrates similarly rapid convergence to the optimal distribution in these settings, requiring the same number of candidate replies as in the two-advertiser case, while also generating significant value for the advertisers and revenue for the platform.

All synthetic instances are provided in \Cref{sec:all_prompts}.
We use Llama-2-7b-chat-hf as the reference LLM \citep{touvron2023llama}.\footnote{
In \Cref{subsec:testing_with_other_models}, we replicate these experiments using Google’s flan-t5-large model \citep{flan}, observing qualitatively very similar results.}
Following \citet{duetting2023mechanism},  we create the advertisers' LLMs by adding advertising instructions to the reference LLM.
The context-aware LLM is created as described in \Cref{footnote:context_aware_llm}.

Following \citet{rafailov2023dpo}, the advertisers' reward functions are defined as $r_i(x, y) = \log \frac{\pi_i (y | x)}{\pi_{\text{ref}} (y | x)}$. For the auctioneer's objective, we set $\tau = 1$ in \Cref{eq:objective},  balancing advertisers' rewards and divergence from the reference LLM.

We use 50 user queries
and test each query on 25 different random seeds, resulting in 1,250 instances. 
Following \citet{li2024qprobe,rozière2024codellama}, we sample from all LLMs using a temperature of 0.8 and top-p 0.95.
For full experimental details, see Appendix \ref{appendix:full_experiment_setup}.

\subsection{Experimental Results} \label{section:results}
\subsubsection{Evaluating the Allocation Rule} \label{subsection:efficiency_results}

\begin{table}[t!] 
\vskip -0.25cm 
\centering
\begin{tabularx}{\textwidth}{X}
\toprule
\textbf{User query} \\
\midrule
What are effective ways to learn a musical instrument online? \\
\midrule 
\textbf{Advertisers} \\
\midrule
- MusicMastery: offering online music lessons and virtual tutoring

- InstaTune: selling musical instruments and learning aids
\\ 
\midrule
\textbf{Context-Aware Mechanism Output} \\
\midrule
Effective ways to learn a musical instrument online include:

1. Online music lessons: Platforms like \textbf{MusicMastery} offer one-on-one virtual tutoring, group classes, and pre-recorded lessons. These lessons can be tailored to your skill level and learning style.

2. Interactive learning tools: \textbf{InstaTune} offers a range of interactive learning tools, such as apps, software, and games, that make learning fun and engaging.\dots \\ 

\midrule
\textbf{Baseline Mechanism Output} \\
\midrule
- Practice consistently and regularly: Consistent practice is essential to learning any musical instrument, and it's even more important when learning online. Set aside a specific time each day or week to practice, and make sure to stick to it.

- Find a good teacher or mentor: While it's possible to learn some things on your own, working with a good teacher or mentor can help you improve faster and avoid developing bad habits. \dots \\ 
\bottomrule
\end{tabularx}
\caption{Comparison of context-aware and baseline mechanism outputs using 10 candidate sequences. Advertisers highlighted in bold, outputs truncated to conserve space.}
\label{tab:mechanism_output_comparison2}
\vskip -0.45cm 
\end{table}

In this section, we evaluate the effectiveness of our  allocation rule in generating value for advertisers and converging to the optimal distribution. We assess both the baseline and context-aware versions of MOSAIC, i.e., using the reference and context-aware LLMs to generate candidate replies.

To provide some intuition on how the context-aware mechanism enhances the relevance of responses for advertisers, we compare outputs from both mechanisms in \Cref{tab:mechanism_output_comparison2}. Notably, only the context-aware mechanism successfully incorporates advertisers into the replies.

In \Cref{subfig:sequence_log_probability}, we plot the log probability of the replies returned by MOSAIC with respect to the optimal distribution (\Cref{eq:opt_distribution}) 
{against the number $M$ of candidate replies generated.}
We compare the context-aware version of MOSAIC to the baseline version to evaluate the effectiveness of incorporating contextual information.\footnote{
There are no other mechanisms in the literature for aggregating multiple LLMs to compare against. 
The only other mechanism that aggregates multiple LLMs is \citet{duetting2023mechanism}, which is not strategy-proof and does not converge to the platform's objective. The only other potential comparison is VCG, which, as explained in \Cref{section:vcg_problems}, is not applicable since it does not align with the platform’s objective. Hence, the best baseline for comparison is naive MOSAIC, which we consider a strong benchmark given that it provably converges to the platform’s optimal distribution based on \Cref{Corollary:mechanism_follows_optimal_distribution}.
}
 
To benchmark our results, we estimate the log probability of sampling a \textit{single reply} from the optimal distribution.\footnote{Note that the closed-form solution of \Cref{eq:opt_distribution} allows us to evaluate the probability of sentences with respect to the optimal solution, but it does not enable us to sample from that distribution. Sampling from that distribution would require using reinforcement learning to train the optimal LLM on the  advertisers' aggregate reward function, which is computationally infeasible for the number of problem instances we test. Instead, we generate replies from the reference LLM and evaluate them based on the induced probabilities of the reference LLM, for which the reference LLM is the optimal one. This serves as a proxy for the log probabilities we should expect if we were to draw replies from the optimally fine-tuned model for each query.}

We observe that for both the context-aware and baseline versions of MOSAIC, the log probability of the returned reply scales rapidly with the number of candidate replies generated. 
This is in line with our theoretical analysis in \Cref{section:allocation_rule}, where we proved that both versions converge to the optimal distribution. 
We also observe that incorporating context into the mechanism significantly accelerates convergence.
Notably, the context-aware mechanism can achieve higher log probability with respect to the optimal distribution with four candidate replies than the baseline version can with $20$. 
Additionally, with only $20$ generated candidate replies, the context-aware mechanism can almost match our estimate of the log probability of sampling from the optimal distribution. 
In \Cref{sec:usefulness_to_the_user}, we further demonstrate that for the context-aware mechanism, the usefulness to the user also scales rapidly with the number of candidate replies.

\begin{figure}
    \centering
    \begin{subfigure}[b]{0.45\textwidth} %
        \centering
        \includegraphics[scale = 0.2]{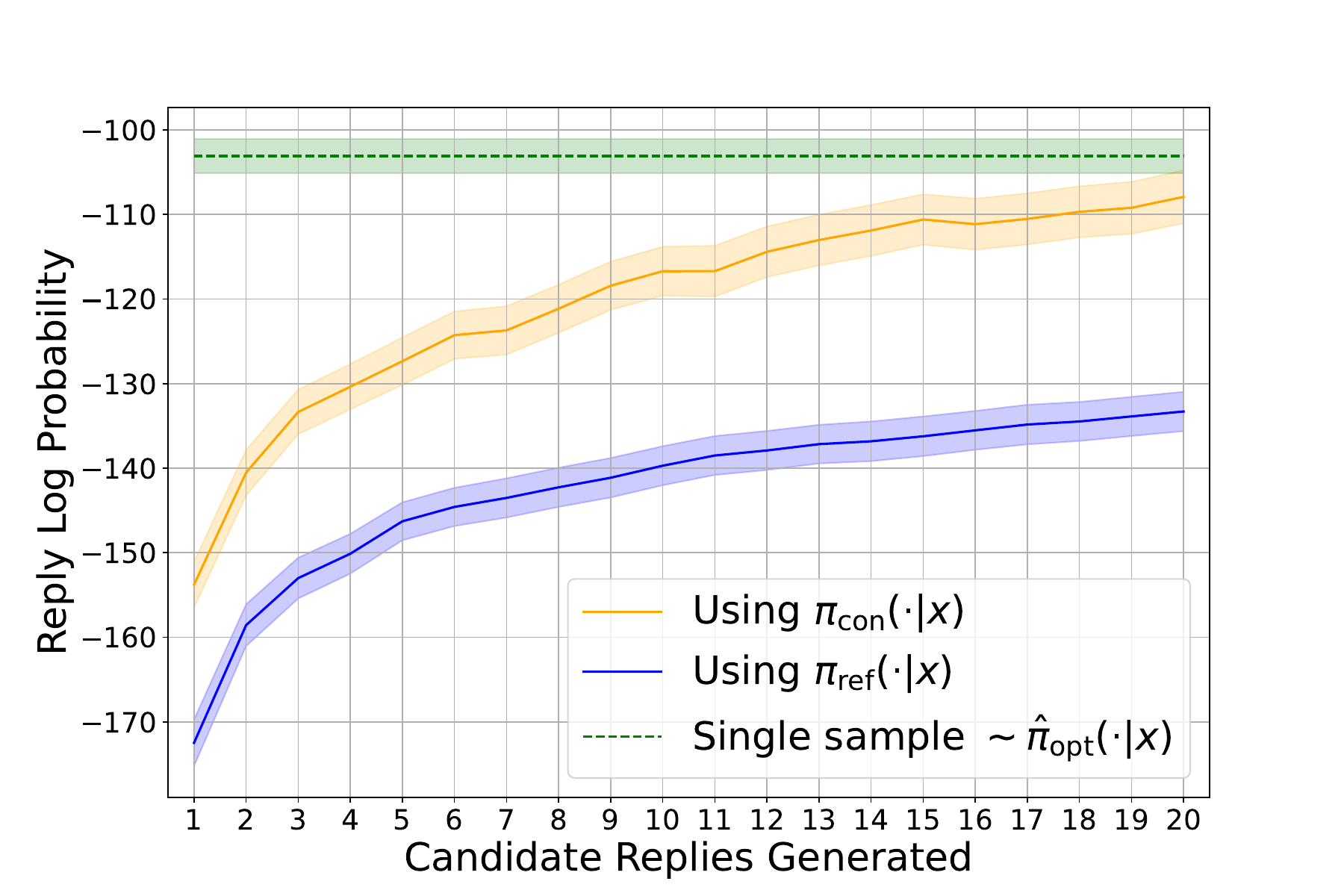}
        \caption{Returned sequence log probability as a function of the number of candidate replies generated using 
        $\pi_{\text{ref}}$ and $\pi_{\text{con}}$.
        We also show a proxy of drawing a single sequence from the optimal distribution.    
        }
        \label{subfig:sequence_log_probability}
    \end{subfigure}
    \hfill %
    \begin{subfigure}[b]{0.45\textwidth} %
        \centering
        \includegraphics[scale=0.2]{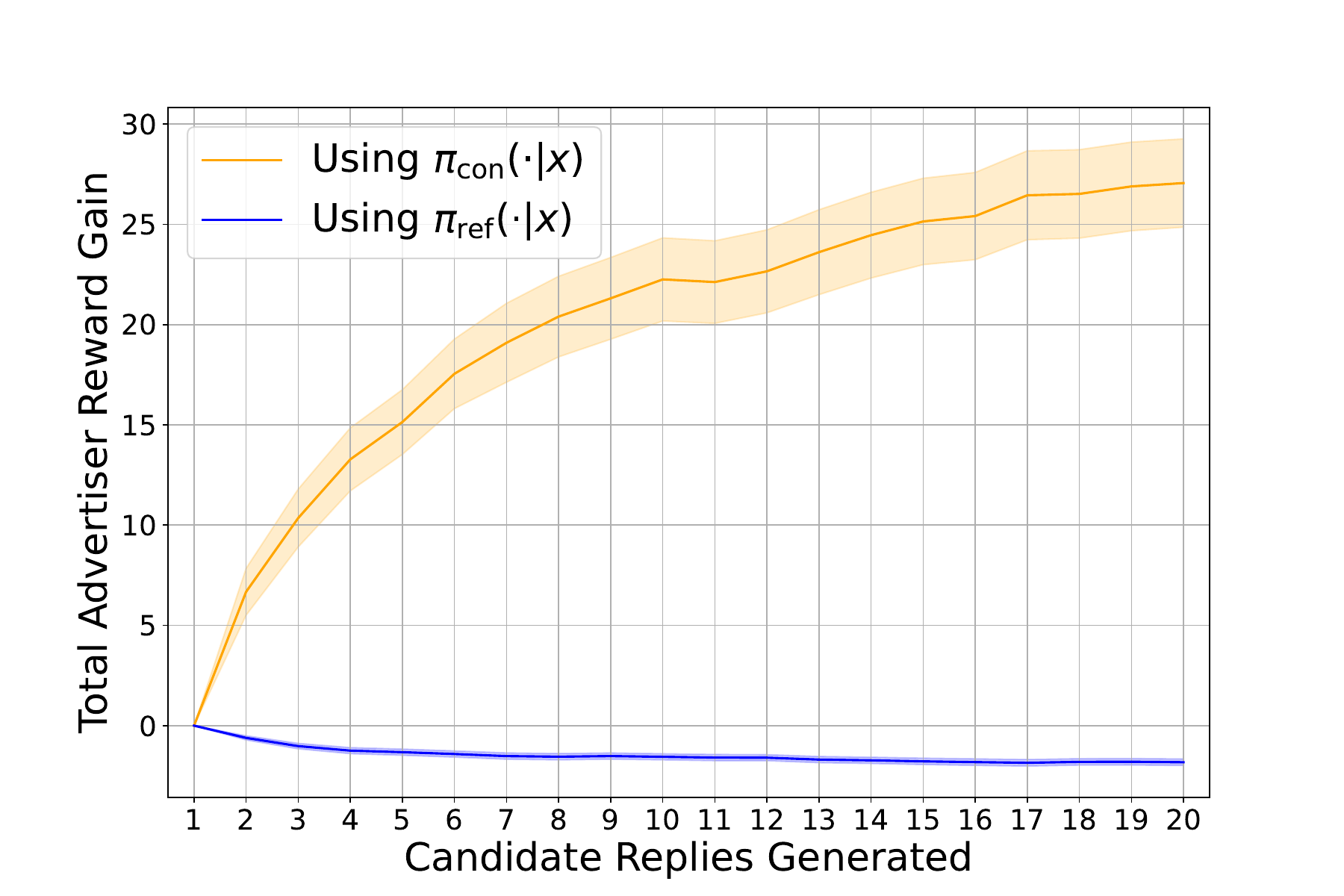}
        \caption{Total advertiser normalized reward as a function of the number of candidate replies generated using 
        $\pi_{\text{ref}}$ and $\pi_{\text{con}}$.\\}
        \label{subfig:advertiser_reward_gain_participating}
    \end{subfigure}
    \caption{
    Reply log probability and total advertiser normalized reward as a function of the number of candidate sequences generated
    using \(\pi_{\text{ref}}\) and \(\pi_{\text{con}}\).
    Averaged over 1250 runs including 95\% CIs. 
    }
    \vspace{-20pt}
    \label{fig:main_results}
\end{figure}

In \Cref{subfig:advertiser_reward_gain_participating}, we plot the total expected reward for advertisers as a function of the number of candidate replies generated by the mechanism. The expectation is taken over the draw of the returned reply from the set of candidate replies. To make the plot more interpretable, we normalize an advertiser's expected reward by her expected reward had she not participated in the mechanism. Specifically, we plot
$
r_{i, \text{norm}}(x) = \sum_{j \in M} r_i(x, y_j) \cdot \pi_{\text{int}}(y_j | x) - \sum_{j \in M} r_i(x, y_{-i, j}) \cdot \pi_{\text{int}}(y_{-i, j} | x),
$
where \( y_{-i} = (y_{-i,1}, \dots, y_{-i,M}) \) represents the candidate replies that would have been generated had the advertiser not participated.\footnote{To reduce computational costs, we estimate an advertiser's reward for not participating based on her expected reward over the already generated replies in which her brand is not mentioned by name, motivated by the fact that if she does not participate, her brand will not be mentioned by name.}
We compare the outcomes for the two versions of our mechanism, highlighting the added value of integrating context.

\Cref{subfig:advertiser_reward_gain_participating} demonstrates that our context-aware mechanism significantly boosts advertiser rewards, with benefits scaling rapidly as the number of candidate replies increases. In contrast, the baseline version fails to improve advertiser rewards within a computationally feasible number of generated replies, and may even cause a slight decrease. This occurs because generating more replies adds marginal positive reward to the reference LLM (see \Cref{sec:usefulness_to_the_user}), but does so at the expense of advertiser rewards. In summary, MOSAIC's support for context-aware LLMs allows it to quickly converge to the theoretically optimal distribution and generate substantial rewards for advertisers.

In \Cref{sec:app:compute_experiments} we provide a detailed experimental evaluation of MOSAIC's compute requirements, showing how it requires the compute cost of just five LLM queries to converge.\footnote{We also discuss the fact that MOSAIC can be fully parallelized so that the user-perceived latency is the same as for a \emph{single} LLM query.}

\subsubsection{Evaluating the Payment Rule} \label{subsection:fairness_results}

\begin{wrapfigure}[13]{r}{0.45\textwidth}
    \centering
    \vspace{-35pt}
    \includegraphics[width=\linewidth]{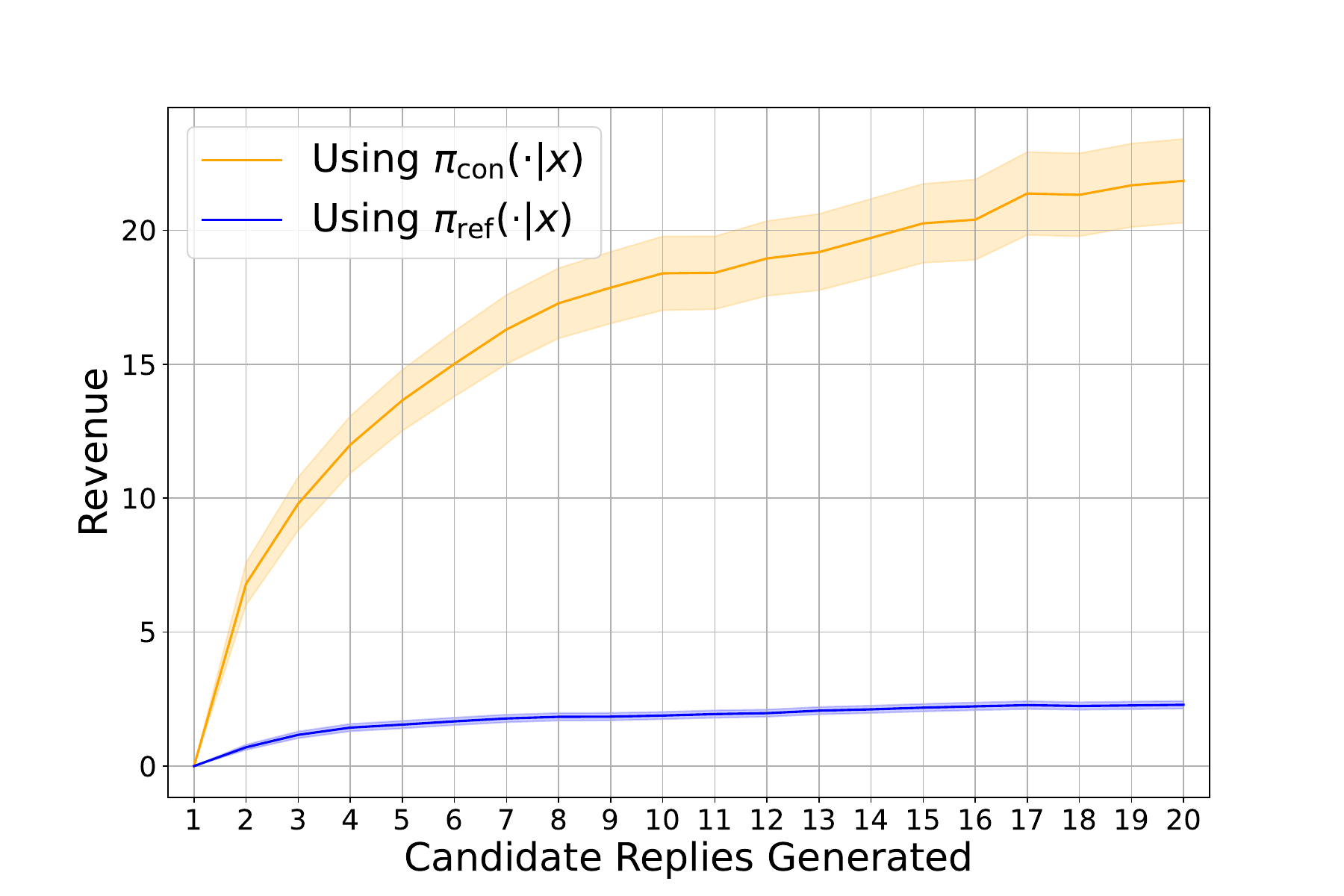}
    \caption{Revenue as a function of the number of replies 
    generated using $\pi_{\text{ref}}$ and $\pi_{\text{con}}$.
    }
    \label{fig:advertiser_payments_zero_bid_offset}
\end{wrapfigure}

In this section, we examine the impact of our payment rule introduced in \Cref{section:payments}. 

In \Cref{fig:advertiser_payments_zero_bid_offset}, we plot the auctioneer's revenue 
as a function of the number of candidate replies, for both the context-aware and baseline versions of MOSAIC.
We observe that our payment rule in combination with the context-aware allocation rule results in revenue that scales rapidly with the number of generated candidate replies.
Furthermore, if we compare the advertisers' total payment in \Cref{fig:advertiser_payments_zero_bid_offset} with their normalized reward in 
\Cref{subfig:advertiser_reward_gain_participating}, we see that for the context-aware mechanism, our payment rule
converts a significant portion of the surplus created for the advertisers into revenue for the auctioneer, while also ensuring positive advertiser utility.

\mjc{deleted fig 2b advertiser utility}

These results align with our theoretical analysis. In \Cref{subsec:almost_ir}, we showed that, with our payment offset, an advertiser with positive rewards for candidate replies—as in the context-aware mechanism—will achieve positive expected utility. In \Cref{section:what_you_give_is_what_you_get} we additionally introduced the ``what you give is what you get'' property. In \Cref{section_appendix:offset_comparison}, we provide a more comprehensive comparison of the relationship between an advertiser's normalized reward and utility, with and without the offset of \Cref{section:bidding zero offset}, 
for both the context-aware and baseline versions of our mechanism.
In all cases, the payment offset increases the advertisers' expected utility,
and makes the relationship between an advertiser's contribution to social welfare and utility significantly more linear and positively correlated.
Concretely, the Pearson correlation between advertiser utility and reward is $0.8$ with offset payments, compared to $0.4$ with standard Rochet payments.
\mjc{ermis please check if i am referring to the right things here}

\section{Conclusion} \label{section:conclusion}

We have introduced MOSAIC, a novel auction mechanism for aggregating preferences over LLM outputs. MOSAIC provably converges to the theoretically optimal distribution and it also facilitates a principled method for balancing participants' expected rewards with the divergence from a reference LLM. 
Thus, our mechanism is particularly well-suited for online advertising, allowing the integration of advertiser LLMs with a reference LLM responsible for generating user-centric replies.

A well-chosen payment rule removes any incentive to misreport preferences, achieving the central mechanism design goal of strategyproofness. 
While ex-post IR is incompatible with strategyproofness in our setting, 
we experimentally show that our mechanism is ex-ante individually rational 
and ``almost individually rational'' in a certain sense.
Furthermore, it ensures that each agent’s utility gain is proportionate to her contribution to social welfare, an essential alignment property  in this setting. 

Experimentally, we have demonstrated that by incorporating contextual information, MOSAIC's outputs rapidly converge to the optimal distribution, generating significant value for the advertisers while also effectively recapturing a considerable portion of this value as revenue. These findings demonstrate the practical efficacy and potential of our approach in real-world settings.

There are several interesting directions for future work.
While our mechanism converges quickly, developing an approach to estimate the number of candidate replies required on a per-query basis would reduce the platform's costs. 
Another interesting direction is to develop a formal framework for incorporating context into the mechanism, which could further improve the speed of convergence.
Finally, beyond advertising, one could apply a similar mechanism design approach to other problems: for example, as web site owners increasingly demand to be paid for having their content ingested by LLMs, one could consider designing a strategyproof procurement (i.e. reverse) auction to decide which sources are included in an LLM summary.

\bibliographystyle{iclr2025_conference} %
\bibliography{CApapers}

\begin{thebibliography}{33}
\providecommand{\natexlab}[1]{#1}
\providecommand{\url}[1]{\texttt{#1}}
\expandafter\ifx\csname urlstyle\endcsname\relax
  \providecommand{\doi}[1]{doi: #1}\else
  \providecommand{\doi}{doi: \begingroup \urlstyle{rm}\Url}\fi

\bibitem[Azar et~al.(2023)Azar, Rowland, Piot, Guo, Calandriello, Valko, and Munos]{azar2023}
Mohammad~Gheshlaghi Azar, Mark Rowland, Bilal Piot, Daniel Guo, Daniele Calandriello, Michal Valko, and Rémi Munos.
\newblock A general theoretical paradigm to understand learning from human preferences.
\newblock \emph{arXiv preprint arXiv:2310.12036}, 2023.

\bibitem[Bai et~al.(2022)Bai, Kadavath, Kundu, Askell, Kernion, Jones, Chen, Goldie, Mirhoseini, McKinnon, et~al.]{bai2022constitutional}
Yuntao Bai, Saurav Kadavath, Sandipan Kundu, Amanda Askell, Jackson Kernion, Andy Jones, Anna Chen, Anna Goldie, Azalia Mirhoseini, Cameron McKinnon, et~al.
\newblock Constitutional ai: Harmlessness from ai feedback.
\newblock \emph{arXiv preprint arXiv:2212.08073}, 2022.

\bibitem[Bergemann et~al.(2024)Bergemann, Bojko, D{\"u}tting, Leme, Xu, and Zuo]{Bergemann24DataDriven}
Dirk Bergemann, Marek Bojko, Paul D{\"u}tting, Renato~Paes Leme, Haifeng Xu, and Song Zuo.
\newblock Data-{{Driven Mechanism Design}}: {{Jointly Eliciting Preferences}} and {{Information}}, December 2024.

\bibitem[Bommasani et~al.(2022)Bommasani, Hudson, Adeli, Altman, Arora, von Arx, Bernstein, Bohg, Bosselut, Brunskill, Brynjolfsson, Buch, Card, Castellon, Chatterji, Chen, Creel, Davis, Demszky, Donahue, Doumbouya, Durmus, Ermon, Etchemendy, Ethayarajh, Fei-Fei, Finn, Gale, Gillespie, Goel, Goodman, Grossman, Guha, Hashimoto, Henderson, Hewitt, Ho, Hong, Hsu, Huang, Icard, Jain, Jurafsky, Kalluri, Karamcheti, Keeling, Khani, Khattab, Koh, Krass, Krishna, Kuditipudi, Kumar, Ladhak, Lee, Lee, Leskovec, Levent, Li, Li, Ma, Malik, Manning, Mirchandani, Mitchell, Munyikwa, Nair, Narayan, Narayanan, Newman, Nie, Niebles, Nilforoshan, Nyarko, Ogut, Orr, Papadimitriou, Park, Piech, Portelance, Potts, Raghunathan, Reich, Ren, Rong, Roohani, Ruiz, Ryan, Ré, Sadigh, Sagawa, Santhanam, Shih, Srinivasan, Tamkin, Taori, Thomas, Tramèr, Wang, Wang, Wu, Wu, Wu, Xie, Yasunaga, You, Zaharia, Zhang, Zhang, Zhang, Zhang, Zheng, Zhou, and Liang]{bommasani2022opportunities}
Rishi Bommasani, Drew~A. Hudson, Ehsan Adeli, Russ Altman, Simran Arora, Sydney von Arx, Michael~S. Bernstein, Jeannette Bohg, Antoine Bosselut, Emma Brunskill, Erik Brynjolfsson, Shyamal Buch, Dallas Card, Rodrigo Castellon, Niladri Chatterji, Annie Chen, Kathleen Creel, Jared~Quincy Davis, Dora Demszky, Chris Donahue, Moussa Doumbouya, Esin Durmus, Stefano Ermon, John Etchemendy, Kawin Ethayarajh, Li~Fei-Fei, Chelsea Finn, Trevor Gale, Lauren Gillespie, Karan Goel, Noah Goodman, Shelby Grossman, Neel Guha, Tatsunori Hashimoto, Peter Henderson, John Hewitt, Daniel~E. Ho, Jenny Hong, Kyle Hsu, Jing Huang, Thomas Icard, Saahil Jain, Dan Jurafsky, Pratyusha Kalluri, Siddharth Karamcheti, Geoff Keeling, Fereshte Khani, Omar Khattab, Pang~Wei Koh, Mark Krass, Ranjay Krishna, Rohith Kuditipudi, Ananya Kumar, Faisal Ladhak, Mina Lee, Tony Lee, Jure Leskovec, Isabelle Levent, Xiang~Lisa Li, Xuechen Li, Tengyu Ma, Ali Malik, Christopher~D. Manning, Suvir Mirchandani, Eric Mitchell, Zanele Munyikwa, Suraj Nair,
  Avanika Narayan, Deepak Narayanan, Ben Newman, Allen Nie, Juan~Carlos Niebles, Hamed Nilforoshan, Julian Nyarko, Giray Ogut, Laurel Orr, Isabel Papadimitriou, Joon~Sung Park, Chris Piech, Eva Portelance, Christopher Potts, Aditi Raghunathan, Rob Reich, Hongyu Ren, Frieda Rong, Yusuf Roohani, Camilo Ruiz, Jack Ryan, Christopher Ré, Dorsa Sadigh, Shiori Sagawa, Keshav Santhanam, Andy Shih, Krishnan Srinivasan, Alex Tamkin, Rohan Taori, Armin~W. Thomas, Florian Tramèr, Rose~E. Wang, William Wang, Bohan Wu, Jiajun Wu, Yuhuai Wu, Sang~Michael Xie, Michihiro Yasunaga, Jiaxuan You, Matei Zaharia, Michael Zhang, Tianyi Zhang, Xikun Zhang, Yuhui Zhang, Lucia Zheng, Kaitlyn Zhou, and Percy Liang.
\newblock On the opportunities and risks of foundation models.
\newblock \emph{arXiv preprint arXiv:2108.07258}, 2022.

\bibitem[Chung et~al.(2022)Chung, Hou, Longpre, Zoph, Tay, Fedus, Li, Wang, Dehghani, Brahma, Webson, Gu, Dai, Suzgun, Chen, Chowdhery, Castro-Ros, Pellat, Robinson, Valter, Narang, Mishra, Yu, Zhao, Huang, Dai, Yu, Petrov, Chi, Dean, Devlin, Roberts, Zhou, Le, and Wei]{flan}
Hyung~Won Chung, Le~Hou, Shayne Longpre, Barret Zoph, Yi~Tay, William Fedus, Yunxuan Li, Xuezhi Wang, Mostafa Dehghani, Siddhartha Brahma, Albert Webson, Shixiang~Shane Gu, Zhuyun Dai, Mirac Suzgun, Xinyun Chen, Aakanksha Chowdhery, Alex Castro-Ros, Marie Pellat, Kevin Robinson, Dasha Valter, Sharan Narang, Gaurav Mishra, Adams Yu, Vincent Zhao, Yanping Huang, Andrew Dai, Hongkun Yu, Slav Petrov, Ed~H. Chi, Jeff Dean, Jacob Devlin, Adam Roberts, Denny Zhou, Quoc~V. Le, and Jason Wei.
\newblock Scaling instruction-finetuned language models, 2022.
\newblock URL \url{https://arxiv.org/abs/2210.11416}.

\bibitem[Clarke(1971)]{clarke1971}
Edward Clarke.
\newblock Multipart pricing of public goods.
\newblock \emph{Public Choice}, 11\penalty0 (1):\penalty0 17--33, 1971.

\bibitem[Conitzer et~al.(2024)Conitzer, Freedman, Heitzig, Holliday, Jacobs, Lambert, Mossé, Pacuit, Russell, Schoelkopf, Tewolde, and Zwicker]{Conitzer24Social}
Vincent Conitzer, Rachel Freedman, Jobst Heitzig, Wesley~H. Holliday, Bob~M. Jacobs, Nathan Lambert, Milan Mossé, Eric Pacuit, Stuart Russell, Hailey Schoelkopf, Emanuel Tewolde, and William~S. Zwicker.
\newblock Social choice for ai alignment: Dealing with diverse human feedback.
\newblock \emph{arXiv preprint arXiv:2404.10271}, 2024.

\bibitem[Criddle(2024)]{Criddle24Perplexity}
Cristina Criddle.
\newblock Perplexity in talks with top brands on ads model as it challenges {{Google}}.
\newblock \emph{Financial Times}, September 2024.

\bibitem[Dubey et~al.(2024)Dubey, Feng, Kidambi, Mehta, and Wang]{dubey2024auctions}
Kumar~Avinava Dubey, Zhe Feng, Rahul Kidambi, Aranyak Mehta, and Di~Wang.
\newblock Auctions with llm summaries.
\newblock \emph{arXiv preprint arXiv:2404.08126}, 2024.

\bibitem[D\"{u}tting et~al.(2024)D\"{u}tting, Mirrokni, Paes~Leme, Xu, and Zuo]{duetting2023mechanism}
Paul D\"{u}tting, Vahab Mirrokni, Renato Paes~Leme, Haifeng Xu, and Song Zuo.
\newblock Mechanism design for large language models.
\newblock In \emph{Proceedings of the ACM Web Conference 2024}, WWW '24, pp.\  144–155, New York, NY, USA, 2024. Association for Computing Machinery.
\newblock ISBN 9798400701719.
\newblock \doi{10.1145/3589334.3645511}.
\newblock URL \url{https://doi.org/10.1145/3589334.3645511}.

\bibitem[Edelman et~al.(2007)Edelman, Ostrovsky, and Schwarz]{edelman2007}
Benjamin Edelman, Michael Ostrovsky, and Michael Schwarz.
\newblock Internet advertising and the generalized second-price auction: Selling billions of dollars worth of keywords.
\newblock \emph{American Economic Review}, 97\penalty0 (1):\penalty0 242--259, March 2007.

\bibitem[Feizi et~al.(2024)Feizi, Hajiaghayi, Rezaei, and Shin]{Feizi24Online}
Soheil Feizi, MohammadTaghi Hajiaghayi, Keivan Rezaei, and Suho Shin.
\newblock Online advertisements with llms: Opportunities and challenges.
\newblock \emph{arXiv preprint arXiv:2311.07601}, 2024.

\bibitem[Fish et~al.(2023)Fish, Gölz, Parkes, Procaccia, Rusak, Shapira, and Wüthrich]{Fish23Generative}
Sara Fish, Paul Gölz, David~C. Parkes, Ariel~D. Procaccia, Gili Rusak, Itai Shapira, and Manuel Wüthrich.
\newblock Generative social choice.
\newblock \emph{arXiv preprint arXiv:2309.01291}, 2023.

\bibitem[Frongillo \& Kash(2021)Frongillo and Kash]{frongillo2021}
Rafael~M. Frongillo and Ian~A. Kash.
\newblock General truthfulness characterizations via convex analysis.
\newblock \emph{Games and Economic Behavior}, 130:\penalty0 636--662, 2021.

\bibitem[Groves(1973)]{groves1973}
Theodore Groves.
\newblock Incentives in teams.
\newblock \emph{Econometrica}, 41\penalty0 (4):\penalty0 617--631, 1973.

\bibitem[Hajiaghayi et~al.(2024)Hajiaghayi, Lahaie, Rezaei, and Shin]{Hajiaghayi24Ad}
MohammadTaghi Hajiaghayi, S{\'e}bastien Lahaie, Keivan Rezaei, and Suho Shin.
\newblock Ad {{Auctions}} for {{LLMs}} via {{Retrieval Augmented Generation}}, June 2024.

\bibitem[Harris et~al.(2024)Harris, Immorlica, Lucier, and Slivkins]{Harris24Algorithmic}
Keegan Harris, Nicole Immorlica, Brendan Lucier, and Aleksandrs Slivkins.
\newblock Algorithmic persuasion through simulation.
\newblock \emph{arXiv preprint arXiv:2311.18138}, 2024.

\bibitem[Lehmann et~al.(2002)Lehmann, O{\'c}allaghan, and Shoham]{Lehmann2002Truth}
Daniel Lehmann, Liadan~Ita O{\'c}allaghan, and Yoav Shoham.
\newblock Truth revelation in approximately efficient combinatorial auctions.
\newblock \emph{Journal of the ACM (JACM)}, 49\penalty0 (5):\penalty0 577--602, 2002.

\bibitem[Li et~al.(2024)Li, Jelassi, Zhang, Kakade, Wattenberg, and Brandfonbrener]{li2024qprobe}
Kenneth Li, Samy Jelassi, Hugh Zhang, Sham Kakade, Martin Wattenberg, and David Brandfonbrener.
\newblock Q-probe: A lightweight approach to reward maximization for language models.
\newblock \emph{arXiv preprint arXiv:2402.14688}, 2024.

\bibitem[Meta(2024)]{meta2024}
Meta.
\newblock Meta reports fourth quarter and full year 2023 results, initiates quarterly dividend, 2024.
\newblock URL \url{https://investor.fb.com/investor-news/press-release-details/2024/Meta-Reports-Fourth-Quarter-and-Full-Year-2023-Results-Initiates-Quarterly-Dividend/default.aspx}.
\newblock Meta Investor Relations, accessed: 2024-01-31.

\bibitem[Myerson(1981)]{Myerson81Optimal}
Roger~B Myerson.
\newblock Optimal {{Auction Design}}.
\newblock \emph{Mathematics of Operations Research}, 6\penalty0 (1), February 1981.

\bibitem[Nisan \& Ronen(2007)Nisan and Ronen]{Nisan07Computationally}
N.~Nisan and A.~Ronen.
\newblock Computationally {{Feasible VCG Mechanisms}}.
\newblock \emph{Journal of Artificial Intelligence Research}, 29:\penalty0 19--47, May 2007.
\newblock ISSN 1076-9757.

\bibitem[Nisan \& Ronen(1999)Nisan and Ronen]{Nisan1999Algorithmic}
Noam Nisan and Amir Ronen.
\newblock Algorithmic mechanism design.
\newblock In \emph{Proceedings of the thirty-first annual ACM symposium on Theory of computing}, pp.\  129--140, 1999.

\bibitem[Peters \& Schaal(2007)Peters and Schaal]{peters2007}
Jan Peters and Stefan Schaal.
\newblock Reinforcement learning by reward-weighted regression for operational space control.
\newblock In \emph{Proceedings of the 24th International Conference on Machine Learning}, ICML '07, pp.\  745–750, New York, NY, USA, 2007. Association for Computing Machinery.
\newblock ISBN 9781595937933.
\newblock \doi{10.1145/1273496.1273590}.
\newblock URL \url{https://doi.org/10.1145/1273496.1273590}.

\bibitem[Rafailov et~al.(2023)Rafailov, Sharma, Mitchell, Ermon, Manning, and Finn]{rafailov2023dpo}
Rafael Rafailov, Archit Sharma, Eric Mitchell, Stefano Ermon, Christopher~D. Manning, and Chelsea Finn.
\newblock Direct preference optimization: Your language model is secretly a reward model.
\newblock \emph{arXiv preprint arXiv:2305.18290}, 2023.

\bibitem[Rochet(1987)]{rochet1987}
Jean-Charles Rochet.
\newblock A necessary and sufficient condition for rationalizability in a quasi-linear context.
\newblock \emph{Journal of Mathematical Economics}, 16\penalty0 (2):\penalty0 191--200, 1987.

\bibitem[Rockafellar(1970)]{Rockafellar1970}
Ralph~Tyrell Rockafellar.
\newblock \emph{Convex Analysis}.
\newblock Princeton University Press, Princeton, 1970.
\newblock ISBN 9781400873173.
\newblock \doi{doi:10.1515/9781400873173}.
\newblock URL \url{https://doi.org/10.1515/9781400873173}.

\bibitem[Rozière et~al.(2024)Rozière, Gehring, Gloeckle, Sootla, Gat, Tan, Adi, Liu, Sauvestre, Remez, Rapin, Kozhevnikov, Evtimov, Bitton, Bhatt, Ferrer, Grattafiori, Xiong, Défossez, Copet, Azhar, Touvron, Martin, Usunier, Scialom, and Synnaeve]{rozière2024codellama}
Baptiste Rozière, Jonas Gehring, Fabian Gloeckle, Sten Sootla, Itai Gat, Xiaoqing~Ellen Tan, Yossi Adi, Jingyu Liu, Romain Sauvestre, Tal Remez, Jérémy Rapin, Artyom Kozhevnikov, Ivan Evtimov, Joanna Bitton, Manish Bhatt, Cristian~Canton Ferrer, Aaron Grattafiori, Wenhan Xiong, Alexandre Défossez, Jade Copet, Faisal Azhar, Hugo Touvron, Louis Martin, Nicolas Usunier, Thomas Scialom, and Gabriel Synnaeve.
\newblock Code llama: Open foundation models for code.
\newblock \emph{arXiv preprint arXiv:2308.12950}, 2024.

\bibitem[Touvron et~al.(2023)Touvron, Martin, Stone, Albert, Almahairi, Babaei, Bashlykov, Batra, Bhargava, Bhosale, Bikel, Blecher, Ferrer, Chen, Cucurull, Esiobu, Fernandes, Fu, Fu, Fuller, Gao, Goswami, Goyal, Hartshorn, Hosseini, Hou, Inan, Kardas, Kerkez, Khabsa, Kloumann, Korenev, Koura, Lachaux, Lavril, Lee, Liskovich, Lu, Mao, Martinet, Mihaylov, Mishra, Molybog, Nie, Poulton, Reizenstein, Rungta, Saladi, Schelten, Silva, Smith, Subramanian, Tan, Tang, Taylor, Williams, Kuan, Xu, Yan, Zarov, Zhang, Fan, Kambadur, Narang, Rodriguez, Stojnic, Edunov, and Scialom]{touvron2023llama}
Hugo Touvron, Louis Martin, Kevin Stone, Peter Albert, Amjad Almahairi, Yasmine Babaei, Nikolay Bashlykov, Soumya Batra, Prajjwal Bhargava, Shruti Bhosale, Dan Bikel, Lukas Blecher, Cristian~Canton Ferrer, Moya Chen, Guillem Cucurull, David Esiobu, Jude Fernandes, Jeremy Fu, Wenyin Fu, Brian Fuller, Cynthia Gao, Vedanuj Goswami, Naman Goyal, Anthony Hartshorn, Saghar Hosseini, Rui Hou, Hakan Inan, Marcin Kardas, Viktor Kerkez, Madian Khabsa, Isabel Kloumann, Artem Korenev, Punit~Singh Koura, Marie-Anne Lachaux, Thibaut Lavril, Jenya Lee, Diana Liskovich, Yinghai Lu, Yuning Mao, Xavier Martinet, Todor Mihaylov, Pushkar Mishra, Igor Molybog, Yixin Nie, Andrew Poulton, Jeremy Reizenstein, Rashi Rungta, Kalyan Saladi, Alan Schelten, Ruan Silva, Eric~Michael Smith, Ranjan Subramanian, Xiaoqing~Ellen Tan, Binh Tang, Ross Taylor, Adina Williams, Jian~Xiang Kuan, Puxin Xu, Zheng Yan, Iliyan Zarov, Yuchen Zhang, Angela Fan, Melanie Kambadur, Sharan Narang, Aurelien Rodriguez, Robert Stojnic, Sergey Edunov, and Thomas
  Scialom.
\newblock Llama 2: Open foundation and fine-tuned chat models.
\newblock \emph{arXiv preprint arXiv:2307.09288}, 2023.

\bibitem[Varian(2007)]{varian2007}
Hal~R. Varian.
\newblock {Position auctions}.
\newblock \emph{International Journal of Industrial Organization}, 25\penalty0 (6):\penalty0 1163--1178, December 2007.

\bibitem[Vickrey(1961)]{vickrey1961}
William Vickrey.
\newblock Counterspeculation, auctions, and competitive sealed tenders.
\newblock \emph{The Journal of Finance}, 16\penalty0 (1):\penalty0 8--37, 1961.

\bibitem[Werner et~al.(2024)Werner, Soraperra, Calvano, Parkes, and Rahwan]{Werner24Experimental}
Tobias Werner, Ivan Soraperra, Emilio Calvano, David~C. Parkes, and Iyad Rahwan.
\newblock Experimental {{Evidence That Conversational Artificial Intelligence Can Steer Consumer Behavior Without Detection}}, September 2024.

\bibitem[Ziegler et~al.(2020)Ziegler, Stiennon, Wu, Brown, Radford, Amodei, Christiano, and Irving]{ziegler2020finetuning}
Daniel~M. Ziegler, Nisan Stiennon, Jeffrey Wu, Tom~B. Brown, Alec Radford, Dario Amodei, Paul Christiano, and Geoffrey Irving.
\newblock Fine-tuning language models from human preferences.
\newblock \emph{arXiv preprint arXiv:1909.08593}, 2020.

\end{thebibliography}

\clearpage
\appendix
\section{Details from \Cref{section:allocation_rule}} \label{appendix:allocationdetails}
In this section, we present all omitted theorems and proofs from \Cref{section:allocation_rule}.

\subsection{Proving that MOSAIC converges to the target distribution in the limit} 
\label{appendix:allocation_rule_theory}

\begin{theorem} \label{thm:context_aware_mech_induced_policy}
Let $\pi_{r, M}(y | x)$ be the probability of sampling output sequence $y$ for input sequence $x$ according to \Cref{alg:static_mechanism_improved}, 
where 
$M$ is the number of candidate sequences generated and $\widehat{r} \in \widehat{R}$ is the advertisers' reported reward functions. 
For \textit{any} LLM $\pi_{\text{gen}}$ such that $\pi_{\text{ref}}$ is absolutely continuous with respect to $\pi_{\text{gen}}$,
the policy induced by MOSAIC approaches the following limit: 
\begin{gather}
    \lim_{M \to \infty} \pi_{\widehat{r}, M}(y | x) = \pi_{\text{ref}}(y|x) \frac{\exp (\widehat{r}(x, y) / \tau) }{ \mathbb{E}_{y' \sim \pi_{\text{ref}} (\cdot|x)} 
    [\exp(\widehat{r}(x, y') / \tau)]
    }
\end{gather}
\end{theorem}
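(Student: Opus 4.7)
\textbf{Proof plan for \Cref{thm:context_aware_mech_induced_policy}.}

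The plan is to recognize \Cref{alg:static_mechanism_improved} as a self-normalized importance-sampling estimator targeting the distribution $\pi^*_{\widehat{r}}$. First I would unpack the softmax in the algorithm and write the probability that sequence $y$ is returned as
\begin{gather*}
\pi_{\widehat{r}, M}(y \mid x) = \mathbb{E}_{y_1, \dots, y_M \,\overset{\text{iid}}{\sim}\, \pi_{\text{gen}}(\cdot \mid x)} \left[ \frac{\sum_{j=1}^{M} \mathbb{1}[y_j = y]\, w(y_j)}{\sum_{k=1}^{M} w(y_k)} \right], \qquad w(y') := \exp\!\left(\frac{\widehat{r}(x, y')}{\tau}\right) \frac{\pi_{\text{ref}}(y' \mid x)}{\pi_{\text{gen}}(y' \mid x)}.
\end{gather*}
Absolute continuity of $\pi_{\text{ref}}$ with respect to $\pi_{\text{gen}}$ ensures $w$ is well-defined wherever it needs to be (and I would only sum over the support of $\pi_{\text{gen}}$).

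Next, I would divide numerator and denominator by $M$ and apply the strong law of large numbers to each. For the denominator, $\frac{1}{M}\sum_{k=1}^M w(y_k) \to \mathbb{E}_{y' \sim \pi_{\text{gen}}}[w(y')] = \mathbb{E}_{y' \sim \pi_{\text{ref}}}[\exp(\widehat{r}(x,y')/\tau)]$, where the second equality is the standard change-of-measure that motivates the importance-sampling construction. For the numerator, $\frac{1}{M}\sum_{j=1}^M \mathbb{1}[y_j = y]\, w(y_j) \to \pi_{\text{gen}}(y \mid x)\, w(y) = \pi_{\text{ref}}(y \mid x)\exp(\widehat{r}(x,y)/\tau)$. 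Slutsky's theorem (or just the continuous mapping theorem, since the denominator limit is strictly positive whenever $y$ lies in the relevant support) then yields almost-sure convergence of the random ratio to the claimed limit.

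The remaining step is to lift almost-sure convergence of the ratio to convergence of its expectation, which is what the statement actually asks for. This is the place where one has to be a bit careful, but it is not the hard part: each summand in the numerator also appears in the denominator with the same weight, so the ratio is deterministically bounded within $[0,1]$. Dominated (in fact bounded) convergence then immediately moves the limit inside the expectation, giving
\begin{gather*}
\lim_{M \to \infty} \pi_{\widehat{r}, M}(y \mid x) = \pi_{\text{ref}}(y \mid x)\, \frac{\exp(\widehat{r}(x,y)/\tau)}{\mathbb{E}_{y' \sim \pi_{\text{ref}}(\cdot \mid x)}[\exp(\widehat{r}(x,y')/\tau)]},
\end{gather*}
which is exactly $\pi^*_{\widehat{r}}(y \mid x)$ as given by \Cref{eq:opt_distribution}. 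The \emph{conceptual} heart of the argument is the change-of-measure identity that turns the sampling distribution $\pi_{\text{gen}}$ into $\pi_{\text{ref}}$ in the limit; I expect the only mildly subtle point to be verifying that $\pi_{\text{gen}}(y \mid x) = 0$ forces $\pi_{\text{ref}}(y \mid x) = 0$ (by absolute continuity), so such $y$ never need to be returned and can be safely excluded from the sums throughout.
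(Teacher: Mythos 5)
Your proposal is correct and follows essentially the same route as the paper: both expand the softmax output probability as an expectation of a self-normalized importance-sampling ratio with weights $w(y') = \exp(\widehat{r}(x,y')/\tau)\,\pi_{\text{ref}}(y'\mid x)/\pi_{\text{gen}}(y'\mid x)$, divide numerator and denominator by $M$, and apply the law of large numbers together with the change-of-measure identity $\mathbb{E}_{y'\sim\pi_{\text{gen}}}[w(y')] = \mathbb{E}_{y'\sim\pi_{\text{ref}}}[\exp(\widehat{r}(x,y')/\tau)]$. If anything, your write-up is slightly more careful than the paper's: the paper pushes the LLN limits inside the outer expectation without comment, whereas you explicitly justify that interchange by observing the ratio is deterministically bounded in $[0,1]$ and invoking bounded convergence.
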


\begin{proof}[\Cref{thm:context_aware_mech_induced_policy} Proof.]
Let $\pi_{\widehat{r}, M} (y | x, \{y_j \}_{j = 1}^M)$ be the probability of returning output sequence $y$ for input sequence $x$ according to \Cref{alg:static_mechanism_improved} conditioned on the set of generated candidate sequences being $\{y_j \}_{j = 1}^M$. 
Additionally, let $\pi_{\text{con}} (\{y_j \}_{j = 1}^M | x ; c)$ be the probability of the context-aware model $\pi_{\text{con}}$ generating the candidate sequences $\{y_j \}_{j = 1}^M$, given the context $c$ and the user query $x$.

First, note that we can write the density of $\pi_{\widehat{r}, M}$ as follows: 
\begin{align*}
    \pi_{\widehat{r}, M}(y | x) & = \sum_{ \{y_j \}_{j = 1}^M  \in Y^M } \pi_{\widehat{r}, M} (y | x, \{y_j \}_{j = 1}^M) \pi_{\text{con}} (\{y_j \}_{j = 1}^M | x ; c)   \\
    & = \mathbb{E}_{ \{y_j \}_{j = 1}^M \sim \pi_{\text{con}} (\cdot | x ; c) } \left [ 
    \pi_{\widehat{r}, M}(y | x,  \{y_j \}_{j = 1}^M )
    \right ] \\ 
    & = \mathbb{E}_{ \{y_j \}_{j = 1}^M \sim \pi_{\text{con}} (\cdot | x ; c )} \left [ 
    \sum_j \mathbb{I} \left \{ y_j = y \right \} \frac{
    \exp \left (\frac{\widehat{r}(x, y_j)}{\tau} + \log \frac{\pi_{\text{ref}} (y_j | x)}{\pi_{\text{con}} (y_j | x ; c)} \right )
    }{
     \sum_{\zeta \in \{y_j \}_{j = 1}^M} \exp \left  (\frac{\widehat{r}(x, \zeta)}{\tau} + \log \frac{\pi_{\text{ref}} (\zeta | x)}{\pi_{\text{con}} (\zeta | x ; c)} \right )
    }
    \right ] \\ 
    & = \mathbb{E}_{ \{y_j \}_{j = 1}^M \sim \pi_{\text{con}} (\cdot | x ; c)} \left [ 
    \sum_j \mathbb{I} \left \{ y_j = y \right \} \frac{
    \frac{\pi_{\text{ref}} (y_j | x)}{\pi_{\text{con}} (y_j | x ; c)}
    \exp \left (\frac{\widehat{r}(x, y_j)}{\tau}  \right )
    }{
     \sum_{\zeta \in \{y_j \}_{j = 1}^M} 
     \frac{\pi_{\text{ref}} (\zeta | x)}{\pi_{\text{con}} (\zeta | x ; c)}
     \exp \left  (\frac{\widehat{r}(x, \zeta)}{\tau}  \right )
    }
    \right ] \\  
    & = \mathbb{E}_{ \{y_j \}_{j = 1}^M \sim \pi_{\text{con}} (\cdot | x ; c)} \left [ 
    \frac{
    \sum_j \mathbb{I} \left \{ y_j = y \right \}}{
     \sum_{\zeta \in \{y_j \}_{j = 1}^M} 
     \frac{\pi_{\text{ref}} (\zeta | x)}{\pi_{\text{con}} (\zeta | x ; c )}
     \exp \left  (\frac{\widehat{r}(x, \zeta)}{\tau}  \right )
    }
    \right ] 
    \frac{\pi_{\text{ref}} (y | x)}{\pi_{\text{con}} (y | x ; c)}
    \exp \left (\frac{\widehat{r}(x, y)}{\tau}  \right )
    \\  
    & = \mathbb{E}_{ \{y_j \}_{j = 1}^M \sim \pi_{\text{con}} (\cdot | x ; c)} \left [ 
    \frac{
    \frac{1}{M} \sum_j \mathbb{I} \left \{ y_j = y \right \} }{
     \frac{1}{M} \sum_{\zeta \in \{y_j \}_{j = 1}^M} 
     \frac{\pi_{\text{ref}} (\zeta | x)}{\pi_{\text{con}} (\zeta | x ; c)}
     \exp \left  (\frac{\widehat{r}(x, \zeta)}{\tau}  \right )
    }
    \right ] 
    \frac{\pi_{\text{ref}} (y | x)}{\pi_{\text{con}} (y | x ; c)}
    \exp \left (\frac{\widehat{r}(x, y)}{\tau}  \right )
\end{align*}
Taking the limit as $M \to \infty$ and using the Law of Large Numbers (the sequences are i.i.d.):
\begin{align*}
    \lim_{M \to \infty} \pi_{\widehat{r}, M}(y | x) & =
     \lim_{M \to \infty} \mathbb{E}_{ \{y_j \}_{j = 1}^M \sim \pi_{\text{con}} (\cdot | x ; c)}
     \left [ 
     \frac{\pi_{\text{con}}(y|x ; c) }{
    \mathbb{E}_{ \zeta \sim \pi_{\text{con}}(\cdot | x)} \left [ \exp \left ( \frac{\widehat{r}(x, \zeta)}{\tau} \right ) 
    \frac{\pi_{\text{ref}} (\zeta | x)}{\pi_{\text{con}} (\zeta | x ; c)}
    \right]
     }
     \right ]
     \frac{\pi_{\text{ref}} (y | x)}{\pi_{\text{con}} (y | x ; c)}
    \exp \left (\frac{\widehat{r}(x, y)}{\tau}  \right ) \\ 
    & = 
    \pi_{\text{con}}(y|x ; c ) \frac{1}{
    \mathbb{E}_{ \zeta \sim \pi_{\text{con}}(\cdot | x)} \left [ \exp \left ( \frac{\widehat{r}(x, \zeta)}{\tau} \right ) 
    \frac{\pi_{\text{ref}} (\zeta | x)}{\pi_{\text{con}} (\zeta | x)}
    \right]
    }
    \frac{\pi_{\text{ref}} (y | x)}{\pi_{\text{con}} (y | x ; c)}
    \exp \left (\frac{\widehat{r}(x, y)}{\tau}  \right ) \\ 
    & = \pi_{\text{ref}} (y | x) 
    \frac{1}{
    \sum_{\zeta \in Y}
    \pi_{\text{con}}(\zeta | x) \exp \left ( \frac{\widehat{r}(x, \zeta)}{\tau} \right ) 
    \frac{\pi_{\text{ref}} (\zeta | x)}{\pi_{\text{con}} (\zeta | x ; c)}
    }
    \exp \left (\frac{\widehat{r}(x, y)}{\tau}  \right ) \\ 
    & = 
    \pi_{\text{ref}} (y | x) 
    \frac{1}{
    \sum_{\zeta \in Y}
    \pi_{\text{ref}}(\zeta | x) \exp \left ( \frac{\widehat{r}(x, \zeta)}{\tau} \right ) 
    }
    \exp \left (\frac{\widehat{r}(x, y)}{\tau}  \right ) \\
    & = 
    \pi_{\text{ref}} (y | x) 
    \frac{1}{
    \mathbb{E}_{\zeta \sim \pi_{\text{ref}}(\cdot | x ) } \left [ \exp (\widehat{r} (x, \zeta ) / \tau )
    \right ]
    }
    \exp \left (\frac{\widehat{r}(x, y)}{\tau}  \right ) 
\end{align*}
 
\end{proof}

\begin{proof}[\Cref{Corollary:mechanism_follows_optimal_distribution} Proof.]
The proof follows directly from \Cref{thm:context_aware_mech_induced_policy} and Appendix A.1 in \citet{rafailov2023dpo}. 
 
\end{proof}

\subsection{Formal Connection To Importance Sampling}
\label{app:importancesampling}
\es{New Section On Importance Sampling} 
Given truthful reports by the  advertisers, there is a strong connection between our allocation rule and importance sampling. 
The optimal LLM for the platform's objective $\pi_r^*$ can be interpreted as the target distribution that our allocation rule is trying to simulate.
The aim of our allocation rule is to return a reply as similar as possible to a reply drawn from the optimal LLM $\pi_r^*$.
This can be interpreted as drawing a reply from $\pi_r^*$ and then evaluating it with respect to the same function, i.e., our allocation rule is trying to estimate  $\mathbb{E}_{y \sim \pi^*_r(\cdot |  x )} [\pi^*_r(y|  x )]$.
The LLM $\pi_{\text{gen}}$ that we use to generate the candidate replies (either the reference LLM or the context-aware LLM) can be interpreted as the proposal distribution $\pi_{\text{prop}} (\cdot | x)$ used to generate samples.
We are interested in how well, with respect to the target probability $\pi^*_{r} (\cdot | x)$, a sample $y$ from our estimator using $M$ candidate replies matches a sample drawn from the target distribution $\pi^*_{r} (\cdot | x)$.
In this section, we will analytically show how the quality of the proposal distribution $\pi_{\text{prop}} = \pi_{\text{gen}}$ affects the quality of that estimator. 

First, in \Cref{lemma:estimator_variance} we establish the variance of our estimator:

\begin{lemma} \label{lemma:estimator_variance}
For any LLM $\pi_{\text{gen}}$ such that $\pi_{\text{ref}}$ is absolutely continuous with respect to $\pi_{\text{gen}}$,
the variance of the policy $ \pi_{r, M}(\cdot | x)$ as an estimator for 
$\mathbb{E}_{y \sim \pi^*_r(\cdot |  x )} [\pi^*_{r}(y|  x )]$ is
\begin{align}
    \text{Var}(\widehat{\mu}_{\text{IS}}) & = 
    \frac{1}{M}
    \left ( \sum_{y \in Y} \frac{\pi^*_{r} (y | x)^4}{\pi_{\text{gen}}(y | x)}
    - \left ( \sum_{y \in Y} \pi^*_{r} (y | x)^2
    \right )^2 
    \right ). \label{eq:variance_final_statement}
\end{align}
\end{lemma}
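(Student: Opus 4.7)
The plan is to recognize that $\pi_{r,M}$ arises from an importance-sampling (IS) procedure with proposal $\pi_{\text{gen}}$ and target $\pi^*_r$, and then to apply the textbook variance formula for i.i.d.\ IS estimators with the function $f(y) = \pi^*_r(y|x)$.

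First I would rewrite the target as a sum: since the expectation is taken with respect to $\pi^*_r$, we have
$$\mathbb{E}_{y \sim \pi^*_r(\cdot|x)}\bigl[\pi^*_r(y|x)\bigr] \;=\; \sum_{y \in Y} \pi^*_r(y|x)^2.$$
This is the scalar we are estimating. Next, I would write down the unbiased IS estimator built from $M$ i.i.d.\ samples $y_1, \dots, y_M \sim \pi_{\text{gen}}(\cdot|x)$,
$$\widehat{\mu}_{\text{IS}} \;=\; \frac{1}{M} \sum_{j=1}^M w(y_j)\,\pi^*_r(y_j|x), \qquad w(y_j) := \frac{\pi^*_r(y_j|x)}{\pi_{\text{gen}}(y_j|x)},$$
noting that the absolute-continuity assumption on $\pi_{\text{ref}}$ (and hence on $\pi^*_r$) with respect to $\pi_{\text{gen}}$ is exactly what is needed for $w$ to be well-defined on the support of $\pi^*_r$. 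A one-line calculation confirms $\mathbb{E}[\widehat{\mu}_{\text{IS}}] = \sum_y \pi^*_r(y|x)^2$.

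I would then compute the variance directly. Because the $y_j$ are i.i.d.,
$$\text{Var}(\widehat{\mu}_{\text{IS}}) \;=\; \frac{1}{M}\,\text{Var}_{y \sim \pi_{\text{gen}}}\!\left[\frac{\pi^*_r(y|x)^2}{\pi_{\text{gen}}(y|x)}\right],$$
and applying $\text{Var}(X) = \mathbb{E}[X^2] - \mathbb{E}[X]^2$ gives the two sums
$$\mathbb{E}_{y\sim \pi_{\text{gen}}}\!\left[\frac{\pi^*_r(y|x)^4}{\pi_{\text{gen}}(y|x)^2}\right] \;=\; \sum_{y \in Y} \frac{\pi^*_r(y|x)^4}{\pi_{\text{gen}}(y|x)}, \qquad \mathbb{E}_{y\sim \pi_{\text{gen}}}\!\left[\frac{\pi^*_r(y|x)^2}{\pi_{\text{gen}}(y|x)}\right] \;=\; \sum_{y \in Y} \pi^*_r(y|x)^2.$$
Substituting these into the variance formula immediately yields \eqref{eq:variance_final_statement}.

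The main obstacle I expect is conceptual rather than computational: justifying the identification of the policy $\pi_{r,M}$ with a classical (unnormalized) IS estimator, since \Cref{alg:static_mechanism_improved} actually executes a sampling-importance-resampling (SIR) step via the softmax over the re-weighted candidates. I would handle this by appealing to the derivation already used in the proof of \Cref{thm:context_aware_mech_induced_policy}: the MOSAIC weights are, up to the common normalizer $Z(x)$, precisely $w(y_j)\,\pi^*_r(y_j|x)/\pi^*_r(y_j|x) = w(y_j)$ times $\exp(r/\tau)$, so the probability of selecting a particular $y$ inherits exactly the IS weight structure analyzed above. Once this identification is made, the rest of the argument is a mechanical substitution.
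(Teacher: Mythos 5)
Your proposal is correct and follows essentially the same route as the paper: both identify the quantity being estimated as $\sum_{y}\pi^*_r(y|x)^2$, set up the unnormalized importance-sampling estimator with weights $\pi^*_r(y_j|x)/\pi_{\text{gen}}(y_j|x)$, verify unbiasedness, and compute $\mathbb{E}[\widehat{\mu}_{\text{IS}}^2]-\mathbb{E}[\widehat{\mu}_{\text{IS}}]^2$; the only cosmetic difference is that you invoke the i.i.d.\ identity $\mathrm{Var}(\frac{1}{M}\sum_j X_j)=\frac{1}{M}\mathrm{Var}(X_1)$ directly, whereas the paper expands the double sum over $j,j'$ and separates diagonal from off-diagonal terms, arriving at the same expression.
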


\begin{proof}

Let $y_1, y_2, \dots, y_M$ be the $M$ generated candidate replies drawn from the proposal distribution $\pi_{\text{prop}}(\cdot | x)$.
Conditioned on those candidate replies, 
the importance sampling estimator for the expected value of the function $ \pi^*_r (y| x)$ under the target distribution $\pi^*_r (y| x)$ is 
\begin{gather*}
    \widehat{\mu}_{\text{IS}} = \frac{1}{M} \sum_{j = 1}^M \pi_r^* (y_j | x ) \cdot w(x_j)  
    = \frac{1}{M} \sum_{j = 1}^M \pi_r^* (y_j | x ) \cdot 
    \frac{\pi^*_r (y_j | x)}{\pi_{\text{prop}}(y_j | x)} 
    = \frac{1}{M} \sum_{j = 1}^M 
    \frac{\pi^*_r (y_j | x)^2}{\pi_{\text{prop}}(y_j | x)} 
\end{gather*}
Taking expectation over the generated candidate replies $y_1, \dots, y_M$:
\begin{align}
\mathbb{E} [\widehat{\mu}_{\text{IS}}] & = 
\mathbb{E}_{y_1, \dots, y_M \sim \pi_{\text{prop}}( \cdot | x)} 
\left [ \frac{1}{M} \sum_{j = 1}^M 
    \frac{\pi^*_r (y_j | x)^2}{\pi_{\text{prop}}(y_j | x)}  \right] \nonumber \\
& = \frac{1}{M }\sum_{j=1}^m \mathbb{E}_{y_j \sim \pi_{\text{prop}} ( \cdot | x)}
\left [ \frac{\pi^*_r (y_j | x)^2}{\pi_{\text{prop}}(y_j | x)}  \right] \nonumber \\
& =  \mathbb{E}_{y \sim \pi_{\text{prop}}( \cdot | x)} 
\left [ \frac{\pi^*_r (y | x)^2}{\pi_{\text{prop}}(y | x)}  \right] \nonumber \\
& = 
\sum_{y \in Y} \frac{\pi^*_r (y | x)^2}{\pi_{\text{prop}}(y | x)} \pi_{\text{prop}}(y | x)  \nonumber \\ 
& = 
\sum_{y \in Y} {\pi^*_r (y | x)^2} \label{eq:variance_term2}
\end{align}

Thus, our estimator is unbiased, as expected. 
The variance of the estimator $\widehat{\mu}_{\text{IS}}$ is given by: 
\begin{gather}
    \text{Var}(\widehat{\mu}_{\text{IS}}) = 
    \mathbb{E} [\widehat{\mu}_{\text{IS}}^2] - \mathbb{E}[\widehat{\mu}_{\text{IS}}]^2 \label{eq:variance_original}
\end{gather}
For the term $\widehat{\mu}_{\text{IS}}^2$ we have: 
\begin{gather*}
    \widehat{\mu}_{\text{IS}}^2  = 
    \left ( 
    \frac{1}{M} \sum_{j = 1}^M \frac{\pi^*_r (y_j | x)^2}{\pi_{\text{prop}}(y_j | x)} \right ) 
    = \frac{1}{M^2} \sum_{j = 1}^M \sum_{j' = 1}^M 
    \frac{\pi^*_r (y_j | x)^2}{\pi_{\text{prop}}(y_j | x)} \cdot 
    \frac{\pi^*_r (y_{j'} | x)^2}{\pi_{\text{prop}}(y_{j'} | x)} 
\end{gather*}
Taking the expectation $\mathbb{E}_{\pi_{\text{prop}}( \cdot | x)}[\cdot]$: 
\begin{align}
    \mathbb{E} [\widehat{\mu}_{\text{IS}}^2] & = 
    \frac{1}{M^2} \sum_{j = 1}^M \sum_{j' = 1}^M
    \mathbb{E}_{\pi_{\text{prop}}( \cdot | x)} \left [ \frac{\pi^*_r (y_j | x)^2}{\pi_{\text{prop}}(y_j | x)} \cdot \frac{\pi^*_r (y_{j'} | x)^2}{\pi_{\text{prop}}(y_{j'} | x)} \right ] \nonumber \\ 
    & = 
    \frac{1}{M^2}
    \left ( 
    \sum_{j = 1}^M \mathbb{E}_{\pi_{\text{prop}}( \cdot | x)} \left [ \frac{\pi^*_r (y_j | x)^4}{\pi_{\text{prop}}(y_j | x)^2} \right] 
    + \sum_{j \neq j'} 
    \mathbb{E}_{\pi_{\text{prop}}( \cdot | x)} \left [ \frac{\pi^*_r (y_j | x)^2}{\pi_{\text{prop}}(y_j | x)} \right] 
    \right ) \nonumber \\ 
    & = 
    \frac{1}{M^2}
    \left ( M \sum_{y \in Y} \frac{\pi^*_r (y | x)^4}{\pi_{\text{prop}}(y | x)}
    + M (M-1)
    \left ( \sum_{y \in Y} \pi^*_r (y | x)^2
    \right )^2
    \right )  \nonumber \\ 
    & = 
    \frac{1}{M}
    \sum_{y \in Y} \frac{\pi^*_r (y | x)^4}{\pi_{\text{prop}}(y | x)}
    + \frac{M-1}{M}
    \left ( \sum_{y \in Y} \pi^*_r (y | x)^2
    \right )^2 \label{eq:variance_term1}
\end{align}
The final expression for the variance of the importance sampling estimator with 
$M$ samples can be computed by substituting \Cref{eq:variance_term1,eq:variance_term2} in \Cref{eq:variance_original}:
\begin{align}
    \text{Var}(\widehat{\mu}_{\text{IS}}) & = 
    \frac{1}{M}
    \sum_{y \in Y} \frac{\pi^*_r (y | x)^4}{\pi_{\text{prop}}(y | x)}
    + \frac{M-1}{M}
    \left ( \sum_{y \in Y} \pi^*_r (y | x)^2
    \right )^2 
    -  \left ( \sum_{y \in Y} {\pi^*_r (y | x)^2} \right)^2 \nonumber \\ 
    & = 
    \frac{1}{M}
    \left ( \sum_{y \in Y} \frac{\pi^*_r (y | x)^4}{\pi_{\text{prop}}(y | x)}
    - \left ( \sum_{y \in Y} \pi^*_r (y | x)^2
    \right )^2 
    \right ) \label{eq:variance_final}
\end{align}
\end{proof}

\Cref{lemma:estimator_variance} reveals two important details about our estimator. 
First, the variance of the estimator is inversely proportional to the number of candidate replies generated $M$. Thus, as $M$ increases the variance decreases and our estimator becomes more stable in simulating the target distribution $\pi^*_r(\cdot |  x )$. 
Second, the closer that the proposal distribution $\pi_{\text{prop}}(\cdot | x)$ is to the target distribution, 
the smaller the term  
$\sum_{y \in Y} \frac{\pi^*_r (y | x)^4}{\pi_{\text{prop}}(y | x)}$ will be, which reduces the variance of our estimator.

\Cref{lemma:estimator_convergence_rate} establishes how a smaller variance leads to faster convergence:

\begin{lemma} \label{lemma:estimator_convergence_rate}
Let $\mu_{\text{true}} = \mathbb{E}_{y \sim \pi_r^*(\cdot \mid x)}\bigl[\pi_r^*(y\mid x)\bigr]$, and $ \widehat{\mu}_{\text{IS}}$ be the importance sampling estimator as in \Cref{lemma:estimator_variance} using $M = O \left (  \frac{1}{\delta \varepsilon^2}  \sum_{y \in Y} 
\frac{\pi^*_r(y \mid x)^4}{\pi_{\text{prop}}(y \mid x)}   \right ) $ candidate replies. 
Then, for any LLM $\pi_{\text{gen}}$ such that $\pi_{\text{ref}}$ is absolutely continuous with respect to $\pi_{\text{gen}}$, we have that 
$| \widehat{\mu}_{\text{IS}} \;-\; \mu_{\text{true}} | < \varepsilon$
with probability at least $1 - \delta$.
\end{lemma}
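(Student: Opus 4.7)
The strategy is a direct application of Chebyshev's inequality, leveraging the fact that $\widehat{\mu}_{\text{IS}}$ is unbiased and that its variance was computed explicitly in \Cref{lemma:estimator_variance}. First, I would observe that from the computation in \Cref{eq:variance_term2}, we have $\mathbb{E}[\widehat{\mu}_{\text{IS}}] = \sum_{y \in Y}\pi^*_r(y\mid x)^2 = \mathbb{E}_{y \sim \pi^*_r(\cdot\mid x)}[\pi^*_r(y\mid x)] = \mu_{\text{true}}$, so the estimator is unbiased and Chebyshev's inequality applies with $\mu_{\text{true}}$ as the mean.

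Next, I would invoke Chebyshev's inequality to get
\begin{equation*}
\Pr\bigl[|\widehat{\mu}_{\text{IS}} - \mu_{\text{true}}| \geq \varepsilon\bigr] \;\leq\; \frac{\text{Var}(\widehat{\mu}_{\text{IS}})}{\varepsilon^2}.
\end{equation*}
Then I would substitute the closed-form expression from \Cref{eq:variance_final} and upper bound it by dropping the non-negative subtracted term $\bigl(\sum_y \pi^*_r(y\mid x)^2\bigr)^2$, obtaining
\begin{equation*}
\text{Var}(\widehat{\mu}_{\text{IS}}) \;\leq\; \frac{1}{M}\sum_{y \in Y} \frac{\pi^*_r(y\mid x)^4}{\pi_{\text{prop}}(y\mid x)}.
\end{equation*}
The absolute continuity assumption on $\pi_{\text{ref}}$ with respect to $\pi_{\text{gen}}$ (and hence with respect to $\pi_{\text{prop}} = \pi_{\text{gen}}$) ensures that every term $\pi^*_r(y\mid x)^4/\pi_{\text{prop}}(y\mid x)$ in this sum is well defined, since $\pi^*_r$ inherits absolute continuity from $\pi_{\text{ref}}$ via the closed form in \Cref{eq:opt_distribution}.

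Finally, plugging in $M = \Theta\!\left(\frac{1}{\delta\varepsilon^{2}}\sum_{y \in Y}\frac{\pi^*_r(y\mid x)^4}{\pi_{\text{prop}}(y\mid x)}\right)$ and choosing the hidden constant appropriately, the right-hand side of Chebyshev's bound becomes at most $\delta$, which yields the desired high-probability conclusion. The entire argument is routine; the only thing to be mildly careful about is making the unbiasedness step explicit (so that Chebyshev centers on the correct quantity $\mu_{\text{true}}$) and noting that the bound on $\text{Var}(\widehat{\mu}_{\text{IS}})$ is loose only by the subtracted term, which is harmless since we just need a sufficient condition on $M$. There is no real obstacle here; the lemma is essentially a quantitative restatement of \Cref{lemma:estimator_variance} through Chebyshev.
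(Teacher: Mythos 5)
Your proposal is correct and follows essentially the same route as the paper: both rely on the unbiasedness of $\widehat{\mu}_{\text{IS}}$ established in \Cref{lemma:estimator_variance}, apply Chebyshev's inequality centered at $\mu_{\text{true}}$, and then choose $M$ so that the resulting bound $\text{Var}(\widehat{\mu}_{\text{IS}})/\varepsilon^2$ is at most $\delta$. The only cosmetic difference is that you drop the subtracted nonnegative term to simplify the variance before solving for $M$, whereas the paper solves with the exact variance and then absorbs that term into the big-$O$; the conclusions are identical.
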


\begin{proof}
Let $\widehat{\mu}_{\text{IS}}$ be the importance-weighted estimator whose variance is given 
by \Cref{lemma:estimator_variance} as
\begin{align*}
    \text{Var}\bigl(\widehat{\mu}_{\text{IS}}\bigr) 
    \;=\; \frac{1}{M}
    \Bigl( 
      \sum_{y \in Y} 
        \frac{\pi^*_r(y \mid x)^4}{\pi_{\text{prop}}(y \mid x)} 
      \;-\; 
      \Bigl(\!\sum_{y \in Y} \pi^*_r(y \mid x)^2 \Bigr)^2 
    \Bigr).
\end{align*}

We have already shown in \Cref{lemma:estimator_variance} that the estimator is unbiased.
By Chebyshev's inequality, for any $\varepsilon > 0$,
\[
    \Pr\!\bigl[
       \bigl\lvert
         \widehat{\mu}_{\text{IS}} - \mu_{\text{true}}
       \bigr\rvert \;\ge \; \varepsilon
    \bigr] 
    \;\;\le\;\; \frac{\text{Var}(\widehat{\mu}_{\text{IS}})}{\varepsilon^2}
    \;=\;
    \frac{1}{M \cdot \varepsilon^2}
    \Bigl( 
      \sum_{y \in Y} 
        \frac{\pi^*_r(y \mid x)^4}{\pi_{\text{prop}}(y \mid x)} 
      \;-\; 
      \Bigl(\!\sum_{y \in Y} \pi^*_r(y \mid x)^2 \Bigr)^2 
    \Bigr)
\]
where $\mu_{\text{true}} = \mathbb{E}_{y \sim \pi_r^*(\cdot \mid x)}\bigl[\pi_r^*(y\mid x)\bigr] = \sum_{y \in Y} {\pi^*_r (y | x)^2} $.

Setting 
$\Pr\! [ \lvert \widehat{\mu}_{\text{IS}} - \mu_{\text{true}} \rvert \;\ge \; \varepsilon] < \delta $ and solving for $M$ yields:
\[
M 
\;=\;
\frac{1}{\delta \,\varepsilon^2}
\Biggl( 
   \sum_{y \in Y} 
      \frac{\pi^*_r(y \mid x)^4}{\pi_{\text{prop}}(y \mid x)} 
   -
   \Biggl(\sum_{y \in Y} \pi^*_r(y \mid x)^2 \Biggr)^2 
\Biggr) = 
O \left (  \frac{1}{\delta \varepsilon^2}  \sum_{y \in Y} 
      \frac{\pi^*_r(y \mid x)^4}{\pi_{\text{prop}}(y \mid x)}   \right ) 
.
\]
\end{proof}
Based on \Cref{lemma:estimator_convergence_rate}, our allocation rule converges to the quantity is is trying to approximate at a rate of $\sqrt{M}$.

\subsection{Bounding the Total Variation Distance} \label{sec:app:variation_distance}
In \Cref{appendix:allocation_rule_theory} we proved that MOSAIC converges in the limit to the theoretically optimal distribution for the platform's objective.  
Then, in \Cref{app:importancesampling} we showed how MOSAIC's allocation rule relates to importance sampling, and what that implies for the similarity of MOISAC's output to that of the optimal distribution $\pi_r^*$
Now we will also show that MOSAIC's probability estimates for any possible sequence converge to that of $\pi_r^*$ at a rate of $\sqrt{M}$

\begin{theorem} %
\label{thm:kl_finite_domain}
Let $\mathcal{Y}$ be the domain consisting of all sequences in a finite token vocabulary $T$ of length up to $L$.
Hence $\lvert \mathcal{Y}\rvert < \infty.$
Assume $0 \le w(y) \le C$ for all $y \in \mathcal{Y}$, where
\[
  w(y) 
  \;=\;
  \frac{\pi_r^*(y \mid x)}{\pi_{\text{prop}}(y \mid x)}.
\]
Consider $M$ i.i.d.\ draws $y_1, \ldots, y_M \sim \pi_{\text{prop}}(\cdot \mid x)$ and define the self-normalized importance-sampled distribution induced by \Cref{alg:static_mechanism_improved}:
\[
  \hat{\pi}_{r,M}(y \mid x)
  \;=\;
  \frac{\displaystyle\sum_{j=1}^M w(y_j)\,\mathbf{1}\{y_j = y\}}
       {\displaystyle\sum_{j=1}^M w(y_j)}.
\]
Then, for any $\varepsilon > 0$ and $\delta > 0$, if
$M = O
  \left ( 
  \frac{C^2 \cdot |\mathcal{Y}|^2 }{\varepsilon^2}\;
  \ln\!\Bigl(\tfrac{\lvert\mathcal{Y}\rvert}{\delta}\Bigr) \right ), 
$
with probability at least $1-\delta$ (over the random draws),
the Total Variation distance between $\hat{\pi}_{M}(\cdot \mid x)$ and $\pi_r^*(\cdot \mid x)$ satisfies
$ d_{\mathrm{TV}}\!\bigl(\hat{\pi}_{M},\,\pi_r^*\bigr) 
  \le 
 \varepsilon.$

\end{theorem}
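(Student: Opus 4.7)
The plan is to bound the total variation distance by analysing the self-normalized importance sampling estimator $\hat{\pi}_{r,M}(y\mid x)$ as a ratio of a random numerator and a random normalizer, controlling each by Hoeffding plus a union bound, and then converting entrywise errors into a TV bound.

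First, I would rewrite $\hat{\pi}_{r,M}(y \mid x) = \hat{N}(y)/\hat{Z}$, where $\hat{N}(y) := \frac{1}{M}\sum_{j=1}^M w(y_j)\mathbf{1}\{y_j = y\}$ and $\hat{Z} := \frac{1}{M}\sum_{j=1}^M w(y_j)$. Taking expectations over $y_j \sim \pi_{\text{prop}}(\cdot \mid x)$ gives $\mathbb{E}[\hat{N}(y)] = \pi_{\text{prop}}(y\mid x)\,w(y) = \pi_r^*(y\mid x)$ and $\mathbb{E}[\hat{Z}] = \sum_{y}\pi_r^*(y\mid x) = 1$. Since each summand lies in $[0,C]$ by assumption, Hoeffding's inequality gives, for every fixed $y$ and every $t > 0$,
\begin{align*}
\Pr\!\bigl[\,\bigl|\hat{N}(y) - \pi_r^*(y\mid x)\bigr| \ge t\,\bigr] \;\le\; 2\exp\!\bigl(-2Mt^2/C^2\bigr),
\end{align*}
and the same bound holds for $\bigl|\hat{Z}-1\bigr|$.

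Next, I would union-bound over all $y \in \mathcal{Y}$ together with the event for $\hat{Z}$, so that with probability at least $1 - 2(\lvert\mathcal{Y}\rvert+1)\exp(-2Mt^2/C^2)$ the inequalities $\bigl|\hat{N}(y) - \pi_r^*(y\mid x)\bigr| \le t$ and $\bigl|\hat{Z}-1\bigr|\le t$ hold simultaneously. Forcing this failure probability below $\delta$ requires $M \ge (C^2/2t^2)\,\ln(2(\lvert\mathcal{Y}\rvert+1)/\delta)$. On this good event with $t \le 1/2$ (so $\hat{Z} \ge 1/2$), I would use the algebraic identity
\begin{align*}
\hat{\pi}_{r,M}(y \mid x) - \pi_r^*(y\mid x) \;=\; \frac{\hat{N}(y) - \pi_r^*(y\mid x)\,\hat{Z}}{\hat{Z}},
\end{align*}
together with the triangle inequality, to conclude $\bigl|\hat{\pi}_{r,M}(y \mid x) - \pi_r^*(y\mid x)\bigr| \le 2t\bigl(1 + \pi_r^*(y\mid x)\bigr)$. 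Summing over $y$ and using $\sum_y \pi_r^*(y\mid x) = 1$ gives $d_{\mathrm{TV}}(\hat{\pi}_{r,M}, \pi_r^*) \le t(\lvert\mathcal{Y}\rvert+1)$, so choosing $t = \Theta(\varepsilon/\lvert\mathcal{Y}\rvert)$ yields the advertised sample complexity.

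The main obstacle is the $\lvert\mathcal{Y}\rvert^2$ scaling in $M$, which arises from two interacting sources: the union bound contributes only logarithmically in $\lvert\mathcal{Y}\rvert$, but the pointwise-to-TV conversion forces $t$ to shrink like $\varepsilon/\lvert\mathcal{Y}\rvert$, giving a $1/\lvert\mathcal{Y}\rvert^2$ factor inside the $1/t^2$ term. Both effects must be tracked carefully so that the final rate matches the stated $O\!\bigl(C^2\lvert\mathcal{Y}\rvert^2 \varepsilon^{-2}\ln(\lvert\mathcal{Y}\rvert/\delta)\bigr)$ expression; any sharper bound would require abandoning the entrywise route in favour of a direct $L^1$ or $\chi^2$ concentration argument, which is unnecessary given that $\lvert\mathcal{Y}\rvert$ is assumed finite.
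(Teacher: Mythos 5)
Your proposal is correct and follows essentially the same route as the paper's proof: Hoeffding's inequality applied to the unnormalized per-sequence estimates and to the normalizer, a union bound over all of $\mathcal{Y}$, and an entrywise-to-TV conversion with accuracy parameter $t = \Theta(\varepsilon/\lvert\mathcal{Y}\rvert)$ that produces the $C^2\lvert\mathcal{Y}\rvert^2\varepsilon^{-2}\ln(\lvert\mathcal{Y}\rvert/\delta)$ sample complexity. Your handling of the ratio via the identity $\hat{\pi} - \pi^* = (\hat{N} - \pi^*\hat{Z})/\hat{Z}$ is in fact slightly cleaner than the paper's interval-containment argument, but the decomposition and all key steps coincide.
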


\begin{proof}
First, we will show thaot for any $M \in \mathbb{N}$, MOSAIC is an unbiased estimator of $\pi^*_r$.

For each fixed $y\in \mathcal{Y}$, define
$X_j^{(y)} \;:=\; w(y_j)\,\mathbf{1}\{y_j = y\}$ to be a random variable that equals $w(y_j)$ if the $j-th$ draw from the proposal distribution is $y$. Intuitively, this is a weighted indicator for whether the $j$-th draw is equal to $y$. 

For any $y\in \mathcal{Y}$, the expected value of this indicator, with the expectation taken over the draw of $y_j \sim \pi_{\text{prop}}$
\[
  \mathbb{E}\bigl[X_j^{(y)}\bigr]
  \;=\;
  \sum_{z\in\mathcal{Y}} w(z)\,\mathbf{1}\{z=y\}\,\pi_{\text{prop}}(z)
  \;=\;
  w(y)\,\pi_{\text{prop}}(y)
  \;=\;
  \pi_r(y).
\]
Thus, $\hat{Q}_M(\{y\})=\frac{1}{M}\,\sum_{j=1}^M X_j^{(y)}$
is an unbiased estimator of  $\pi_r^*(y \mid x).$

Similarly,
$\hat{Q}_M(\mathcal{Y}) 
=
\frac{1}{M}\sum_{j=1}^M w(y_j)$
is an unbiased estimator of 1 (since $\mathbb{E}[w(y_j)] =1$) (assuming that $\pi^*_r$ is the optimal policy, i.e., a normalized LLM so that the probabilities sum up to one).

The self-normalized distribution can be written as
\[
  \hat{\pi}_{r,M}(y \mid x)
  \;=\;
  \frac{\hat{Q}_M(\{y\})}{\hat{Q}_M(\mathcal{Y})}.
\]

\emph{Hoeffding's inequality} for bounded random variables tells us that for any $\varepsilon'>0$:
\[
  \Pr\!\Bigl[
    \bigl|\hat{Q}_M(\{y\}) - \pi_r^*(y\mid x)\bigr|
    \;\ge\;\varepsilon'
  \Bigr]
  \;\;\le\;\;
  2 \exp\!\Bigl(-\,\frac{2\,M\,{\varepsilon'}^2}{C^2}\Bigr).
\]

For each $y\in\mathcal{Y}$, we have the same deviation event, so by a union bound over all  $y \in \mathcal{Y}$:
\[
  \Pr\!\Bigl[
    \exists \,y \in \mathcal{Y}:\;
      \bigl|\hat{Q}_M(\{y\}) - \pi_r^*(y\mid x)\bigr|
    \;\ge\;\varepsilon'
  \Bigr]
  \;\;\le\;\;
  \sum_{y \in \mathcal{Y}} 
    2 \exp\!\Bigl(-\,\frac{2M{\varepsilon'}^2}{C^2}\Bigr)
  \;\;=\;\;
  2\, |\mathcal{Y}|\,
    \exp\Bigl(-\frac{2M{\varepsilon'}^2}{C^2}\Bigr) \le \delta_1.
\]
Thus, using 
$M = O\left ( \frac{C^2}{{\varepsilon'}^2}\;\ln\!\Bigl(\tfrac{\lvert\mathcal{Y}\rvert}{\delta_1}\Bigr) \right )$ candidate replies, with probability at least 
$1 - \delta_1$ over the draw of those replies,
we have \emph{simultaneously} for all $y\in\mathcal{Y}$,
  $|\hat{Q}_M(\{y\}) - \pi_r^*(y\mid x)| \le \varepsilon'.$

By a similar argument, applying Hoeffding's inequality to $\hat{Q}_M(\mathcal{Y})$, which is an unbiased estimator of $1$:

\[
  \Pr\!\Bigl[
    \bigl|\hat{Q}_M(\mathcal{Y}) - 1\bigr|
    \;\ge\;\varepsilon'
  \Bigr]
  \;\;\le\;\;
  2\,
  \exp\!\Bigl(
    -\,\frac{2M\,{\varepsilon'}^2}{C^2}
  \Bigr).
\]
Hence, with high probability $1 - \delta_2$, $\hat{Q}_M(\mathcal{Y})$ is also within $\pm\varepsilon$ of 1. A union bound over both events (all $y\in\mathcal{Y}$ plus $\hat{Q}_M(\mathcal{Y})$) yields a final event $E_\varepsilon'$ with probability $1 - (\delta_1 + \delta_2)$  on which:
\[
  \hat{Q}_M(\{y\}) \,\in\, 
    [\pi_r^*(y)-\varepsilon',\;\pi_r^*(y)+\varepsilon'],
  \quad
  \hat{Q}_M(\mathcal{Y})
  \,\in\, [1-\varepsilon',\;1+\varepsilon'],
  \quad
  \forall\,y\in\mathcal{Y}.
\]

On this good event $E_{\varepsilon'}$, for each $y\in \mathcal{Y}$:
\[
  \hat{\pi}_{M}(y | x)
  \;=\;
  \frac{\hat{Q}_M(\{y\})}{\hat{Q}_M(\mathcal{Y})} \in \left [\frac{\pi_r^*(y | x) - \varepsilon'}{1 + \varepsilon}, \frac{\pi_r^*(y | x) + \varepsilon'}{1 - \varepsilon'} \right ]
\]

Recall the total variation (TV) distance between two discrete distributions $P$ and $Q$ on $\mathcal{Y}$ is
\[
  d_{\mathrm{TV}}(P,\,Q)
  \;=\;
  \frac12 \,\sum_{y \in \mathcal{Y}} \bigl| P(y) - Q(y)\bigr|.
\]

We aim to show that on the event $E_\varepsilon'$ (which holds with probability at least $1 - (\delta_1 + \delta_2)$), 
the quantity
\[
  \sum_{y \in \mathcal{Y}}
    \bigl|\hat{\pi}_{M}(y | x) \;-\; \pi_r^*(y | x)\bigr|
\]
is at most $O(\varepsilon')$.

On $E_{\varepsilon'}$, we have, for each $y \in \mathcal{Y}$,
\[
  \hat{\pi}_{M}(y | x)
  \;\;=\;\;
  \frac{\hat{Q}_M(\{y\})}{\hat{Q}_M(\mathcal{Y})}
  \;\;\in\;\;
  \biggl[
     \frac{\pi_r^*(y| x) - \varepsilon'}{1 + \varepsilon'},
     \;\;
     \frac{\pi_r^*(y| x) + \varepsilon'}{1 - \varepsilon'}
  \biggr].
\]
Thus: 
\begin{align}
  \hat{\pi}_{M}(y | x) - \pi_r^* (y | x)
& \in
  \biggl[
     - \frac{\varepsilon' \pi_r^*(y | x)}{1 + \varepsilon'} - \frac{\varepsilon'}{1 + \varepsilon'},
     \;\;
     \frac{\varepsilon' \pi_r^*(y | x)}{1 - \varepsilon'} + \frac{\varepsilon'}{1 - \varepsilon'}
  \biggr] \\ 
& \in
  \biggl[
     - \frac{2 \varepsilon }{1 + \varepsilon},
     \;\;
     \frac{2 \varepsilon }{1 - \varepsilon}
  \biggr]
\end{align}

Thus, conditioned on the good event $E_{\varepsilon'}$, we have $|\hat{\pi}_{M}(y) - \pi_r^*(y)\bigr| = O (\varepsilon')$ by a Taylor expansion for small $\varepsilon'$.

Therefore, on the event $E_\varepsilon'$,
\[  d_{\mathrm{TV}}\!\bigl(\hat{\pi}_{M},\,\pi_r^*\bigr) = \frac{1}{2}
  \sum_{y \in \mathcal{Y}}
     \bigl|\hat{\pi}_{M}(y) - \pi_r^*(y)\bigr|
  \le 
  \frac{1}{2}
  \sum_{y \in \mathcal{Y}} O(\varepsilon')
  =
  O\bigl(|\mathcal{Y}|\cdot \varepsilon'\bigr).
\]

Setting $\varepsilon' = \frac{\varepsilon}{|\mathcal{Y}|}$ to achieve the target error $\varepsilon$ concludes the proof.

\end{proof}

\newpage
\clearpage

\section{Details from \Cref{section:payments}}  \label{appendix:section:payment_details}
In this section, we present all omitted details from \Cref{section:payments}.

\subsection{Omitted Proofs from \Cref{subsection:payments_through_convexity}}

\begin{proof}[\Cref{theorem:truthful_payments} Proof]
Let $\mathbf{\widehat{r}}_i = (\widehat{r}_i(x, y_1), \dots, \widehat{r}_i (x, y_M))$ be the reward reports of advertiser $i$ for the $M$ generated candidate sequences. 
Then, for MOSAIC's allocation rule, holding the candidate sequences and the reports of all other  advertisers fixed, 
the probability of returning each of the $M$ candidate sequences as a function of $i$'s reports is:
\begin{gather}
    \pi(\bold{r}_i; \Vec{\beta}_{-i}) = \text{softmax} \left ( 
    \frac{\bold{r}_i}{\tau} + \bold{\beta}_{-i}
    \right ),
\end{gather}
where
$\bm{\beta}_{-i, j} = 
\frac{\sum_{k \in N \setminus \{i \} } \widehat{r}_k(x, y_j)}{\tau} + \log \frac{\pi_{\text{ref}}(y_j | x)}{\pi_{\text{con}}(y_j | x ; c)}
$. 
Importantly
$\bm{\beta}_{-i}$
is an $M$-dimensional vector that does not depend on advertiser $i$'s reports. 

We would like to equip 
$\pi (\cdot, \bm{\beta}_{-i})$ with a payment rule $p( \cdot ; \bm{\beta}_{-i} )$ 
so that the resulting mechanism $(\pi (\cdot, \bm{\beta}_{-i}), p( \cdot ; \bm{\beta}_{-i} ) ) $ will be strategyproof. This requires that $\pi (\cdot, \bm{\beta}_{-i})$ have a property known as \textit{cyclic monotonicity}. 
Equivalently, $\pi (\cdot, \bm{\beta}_{-i})$ must be the (sub)gradient of advertiser $i$'s utility for bidding truthfully in the mechanism $U(\bm{r}_i; \bm{\beta}_{-i})$, and that utility function must be convex~\citep{frongillo2021,rochet1987,Myerson81Optimal}.

It is easy to verify that for the function class: 
\begin{gather}
    U_C(\bm{r}_i ; \bm{\beta}_{-i}) = \tau \log \left ( 
    \sum_{j = 1}^M \exp \left ( \frac{r_i(x, y_j)}{\tau} + \bm{\beta}_{-i,j}   \right ) 
    \right ) + C , \; C \in \mathbb{R} \label{equation:utility_myerson_appendix}
\end{gather}
the allocation rule $\pi(\bm{r}_i ; \bm{\beta}_{-i})$ is a gradient of 
$U_{C}(\bm{r}_i ; \bm{\beta}_{-i})$. 
Additionally, $U_C(\bm{r}_i ; \bm{\beta}_{-i})$ is convex in $\bm{r_i}$:
The exponential function $e^x$ is (strictly) convex, because its second derivative is positive. 
The transformation $\frac{r_i(x, y_j)}{\tau} + \bm{\beta}_{-i,j} $ is an affine transformation of $r_i(x, y_j)$, and affine transformations preserve convexity.
Finally, it is well-known that the LogSumExp function is convex.

Thus, for any $\bm{\beta}_{-i}$ and
for any set of generated candidate sequences, reporting truthfully maximizes advertiser $i$'s expected utility, with the expectation taken over
the draw of the final sequence from the set of candidate sequences. 
Adopting the quasi-linear utility model, advertiser $i$'s payment is: 
\begin{gather}
    U_C(\bm{r}_i ; \bm{\beta}_{-i} ) = \pi(\bm{r}_i ; \bm{\beta}_{-i}) \cdot \bm{r}_i - p(\bm{r}_i ; \bm{\beta}_{-i}) \nonumber \\
    p(\bm{r}_i ; \bm{\beta}_{-i}) = \pi(\bm{r}_i ; \bm{\beta}_{-i}) \cdot \bm{r}_i - U_C(\bm{r}_i ; \bm{\beta}_{-i})   \nonumber \\ 
    p(\bm{r}_i ; \bm{\beta}_{-i}) =  \pi(\bm{r}_i ; \bm{\beta}_{-i}) \cdot \bm{r}_{i} - \tau \log \left ( 
    \sum_{j = 1}^M \exp \left ( \frac{r_i(x, y_j)}{\tau} +  \bm{\beta}_{-i,j}  \right ) \right ) - C, \; C \in \mathbb{R} 
        \label{eq:myerson_payments_static_mech}
\end{gather}
 
\end{proof}

\newpage
\clearpage
\subsection{Differences from Standard Auction Settings} \label{appendix:full_differences}

Standard auction environments typically rely on a set of assumptions that simplify mechanism design; however, these assumptions do not apply to auctions for LLM-generated content. In this section, we detail these assumptions and discuss why they are inapplicable in our context.

First, in a standard auction setting,
it is common to assume that the agents' valuation functions satisfy free disposal, i.e.,
$v_i(S) \ge v_i(S') \; \forall S \supseteq S' , S, S' \supseteq \mathcal{I}$. 
The interpretation of free disposal is that an agent can discard any items she is allocated that she is not interested in.
Free disposal combined with the fact that an agent has zero value for the empty bundle mean that her value for any outcome is weakly positive. 
Second, in most auction environments, the allocation rule is different for different agents: each agent will get allocated her own bundle of items, and we can assume that she is indifferent to the allocation of items to the other agents. 

As detailed in \citet{rafailov2023dpo}, assuming that an agent's LLM $\pi_i$ was trained to maximize her reward function (and regularized with respect to its KL divergence from some reference LLM, which we assume to be the same as the auctioneer's reference LLM),
there is a one-to-many mapping between an advertiser's optimal LLM, and her implicit reward function. That mapping is: 
\begin{gather}
    r_i(x, y) = \tau_i \log \frac{\pi_i (y | x)}{\pi_{\text{ref}} (y | x)} + \log Z_i (x) 
    \label{eq:dpo_reward}
\end{gather}
where $Z_i(x)$ is a prompt-dependent constant, 
and $\tau_i$ is the regularization hyperparameter of advertiser $i$, similar to the one in \Cref{eq:objective}. 
All functions in the class defined in \Cref{eq:dpo_reward} are equivalent, in the sense that they induce exactly the same LLM \citep{rafailov2023dpo}. This has two implications: 
First, unlike standard auction environments, an agent's reward can go negative -- there is nothing equivalent to the free disposal property. 
Setting $Z_i(x)$ to zero 
(which is equivalent to normalizing the induced probabilities by the LLM \citep{rafailov2023dpo}), 
the agent's reward is negative for any sequence for which her LLM assigns a lower probability than $\pi_{\text{ref}}$. 

Second, especially in the online advertising application, an agent's expected utility for not participating in the auction is negative: 
if advertiser $i$ does not participate in the auction, her payment is zero, but her expected value for the outcome is 
\begin{gather}
 \pi(\bm{0}; \bm{\beta}_{-i}) \cdot \bm{r}_i = \pi(\bm{\beta}_{-i}) \cdot \bm{r}_i \label{eq:expected_allocation_value_wihtout_i}
\end{gather}
The other advertisers have very low rewards for the sequences that mention advertiser $i$: assuming their LLMs have been properly trained, they will evaluate all sequences that explicitly mention a different, possibly competing brand, as unlikely. 
Thus, based on \Cref{eq:dpo_reward} the corresponding advertisers have very low rewards for those sequences and conversely, advertiser $i$ has low rewards for the sequences that the other advertisers have high rewards for. 
But based on \Cref{eq:opt_distribution}, if advertiser $i$ does not participate in the auction, 
$\pi(\bm{0}; \bm{\beta}_{-i})$ will assign high probabilities to sequences for which  $i$ has low rewards for.
Thus, \Cref{eq:expected_allocation_value_wihtout_i} implies that, unlike standard auction environments, the advertiser's expected reward and utility for not participating in the mechanism is negative.

\newpage
\clearpage

\subsection{Our mechanism is ``almost individually rational''} \label{app:almost_ir}
First, we explain why the standard notion of individual rationality (i.e., weakly positive utility from participation in the mechanism) encountered in most auction settings is impossible to achieve in this domain while converging to the optimal distribution and maintaining incentive compatibility.
Then, we explain how, with our payment offset, our mechanism is ``almost IR:''
In \Cref{lemma:zero_bid_offset_ex_interim_utility_zero} we prove that the ex-interim utility of an advertiser who has zero reward for all candidate sequences and bids truthfully is deterministically zero, i.e.,  advertisers that do not contribute to the social welfare (but also do not detract from it) have zero utility. 
Similarly, in \Cref{lemma:zero_bid_offset_ex_interim_utility_positive} we prove that 
if an agent's reward for all candidate sequences is (weakly) positive, then her ex-interim utility is (weakly) positive. 

\paragraph{Why is indivual rationality (IR) impossible?} 
\textit{Individual rationality} (IR) stipulates that an agent gains more utility by participating and bidding truthfully in a mechanism than by not participating at all. Typically, if an agent's utility for non-participation is zero, participating should yield weakly positive utility. However, this simplification does not apply in our setting.

As discussed in \Cref{subsec:differences}, advertiser $i$'s reward for any sequence $y$ can be arbitrarily negative (\Cref{eq:dpo_reward}). The same is true for the utility from truthful participation, as outlined in \Cref{equation:utility_myerson_appendix}. 
To ensure a positive utility for every advertiser in our mechanism, an offset would need to be infinitely large or dependent on advertiser \(i\)'s reports. 
But then the mechanism's allocation rule would no longer be the gradient of advertiser $i$'s utility with respect to her reports, which would destroy strategyproofness \citep{frongillo2021,rochet1987,Myerson81Optimal}.

It is important to note that this challenge is inherent not just to our mechanism but to any mechanism in this setting that operates with a fixed set of sequences, aims to approximate the optimal distribution, and maintains strategyproofness. 
Under these conditions, the only allocation rule that approximates the theoretically optimal distribution (\Cref{eq:opt_distribution}) is that of our mechanism. 
However, this uniquely determines the  advertisers' utilities, up to a constant factor, as described in \Cref{equation:utility_myerson_appendix} \citep{frongillo2021, rochet1987, Myerson81Optimal}.

\begin{lemma} \label{lemma:zero_bid_offset_ex_interim_utility_zero}
For the payment offset 
$C =  - \tau \log \left ( \sum_{j = 1}^M \exp \left (  \bm{\beta}_{-i,j}   \right ) 
\right )$
if advertiser $i$'s reward for all candidate sequences is zero, 
then her ex-interim utility is deterministically zero, for all 
$\bm{\beta}_{-i} \in \widehat{R}_{-i}$.
\end{lemma}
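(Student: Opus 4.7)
The plan is to obtain this lemma as an immediate calculation from \Cref{equation:utility_myerson_appendix}, since the condition that $r_i(x,y_j) = 0$ for every candidate sequence collapses the log-sum-exp expression into exactly the quantity that the offset $C$ is designed to cancel. Because we work ex-interim, the set of candidate sequences $\{y_j\}_{j=1}^M$ and the reports $\widehat{r}_{-i}$ (equivalently, the vector $\bm{\beta}_{-i}$) are fixed, so the utility is a deterministic number rather than a random variable over sequence draws.

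First, I would note that by \Cref{theorem:truthful_payments} (in the form of \Cref{equation:utility_myerson_appendix} in the appendix), when advertiser $i$ reports truthfully her ex-interim expected utility is
\begin{equation*}
    U_C(\bm{r}_i; \bm{\beta}_{-i}) = \tau \log \left( \sum_{j=1}^M \exp\!\left( \frac{r_i(x,y_j)}{\tau} + \bm{\beta}_{-i,j} \right) \right) + C.
\end{equation*}
Substituting the hypothesis $r_i(x, y_j) = 0$ for all $j \in \{1, \dots, M\}$ makes every exponent reduce to $\bm{\beta}_{-i,j}$, so the expression simplifies to $\tau \log\bigl(\sum_{j=1}^M \exp(\bm{\beta}_{-i,j})\bigr) + C$.

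Next, I would plug in the specified offset $C = -\tau \log\bigl(\sum_{j=1}^M \exp(\bm{\beta}_{-i,j})\bigr)$. The two terms cancel identically, yielding $U_C(\bm{0}; \bm{\beta}_{-i}) = 0$. Since this holds for every vector $\bm{\beta}_{-i} \in \widehat{R}_{-i}$ and every realization of the candidate sequences (both of which enter only through $\bm{\beta}_{-i}$ and the definition of $C$), the claim follows.

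There is essentially no obstacle here: the lemma is a one-line consequence of the construction of $C$, and the only subtlety worth flagging in the write-up is that ``ex-interim'' here means conditioning on both $\widehat{r}_{-i}$ and the drawn candidate pool $\{y_j\}$, so that the zero utility is attained deterministically rather than merely in expectation over the final sampled reply.
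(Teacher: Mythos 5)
Your proposal is correct and follows essentially the same route as the paper: both arguments substitute $\bm{r}_i = \bm{0}$ into the Rochet utility/payment expressions and observe that the log-sum-exp term is cancelled exactly by the offset $C$. The only cosmetic difference is that the paper separately verifies that the realized reward and the payment are each deterministically zero (rather than cancelling inside the combined utility formula), which makes the ``deterministically'' claim slightly more explicit --- a point you already flag correctly in your closing remark.
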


\begin{proof}[\Cref{lemma:zero_bid_offset_ex_interim_utility_zero} Proof.]
First, note that for all $\bm{\beta}_{-i} \in \widehat{R}_{-i}$, advertiser $i$'s expected reward for the outcome is zero, as 
$\pi(\bm{r}_i ; \bm{\beta}_{-i}) \cdot \bm{r}_{i} = 
\pi(\bm{r}_i ; \bm{\beta}_{-i}) \cdot \bm{0} = 0.
$
Additionally, advertiser $i$'s reward for the realized outcome will deterministically be zero, as her reward for all generated candidate sequences is zero. 
Finally, note that by setting $\bm{r}_i = \bm{0}$ in \Cref{eq:myerson_payments_static_mech} with the offset $C$ set as in \Cref{section:bidding zero offset}, we have that the advertiser $i$'s payment is also deterministically zero. 
Thus, an advertiser with zero reward for all generated candidate sequences who reports her rewards truthfully has deterministically zero reward for the final outcome and zero payments, and her utility is also deterministically zero.
 
\end{proof}

\begin{lemma} \label{lemma:zero_bid_offset_ex_interim_utility_positive}
For the payment offset 
$C =  - \tau \log \left ( \sum_{j = 1}^M \exp \left (  \bm{\beta}_{-i,j}   \right ) 
\right )$
if advertiser $i$'s reward for all candidate sequences is positive, 
then her ex-interim utility is positive, for all reports
$\bm{\beta}_{-i} \in \widehat{R}_{-i}$.
\end{lemma}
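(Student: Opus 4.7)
The plan is to substitute the specified offset $C$ directly into the truthful-reporting utility expression from \Cref{equation:utility_myerson_appendix} and show that it simplifies to the logarithm of a ratio whose numerator dominates its denominator term-by-term whenever the reward vector $\bm{r}_i$ is (weakly) positive.

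Concretely, the first step is to plug $C = -\tau \log \left( \sum_{j=1}^M \exp(\bm{\beta}_{-i,j}) \right)$ into \Cref{equation:utility_myerson_appendix}, obtaining
\begin{gather*}
    U_C(\bm{r}_i ; \bm{\beta}_{-i})
    = \tau \log \left(
    \frac{\sum_{j=1}^M \exp\left( \frac{r_i(x, y_j)}{\tau} + \bm{\beta}_{-i,j} \right)}
         {\sum_{j=1}^M \exp\left( \bm{\beta}_{-i,j} \right)}
    \right).
\end{gather*}
The second step is to observe that factoring each numerator term as $\exp(r_i(x,y_j)/\tau) \cdot \exp(\bm{\beta}_{-i,j})$ and using that $r_i(x, y_j) \ge 0$ implies $\exp(r_i(x, y_j)/\tau) \ge 1$ for every $j$, gives term-by-term domination of the numerator over the denominator. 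Hence the ratio inside the logarithm is at least $1$, so $U_C(\bm{r}_i ; \bm{\beta}_{-i}) \ge 0$, with strict inequality whenever $r_i(x, y_j) > 0$ for at least one $j$ with positive weight (which is the case if positivity is interpreted strictly).

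The final step is to note that by \Cref{theorem:truthful_payments}, truthful reporting is a dominant strategy, so this expression is exactly advertiser $i$'s ex-interim expected utility under her optimal strategy, which completes the argument. I do not anticipate a significant obstacle here: the lemma is essentially a direct monotonicity statement about the LogSumExp function once the offset is substituted, and the only subtle point is confirming that the offset $C$ used here is the one induced by the construction in \Cref{section:bidding zero offset}, i.e., that $C = -\widetilde{U}_i(0, \widehat{r}_{-i}; \pi_{\text{ref}}, \pi_{\text{con}})$ evaluated at $\bm{r}_i = \bm{0}$ reproduces exactly the offset stated in the lemma --- which follows immediately by setting $\bm{r}_i = \bm{0}$ in \Cref{equation:utility_myerson_appendix}.
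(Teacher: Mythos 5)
Your proposal is correct, but it reaches the conclusion by a different route than the paper. The paper's proof is a monotonicity argument built on the machinery already in place: it invokes \Cref{lemma:zero_bid_offset_ex_interim_utility_zero} to establish that the offset utility vanishes at $\bm{r}_i = \bm{0}$, and then uses the fact (from \Cref{theorem:truthful_payments} and the Rochet characterization) that the allocation rule is the gradient of the truthful utility; since the allocation rule is a probability vector and hence componentwise non-negative, moving from $\bm{0}$ to any $\bm{r}_i \ge \bm{0}$ cannot decrease the utility. Your proof instead substitutes the offset explicitly and observes that the resulting ratio of LogSumExp terms is at least one by term-by-term domination, since $\exp(r_i(x,y_j)/\tau) \ge 1$ when $r_i(x,y_j) \ge 0$. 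Both arguments are sound and ultimately rest on the same monotonicity of LogSumExp; yours is more elementary and self-contained (it needs neither \Cref{lemma:zero_bid_offset_ex_interim_utility_zero} nor the gradient characterization), and as a bonus it essentially rederives the alignment formula of \Cref{lemma:ex_interim_utility_alignment}, since the ratio you obtain is exactly $\sum_j \exp(r_i(x,y_j)/\tau)\,\pi_{\widehat{r}_{-i}}(y_j \mid x)$. The paper's version buys brevity and makes explicit that the result is a structural consequence of cyclic monotonicity rather than of the particular closed form. Your closing check that the stated offset coincides with $-\widetilde{U}_i(0, \widehat{r}_{-i}; \pi_{\text{ref}}, \pi_{\text{con}})$ evaluated at $\bm{r}_i = \bm{0}$ is the right sanity check and is indeed immediate.
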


\begin{proof}
\Cref{lemma:zero_bid_offset_ex_interim_utility_zero} establishes that when advertiser \(i\)'s reward for all candidate sequences is zero, her utility for truthfully bidding in the mechanism, denoted as \(U(\bm{0} ; \bm{\beta}_{-i})\), is zero for all possible reports of the other  advertisers \(\bm{\beta}_{-i} \in \widehat{R}_{-i}\).

Furthermore, \Cref{theorem:truthful_payments} shows that the mechanism's allocation rule corresponds to the gradient of advertiser \(i\)'s utility when bidding truthfully. 
Because the allocation rule is non-negative, the gradient of advertiser $i$'s utility for bidding truthfully is also non-negative. 

Thus, if advertiser \(i\)'s rewards for all candidate sequences are weakly positive, and considering the non-negative gradient of her utility, her ex-interim utility under truthful bidding must be positive, irrespective of the other  advertisers' reports \(\bm{\beta}_{-i}\). 
 
\end{proof}

\begin{corollary} \label{corollary:positive_reward_positive_utility}
For the payment offset 
$C =  - \tau \log \left ( \sum_{j = 1}^M \exp \left ( \bm{\beta}_{-i,j}   \right ) 
\right )$
if the distribution $\pi_{\text{con}}$ only generates candidate sequences for which advertiser $i$'s reward is positive, 
then the ex-ante expected utility of the advertiser is positive. 
\end{corollary}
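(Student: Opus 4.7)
My approach is to obtain the corollary by simply taking the expectation of the ex-interim bound from \Cref{lemma:zero_bid_offset_ex_interim_utility_positive} over the randomness in candidate-sequence generation. The ex-interim utility in that lemma is conditioned on a fixed realization of the $M$ candidate sequences $\{y_j\}_{j=1}^M$; the ex-ante utility is by definition the expectation of that quantity over the draw $\{y_j\}_{j=1}^M \sim \pi_{\text{con}}(\cdot \mid x; c)$. So the plan is essentially to lift a pointwise inequality to an inequality in expectation.

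Concretely, I would proceed as follows. First, I would invoke the hypothesis that $\pi_{\text{con}}$ is supported on $\{y : r_i(x, y) > 0\}$, which implies that with probability one every sampled $y_j$ satisfies $r_i(x, y_j) > 0$, so the induced report vector $\bm{r}_i = (r_i(x, y_1), \ldots, r_i(x, y_M))$ is coordinate-wise strictly positive on every realization. Second, I would apply \Cref{lemma:zero_bid_offset_ex_interim_utility_positive} pointwise: for each such realization and for any report profile $\bm{\beta}_{-i} \in \widehat{R}_{-i}$ of the other advertisers, the ex-interim utility $\widetilde{U}_i(r_i, \widehat{r}_{-i}; \pi_{\text{ref}}, \pi_{\text{con}})$ under truthful reporting (with the zero-bid offset $C$) is strictly positive. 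Third, I would conclude by monotonicity of expectation: since the integrand is strictly positive almost surely, the ex-ante utility $\mathbb{E}_{\{y_j\}_{j=1}^M \sim \pi_{\text{con}}(\cdot \mid x; c)}\bigl[\widetilde{U}_i(r_i, \widehat{r}_{-i}; \pi_{\text{ref}}, \pi_{\text{con}})\bigr]$ is strictly positive as well.

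There is no substantial obstacle; this corollary is the $\mathbb{E}[\cdot]$-image of an already established pointwise inequality. The only subtlety worth double-checking is the distinction between weak and strict positivity: the preceding lemma delivers strict positivity from strict positivity of the rewards (via the strictly positive softmax gradient of the utility), and the phrase ``only generates'' in the hypothesis rules out any exceptional sequence with zero reward on a set of positive measure, so strict positivity carries through to the expectation without any measure-theoretic subtleties.
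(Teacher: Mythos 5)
Your proposal is correct and matches the paper's own proof: both argue that the ex-ante utility is the expectation of the ex-interim utility over the draw of the candidate sequences from $\pi_{\text{con}}$, and that this integrand is positive pointwise by \Cref{lemma:zero_bid_offset_ex_interim_utility_positive} under the hypothesis that every generated sequence has positive reward. Your added remark on strict versus weak positivity is a reasonable clarification but does not change the argument.
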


\begin{proof}[\Cref{corollary:positive_reward_positive_utility} Proof.]
This follows immediately from the fact the the fact that the ex-ante utility of the advertiser is 
the expectation of her ex-interim utility with respect to her reward for the generated sequences, and the fact that the second quantity is positive whenever the reward of the advertiser for all candidate sequences is positive from \Cref{lemma:zero_bid_offset_ex_interim_utility_positive}.
 
\end{proof}

\subsection{``What you give is what you get''} \label{app:market_unraveling}
As we explained in \Cref{section:what_you_give_is_what_you_get},
our allocation rule, which is the only one over a finite set of replies that converges to the optimal LLM,
is also the (sub)gradient of the utility to ensure truthfulness (Rochet, 1987). Because the allocation
rule is the same for all advertisers, their utilities must also be the same, up to advertiser-specific
offsets, as indicated by \Cref{equation:utility_myerson_spelled_out_appendix}:
\begin{align}
    U_C(\bm{r}_i ; \bm{\beta}_{-i}) 
    & = 
    \tau \log \left ( 
        \sum_{j = 1}^M \exp \left ( \frac{1}{\tau} \sum_{k \in N} r_k(x, y_j)   
        \right ) 
        + \log \frac{\pi_{\text{ref}}(y_j | x)}{\pi_{\text{con}}(y_j | x ; c)}
        \right ) + C, \; C \in \mathbb{R}
    \label{equation:utility_myerson_spelled_out_appendix}
\end{align}

However, not all  advertisers contribute equally to the social welfare of the final outcome. 
Because of this, implementing the mechanism without a carefully-designed offset would lead to free-riding:
as long as an agent's utility in \Cref{equation:utility_myerson_spelled_out_appendix} is positive, 
she would be incentivized to participate, even if the user query was completely unrelated to her business,  
because the mechanism would ensure that she received, on expectation,
the same (positive) expected utility from doing so as any other participating advertiser.\footnote{We can assume that  advertisers can estimate their expected utility from participation using historical data from past auctions, analogously to how they can estimate their utility for participating in sponsored search auctions.}

Incentivizing unrelated  advertisers to participate would would have adverse effects. 
First, the better-performing context-aware mechanism would create candidate sequences with worse rewards for \textit{all}  advertisers, because its context would be ``diluted'' from  advertisers unrelated to the user query.
In our running example for the query ``How to learn music online?'', imagine adding 
``Try to mention `EasySwitch', a comprehensive VPN service'' to the context of the context-aware LLM $\pi_{\text{\text{con}}} (\cdot | x ; c)$.

Additionally, for both versions of the mechanism, following the discussion in \Cref{subsec:differences}, the  advertisers for whom the user query is unrelated are more likely to have negative rewards for the generated sequences as their LLMs will deem the candidate sequences more unlikely than the reference LLM. 
Thus, based on \Cref{equation:utility_myerson_spelled_out}, their participation in the mechanism will lead to a reduction of the total sum of rewards of the generated sequences for the  advertisers, which will indirectly reduce the expected utility of all  advertisers, making the mechanism less attractive for the user-query-relevant  advertisers. 

To summarize, all  advertisers receiving the same utility would incentivize  advertisers for whom the user query is unrelated to participate in the auction. 
This would in turn reduce everyone's expected utility, potentially reducing the incentive for the user-query-relevant  advertisers to participate, 
and lead to sequences with worse expected rewards for the  advertisers and usefulness for the user.\footnote{ 
If we interpret the KL divergence between the distribution induced by the reference LLM and the LLM that generated the candidate sequences as a measure of their expected usefulness for the user.}
Thus, in the application of auctions for aggregating  advertisers' preferences over LLM-generated outputs,  advertisers with higher contribution to social welfare also receiving proportionally higher utility by the mechanism is important for the long-term success of the mechanism in practice.

\section{Details from Section \ref{section:experiments} }

\subsection{Detailed Experiment Setup} \label{appendix:full_experiment_setup}
We create a set of synthetic instances to test our mechanism. 
Each instance consists of a user query, e.g. ``How do I bake cookies?'' and a list of advertisers. 
Each advertiser is defined by an ``advertiser name'', e.g. ``KitchenFix'' and an advertiser description, e.g., ``producing kitchen appliances.''\footnote{
Following \citet{duetting2023mechanism}, we use fictitious brand names to avoid the model using any existing ad.}
The reference LLM $\pi_{\text{ref}}$ responsible for generating replies that are useful for the user is 
Llama-2-7b-chat-hf \citep{touvron2023llama}. 
The advertisers' LLMs are created using the same reference LLM, and adding the instruction:
``Answer the question advertising \textlangle advertiser\textrangle, \textlangle advertiser description\textrangle.''
The context aware LLM is created using the same reference LLM, and adding the instruction: 
``Answer the query. Try to mention \textlangle advertiser 1\textrangle, who \textlangle advertiser description 1\textrangle \; and \textlangle advertiser 2\textrangle, 
who \textlangle advertiser description 2\textrangle.''

Following \citep{rafailov2023dpo} the reward function of advertiser $i$ is set to 
$r_i(x, y) = \log \frac{\pi_i (y | x)}{\pi_{\text{ref}} (y | x)},  $
where $\pi_i$ is advertiser $i$'s LLM, i.e., we set $\tau_i = 1, Z_i(x) = 1$ for all advertisers and for all user prompts in \Cref{eq:dpo_reward}.\footnote{
Note that we cannot estimate the true value of $Z_i(x)$. 
However, it is a query-dependent constant. 
Given that our experiments focus on the reward difference of different sequences for the same query, this is a non-issue. 
} 
For the auctioneer's objective as defined in \Cref{eq:objective} we set $\tau = 1$, balancing between the advertisers' expected rewards for the generated sequences and the sequences' divergence from the reference LLM responsible for generating useful replies for the user. 
Thus, the optimal policy according to \Cref{eq:opt_distribution} becomes: 
\begin{gather} \label{eq:opt_distribution_final} 
    \pi_r^* (y | x) = \frac{1}{Z(x)} \pi_{\text{\text{ref}}}(y|x) \exp \left ( 
    \sum_{i \in N} r_i(x, y)
    \right ) 
\end{gather}

Following \citet{li2024qprobe,rozière2024codellama} we sample from the LLM generating the sequences (either $\pi_{\text{ref}}$ or $\pi_{\text{con}}$) with temperature $0.8$ and top-p $0.95$.
We use $50$ user queries, each with two interested advertisers.
To increase the statistical significance of our results, we test each user query on $25$ different random seeds, effectively resulting in 1250 individual instances.
The seed affects both the generation of the candidate sequences, and the randomized sampling of the final sequence from the candidate sequences.
We will make all of our code and set of synthetic instances publicly available. 

All experiments were conducted on a compute cluster running Ubuntu 20.04.6 LTS with AMD EPYC 
processors with 48 cores and 1512GB RAM and Nvidia A100 GPUs and Python 3.12.1.

\newpage
\clearpage

\subsection{Comparison of baseline and context-aware mechanism outputs} \label{appendix:subsection:output_comparison}
To provide some intuition of how the context-aware mechanism improves the relevance of generated responses for the advertisers, we present a side-by-side comparison of outputs from both mechanisms for a specific query in \Cref{tab:mechanism_output_comparison}. 
Note that only the context-aware mechanism is able to incorporate the advertisers into the reply to the user's query.

\begin{table}[h!] 
\centering
\begin{tabularx}{\textwidth}{X}
\toprule
\textbf{User query} \\
\midrule
What are effective ways to learn a musical instrument online? \\
\midrule 
\textbf{Advertisers} \\
\midrule
- MusicMastery: offering online music lessons and virtual tutoring

- InstaTune: selling musical instruments and learning aids
\\ 
\midrule
\textbf{Context-Aware Mechanism Output} \\
\midrule
Effective ways to learn a musical instrument online include:

1. Online music lessons: Platforms like \textbf{MusicMastery} offer one-on-one virtual tutoring, group classes, and pre-recorded lessons. These lessons can be tailored to your skill level and learning style.

2. Interactive learning tools: \textbf{InstaTune} offers a range of interactive learning tools, such as apps, software, and games, that make learning fun and engaging.

3. Practice with technology: Utilize digital tools like metronomes, tuners, and recording software to practice and track your progress.

4. Collaborate with other musicians: Join online communities, forums, and social media groups to connect with other musicians and learn from their experiences. \dots \\
\midrule
\textbf{Baseline Mechanism Output} \\
\midrule
- Practice consistently and regularly: Consistent practice is essential to learning any musical instrument, and it's even more important when learning online. Set aside a specific time each day or week to practice, and make sure to stick to it.

- Find a good teacher or mentor: While it's possible to learn some things on your own, working with a good teacher or mentor can help you improve faster and avoid developing bad habits. Look for someone who is experienced and knowledgeable in the type of music you want to learn.

- Use technology to your advantage: There are many online tools and resources available that can help you learn an instrument. These include video lessons, online tutorials, and apps that can help you practice and track your progress. \dots \\
\bottomrule
\end{tabularx}
\vskip 0.15cm
\caption{Comparison of context-aware and baseline mechanism outputs using 10 candidate sequences. Advertisers highlighted in bold, outputs truncated to conserve space.}
\label{tab:mechanism_output_comparison}
\end{table} 

\newpage
\clearpage

\subsection{Experimental Evaluation of the Usefulness to the User} \label{sec:usefulness_to_the_user}
\begin{figure}[h!]
    \centering
    \includegraphics[width=0.5\linewidth]{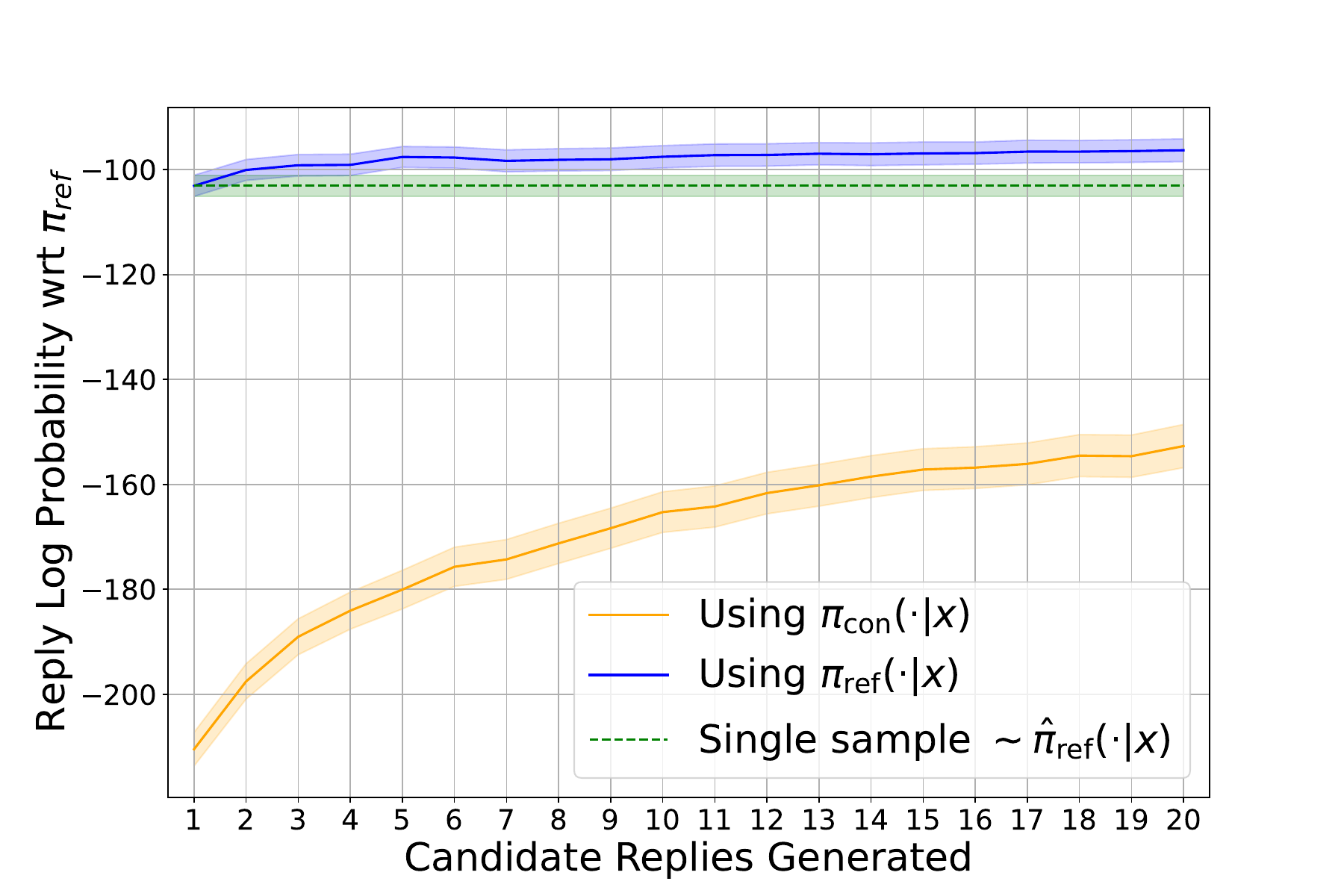}
    \caption{Reply log probability with respect to the reference LLM as a function of the number of replies 
    generated using $\pi_{\text{ref}}$ and $\pi_{\text{con}}$.
    }
    \label{fig:log_probability_reference_llm}
\end{figure}

In this section, we experimentally evaluate the usefulness of the replies returned by MOSAIC from the user's perspective.

In \Cref{fig:log_probability_reference_llm}, we plot the log probability of the replies returned by MOSAIC with respect to the optimal LLM for the user (i.e., the reference LLM) as a function of the number of candidate replies generated. We compare the context-aware version of MOSAIC to a baseline version. To benchmark our results, we also compare against sampling a \textit{single reply} directly from the optimal distribution for this task, which is the reference LLM itself.

The results in \Cref{fig:log_probability_reference_llm} show that for the context-aware version of our mechanism, the log probability of the replies relative to the reference LLM increases rapidly as the number of candidate replies grows. This demonstrates that by generating more replies, the context-aware version of MOSAIC is able to produce replies with substantial value for the user.
By comparison, the usefulness to the user that the baseline version is able to achieve does not scale with more generated replies.

However, there remains a gap between the log probability of those replies and the benchmark set by the reference LLM. This discrepancy arises because the platform optimizes a different objective: the expected advertiser rewards combined with the KL divergence from the reference LLM. By increasing the weight $\tau$ that the platform places on the reference LLM, this gap between the context-aware version of MOSAIC and the user's optimal benchmark will close.

\subsection{Comprehensive Experimental Evaluation of the Offset from \Cref{section:bidding zero offset}} \label{section_appendix:offset_comparison}

\begin{figure}[h!]
    \centering
    \begin{subfigure}[b]{0.45\textwidth} %
        \centering
        \includegraphics[scale=0.2]{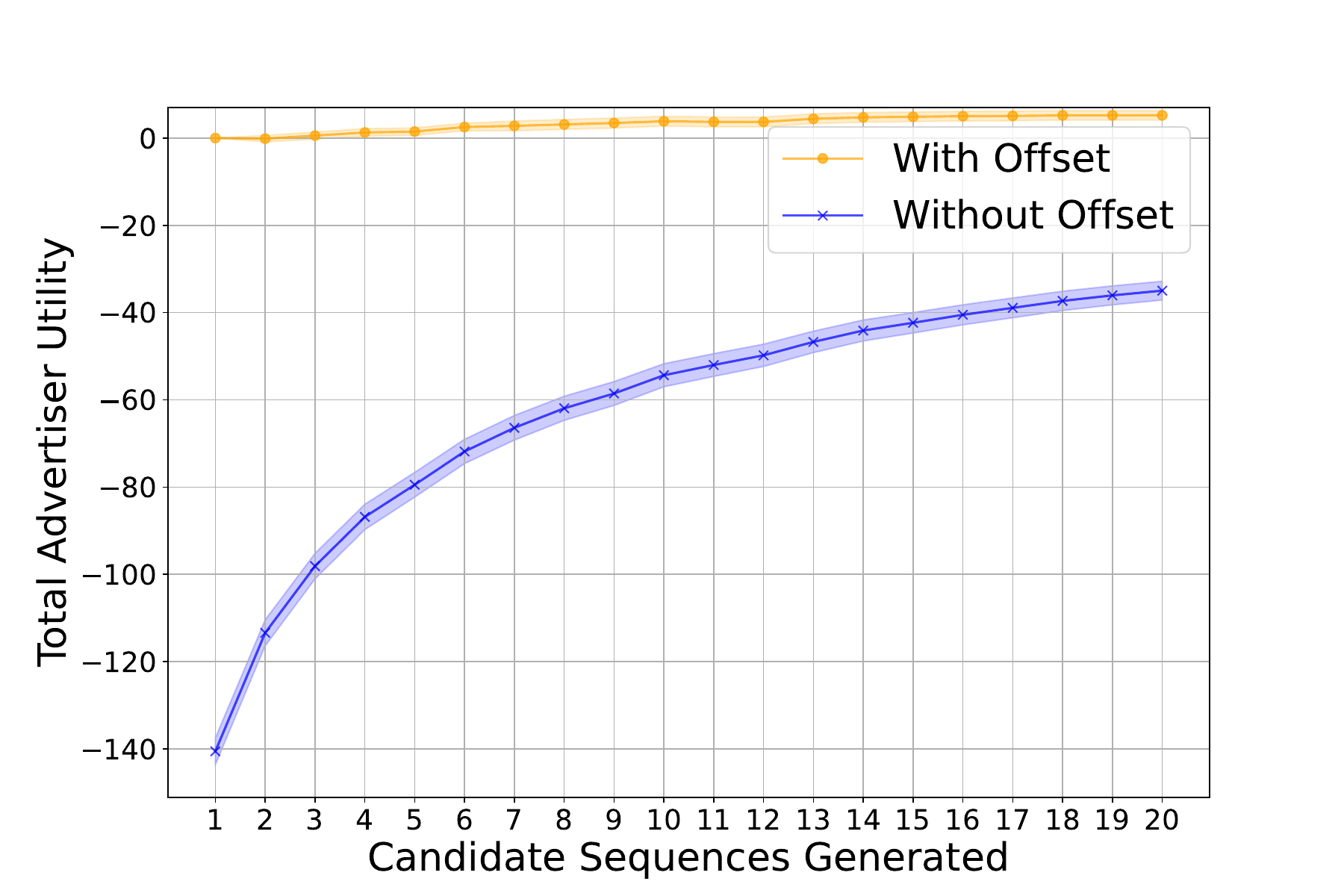}
        \caption{Sequences generated using $\pi_{\text{con}}$}
        \label{subfig:offset_comparison_utility_input_expansion}
    \end{subfigure}
    \hfill %
    \begin{subfigure}[b]{0.45\textwidth} %
        \centering
        \includegraphics[scale=0.2]{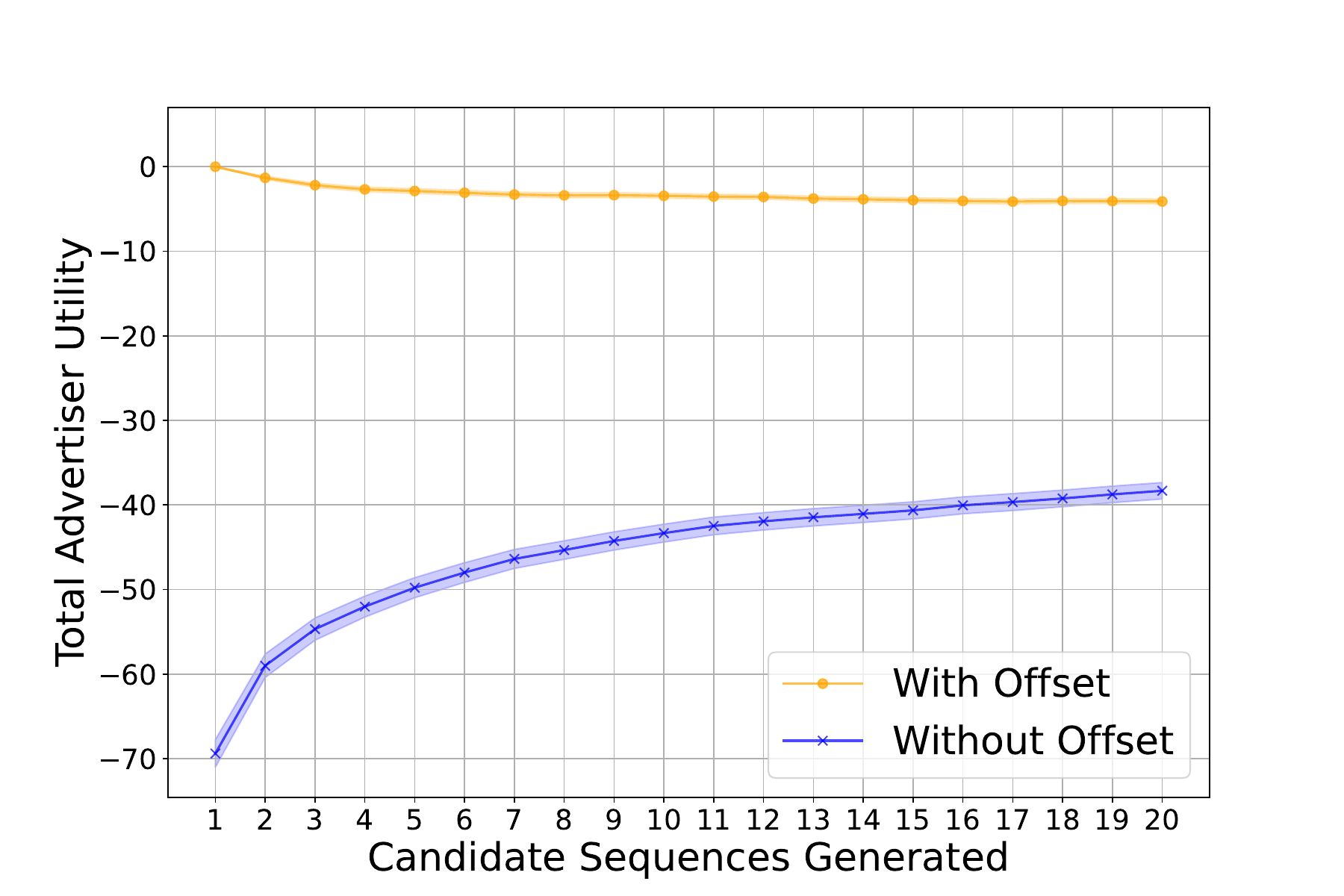}
        \caption{Sequences generated using $\pi_{\text{ref}}$}
        \label{subfig:offset_comparison_utility_no_input_expansion}
    \end{subfigure}
    \caption{Comparison of total advertiser utility gain from participation with, and without the payment offset, as a function of the number of candidate sequences generated using $\pi_{\text{ref}}$ and $\pi_{\text{con}}$. Averaged over 1250 runs including 95\% CIs. 
    }
    \label{fig:offset_comparison_utility}
\end{figure}

In this section, we provide a comprehensive experimental evaluation of the payment offset introduced in \Cref{section:bidding zero offset}. 
\Cref{fig:offset_comparison_utility} explores the effectiveness of the payment offset introduced in \Cref{section:bidding zero offset} in ensuring that our mechanism is ex-ante IR, i.e.,  advertisers have positive expected utility gains from participating. 
To that end, we compare the advertiser utility gain from participation with and without the payment offset, as a function of the number of candidate sequences generated by 
the context-aware (\Cref{subfig:offset_comparison_utility_input_expansion}) 
and baseline (\Cref{subfig:offset_comparison_utility_no_input_expansion}) versions of our mechanism.

In \Cref{subfig:offset_comparison_utility_input_expansion} we observe that for the context-aware version of our mechanism, adding the payment offset introduced of \Cref{section:bidding zero offset} to the payment rule is enough to ensure positive expected utility for the advertisers (conditioned on the fact that they are related to the user's query), i.e., make the mechanism 
ex-ante IR. 
Notably, without the payment offset, advertisers have very negative expected utility from participating in the mechanism. 

In \Cref{subfig:offset_comparison_utility_no_input_expansion} we observe that for the baseline version of our mechanism, adding the payment offset introduced of \Cref{section:bidding zero offset} to the payment rule causes a very large increase in the advertisers' expected utility, 
but it is still not enough to ensure positive expected utility for the advertisers.
To conclude, \Cref{fig:offset_comparison_utility} demonstrates that the offset introduced in \Cref{section:bidding zero offset} causes a large increase in the advertisers' expected utility, which in case of the context-aware version of the mechanism, is also enough to make the mechanism ex-ante IR.

\begin{figure}[h!]
    \centering
    \begin{subfigure}[b]{0.45\textwidth} %
        \centering
        \includegraphics[scale=0.2]{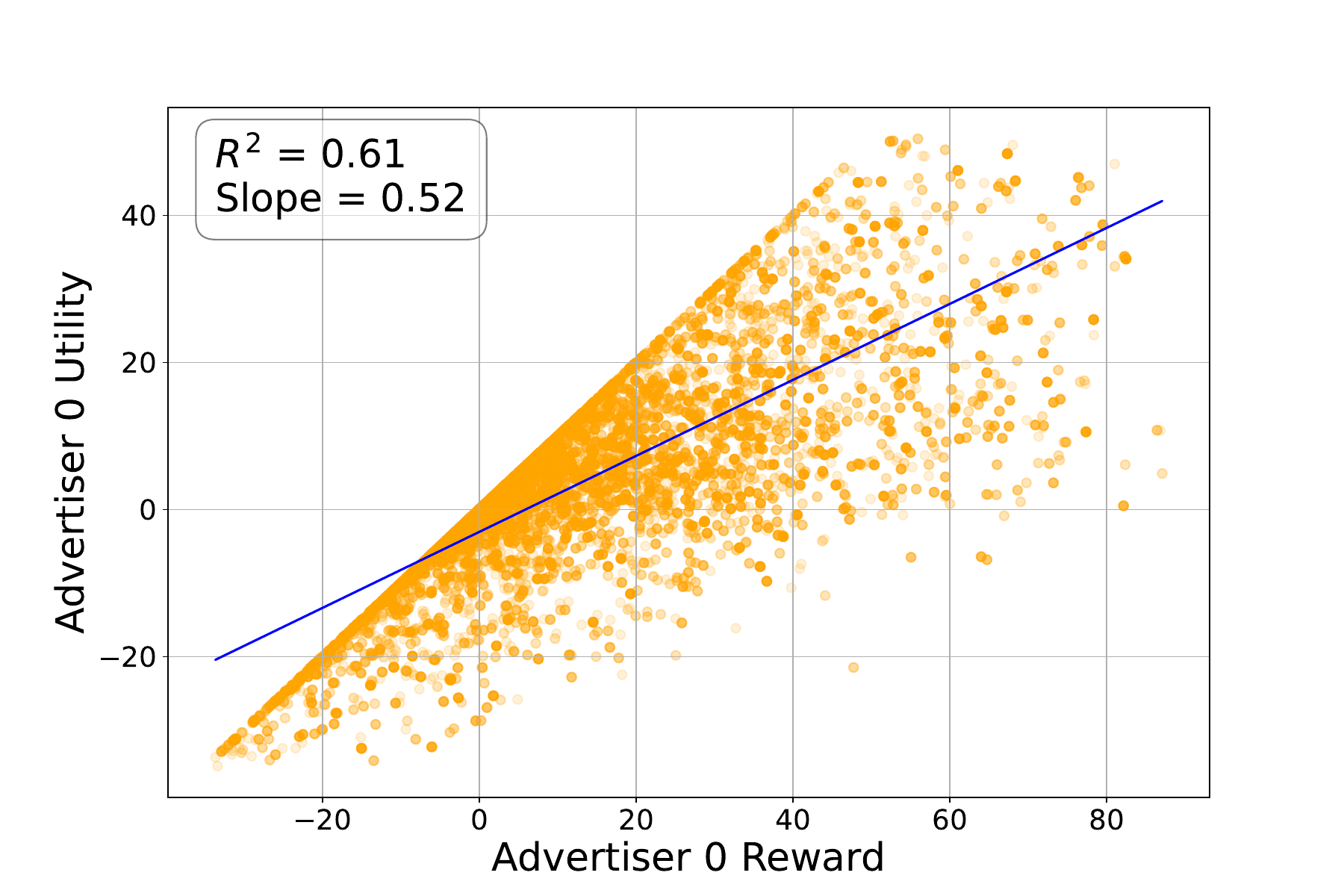}
        \caption{With the payment offset.}
        \label{subfig:scatter_plot_utilities_rewards_1}
    \end{subfigure}
    \hfill %
    \begin{subfigure}[b]{0.45\textwidth} %
        \centering
        \includegraphics[scale=0.2]{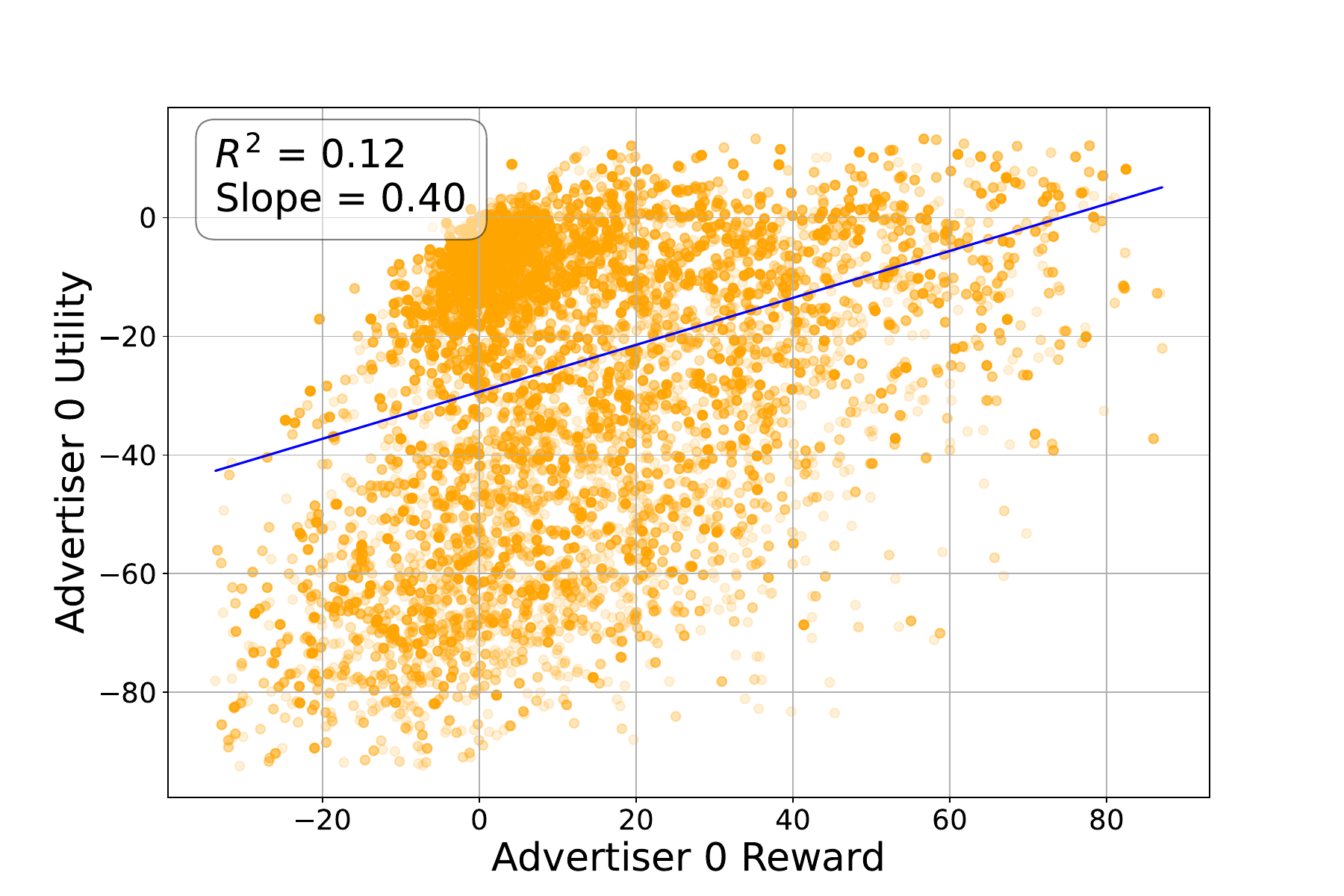}
        \caption{Without the payment offset.}
        \label{subfig:scatter_plot_utilities_rewards_2}
    \end{subfigure}
    \caption{Comparative scatter plots of advertiser reward and utility gain from participation, with and without the payment offset of \Cref{section:bidding zero offset} for candidate sequences generated by the context-aware LLM $\pi_{\text{gem}}$.
    We additionally show a linear regressor fit to that data, its slope and its $R^2$.}
    \label{fig_appendix:scatter_plots}
\end{figure}

\begin{figure}[h!]
    \centering
    \begin{subfigure}[b]{0.45\textwidth} %
        \centering
        \includegraphics[scale=0.2]{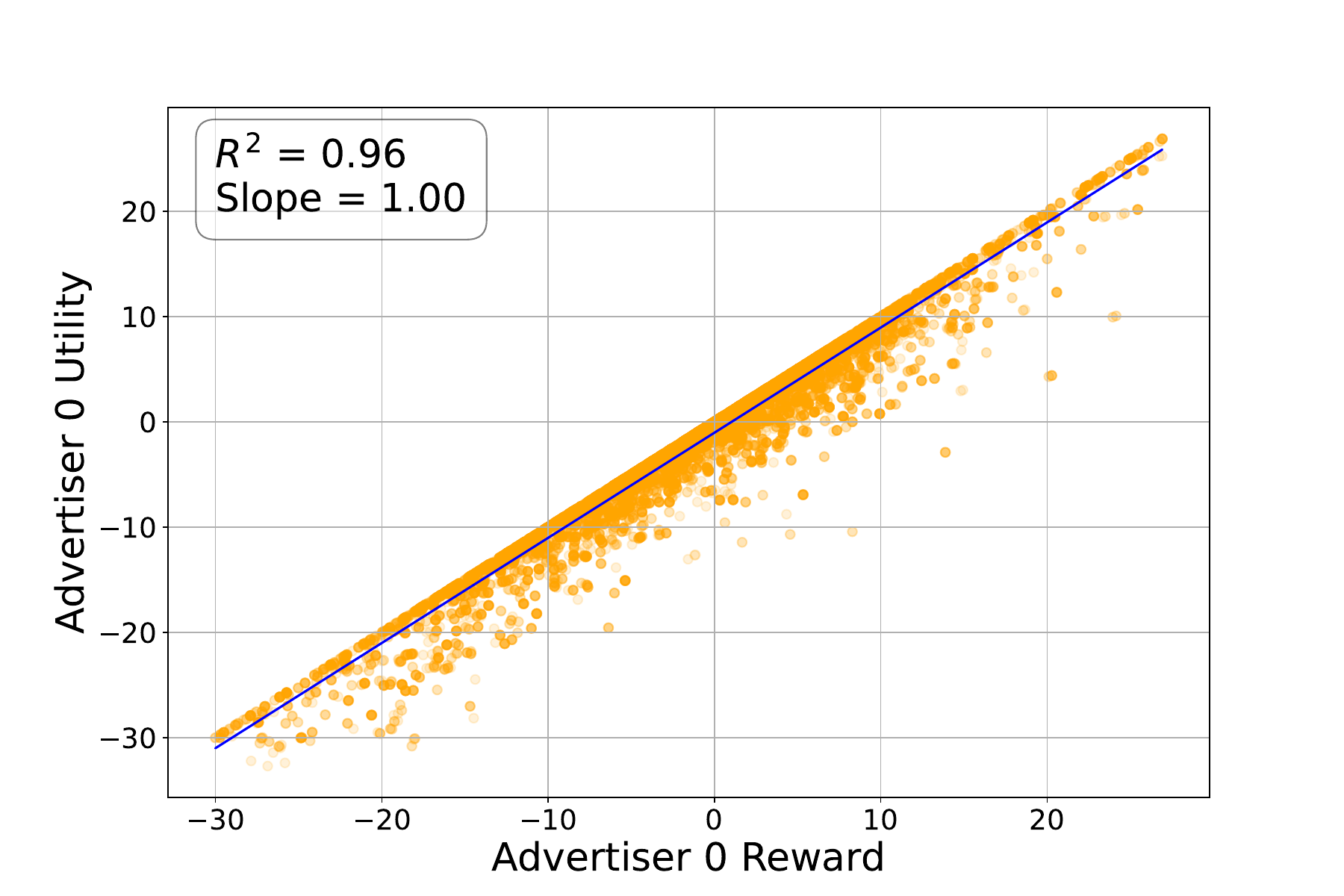}
        \caption{With the payment offset.}
        \label{subfig:scatter_plot_utilities_rewards_3}
    \end{subfigure}
    \hfill %
    \begin{subfigure}[b]{0.45\textwidth} %
        \centering
        \includegraphics[scale=0.2]{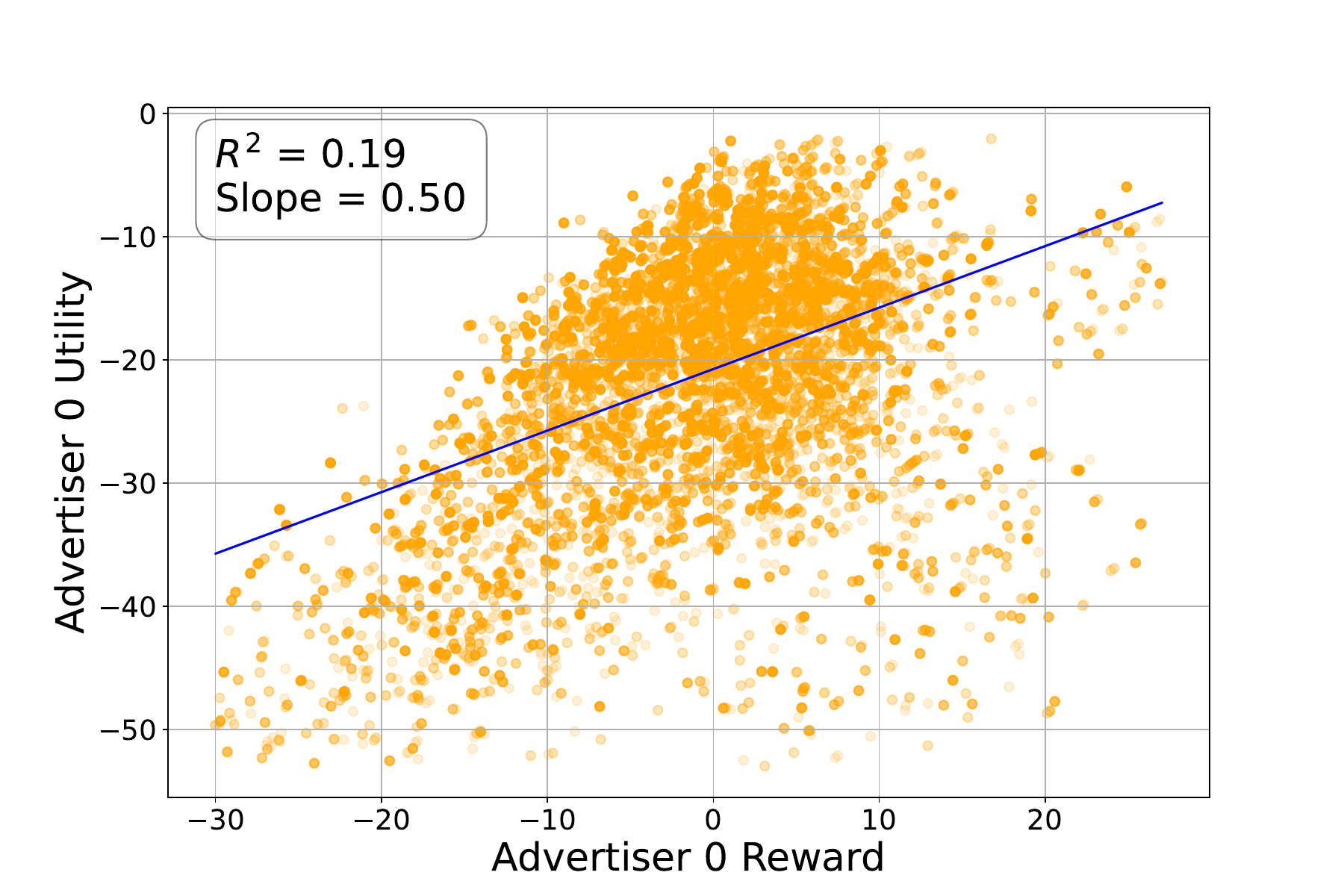}
        \caption{Without the payment offset.}
        \label{subfig:scatter_plot_utilities_rewards_4}
    \end{subfigure}
    \caption{Comparative scatter plots of advertiser reward and utility gain from participation, with and without the payment offset of \Cref{section:bidding zero offset} for candidate sequences generated by the reference $\pi_{\text{\text{ref}}}$.
    We additionally show a linear regressor fit to that data, its slope and its $R^2$.}
    \label{fig_appendix:scatter_plots_without_context}
\end{figure}

\Cref{fig_appendix:scatter_plots,fig_appendix:scatter_plots_without_context} 
explore the effectiveness of the payment offset introduced in \Cref{section:bidding zero offset} in aligning an advertiser's utility with her contribution to the social welfare. 
In \Cref{fig_appendix:scatter_plots} we compare the scatter plots of the advertiser reward and utility gain from participation in the mechanism, with and without the payment offset introduced in \Cref{section:bidding zero offset} for candidate sequences generated using the context-aware LLM $\pi_{\text{\text{ref}}}.$
Additionally, for both subfigures, we show a linear regressor fitted to the data, as well as its slope and coefficient of determination.
Comparing the two subfigures, it is immediately obvious that adding the offset to the payments makes the relationship between advertiser reward and utility gain far more linear. 
This is confirmed by the coefficient of determination of the linear regressors fit to each dataset. 
The coefficient of determination of the linear regressor is far larger when we use the offset. 
Without the payment offset, the coefficient of determination is almost $0$, indicating that, without our payment offset, reward gain is not a predictive measure of an agent's utility. 
Additionally, the slope of the linear regressor is also higher for the scatter plot with the payment offset.

In \Cref{fig_appendix:scatter_plots_without_context} we make the same comparison, but for candidate sequences generated using the reference LLM $\pi_{\text{\text{ref}}}$. 
The results are now even more pronounced. 
In \Cref{subfig:scatter_plot_utilities_rewards_3} we observe the relationship between advertiser utility and reward gain with our payment offset is almost perfectly linear, as suggested by the linear regressor fitted to the data having a slope of $1.00$ and an extremely high coefficient of determination of $0.96$, indicating that it can almost perfectly fit the data. 
Without our payment offset however, in \Cref{subfig:scatter_plot_utilities_rewards_4} we can see that the relationship between the two metrics is again both less linear, and less positively correlated, as the slope of the linear regressor is $0.5$ and its coefficient of determination is only $0.19$. 

\begin{figure}
    \centering
    \resizebox{0.6\textwidth}{!}{
    \includegraphics[trim = 0 0 0 60, clip]{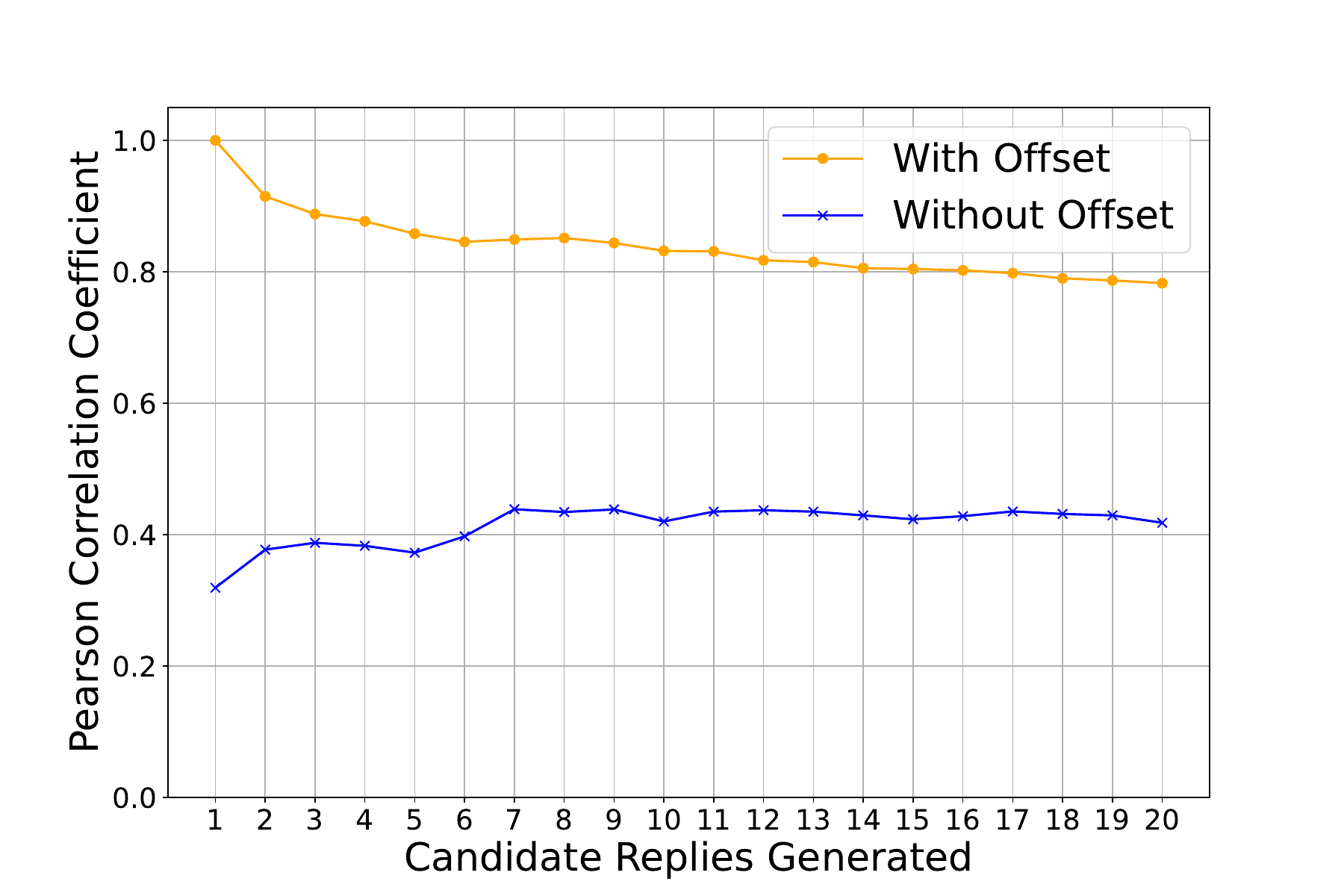}
    }
    \caption{Pearson correlation between advertiser normalized utility and reward.
    }
    \label{fig:pearson_correlation}
\end{figure}

To conclude, in all cases tested, 
the use of the advertiser-specific offset introduced in \Cref{section:bidding zero offset} 
increases an advertiser's expected utility, 
makes the relationship between an advertiser's contribution to social welfare and her utility both more linear and more positively correlated, 
and, for the context-aware version of our mechanism, can ensure positive expected utility for the advertisers.

\newpage
\clearpage

\subsection{Experiments in Settings with Many Advertisers} \label{sec:app:experiments_many_advertisers}
In this section, we test MOSAIC's performance in settings with $n = 5$ and $n = 10$ advertisers, compared to the results of \Cref{section:experiments} where the number of advertisers was set to $2$. 
To maintain the same balance between advertisers and the usefulness to the user as in the experiments of the main paper, we set $\tau = n / 2$.\footnote{Note that in the main experiments of \Cref{section:experiments} we had $n = 2$ and $\tau = 1$}
We use the same $50$ user queries, but now each query has five and ten advertisers, and test each query on $20$ different random seeds, resulting in $1000$ instances. 

In Figures \ref{subfig:log_probability_n5} to \ref{subfig:revenue_n10}, we present the 
log probability of the returned reply with respect to the optimal LLM for the platform's objective $\pi^*_r$ and the reference LLM $\pi_{\text{ref}}$, the total advertiser value and  reward gain from participating in MOSAIC and the platform's revenue. 
All plots are with respect to the number of candidate replies (i.e., LLM queries) that MOSAIC used. 
We compare MOSAIC's baseline and context-aware versions.

First, in \Cref{subfig:log_probability_n5,subfig:log_probability_n10} we observe that the context-aware version of MOSAIC quickly converges to the platform's objective. 
By comparison, MOSAIC's naive version fails to do so.
It is noteworthy that for very low numbers of candidate replies, the naive version of the mechanism performs better than the context-aware version. 
That is because in these experiments, more weight is placed towards the reference LLM, as we have set a higher $\tau$ value. 
However, the superior convergence rate of the context-aware version of the mechanism recovers that difference very quickly.
At the same time, \Cref{subfig:log_probability_ref_n5,subfig:log_probability_ref_n10} show that the context-aware version of the mechanism produces sequences that closely match the distribution of the reference LLM, thus maintaining the usefulness to the user. 

Finally, the context-aware version of MOSAIC is able to generate significant value (\Cref{subfig:value_gain_n5,subfig:value_gain_n10}) and utility (\Cref{subfig:utility_gain_n5,subfig:utility_gain_n10}) for the advertisers, while also recapturing a significant portion as revenue for the platform (\Cref{subfig:revenue_n5,subfig:revenue_n10}).

Taken together, similar to our results in the main paper, we have shown that even with a large number of advertisers, MOSAIC quickly converges to the theoretically optimal distribution, generating significant value and utility for the advertisers and revenue for the platform, while also maintaining its  usefulness to the user.

\begin{figure}
    \centering
    \vspace{-35pt}
    \begin{subfigure}[b]{0.45\textwidth}
        \centering
        \includegraphics[scale=0.2]{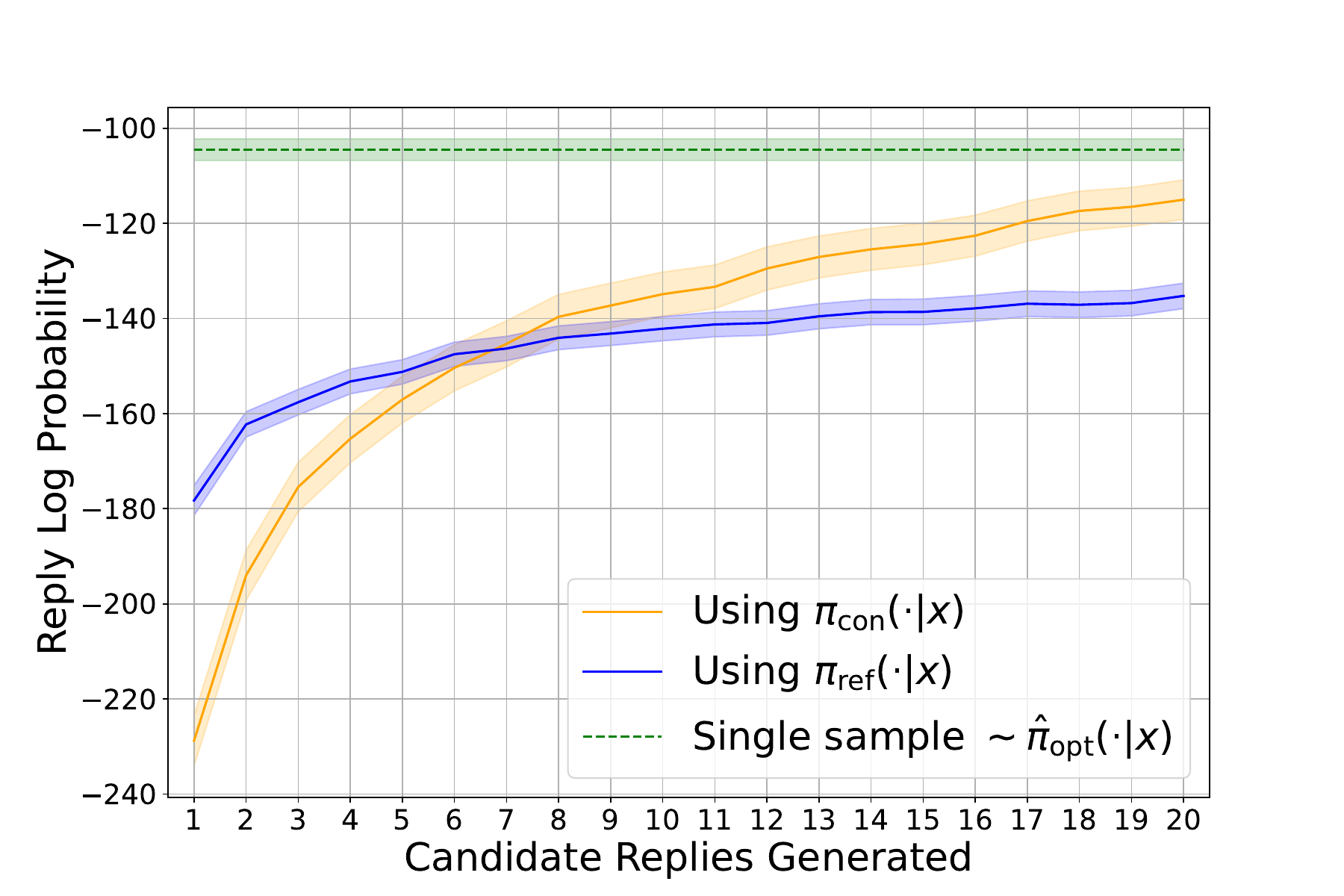}
        \caption{Log probability w.r.t. $\pi^*_{r}$ for 5 advertisers.}
        \label{subfig:log_probability_n5}
    \end{subfigure}
    \hfill
    \begin{subfigure}[b]{0.45\textwidth}
        \centering
        \includegraphics[scale=0.2]{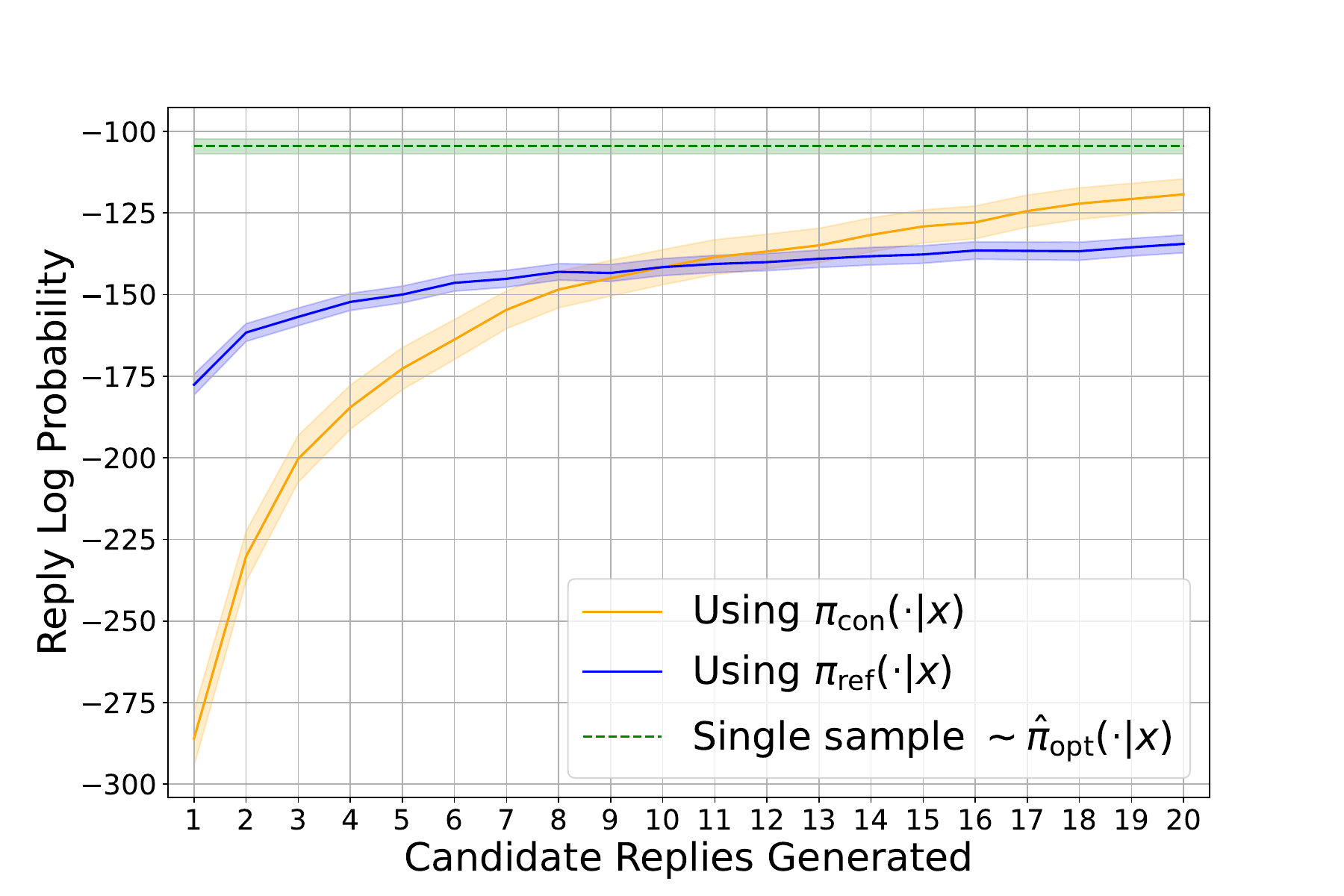}
        \caption{Log probability w.r.t. $\pi^*_{r}$ for 10 advertisers.}
        \label{subfig:log_probability_n10}
    \end{subfigure}

    \begin{subfigure}[b]{0.45\textwidth}
        \centering
        \includegraphics[scale=0.2]{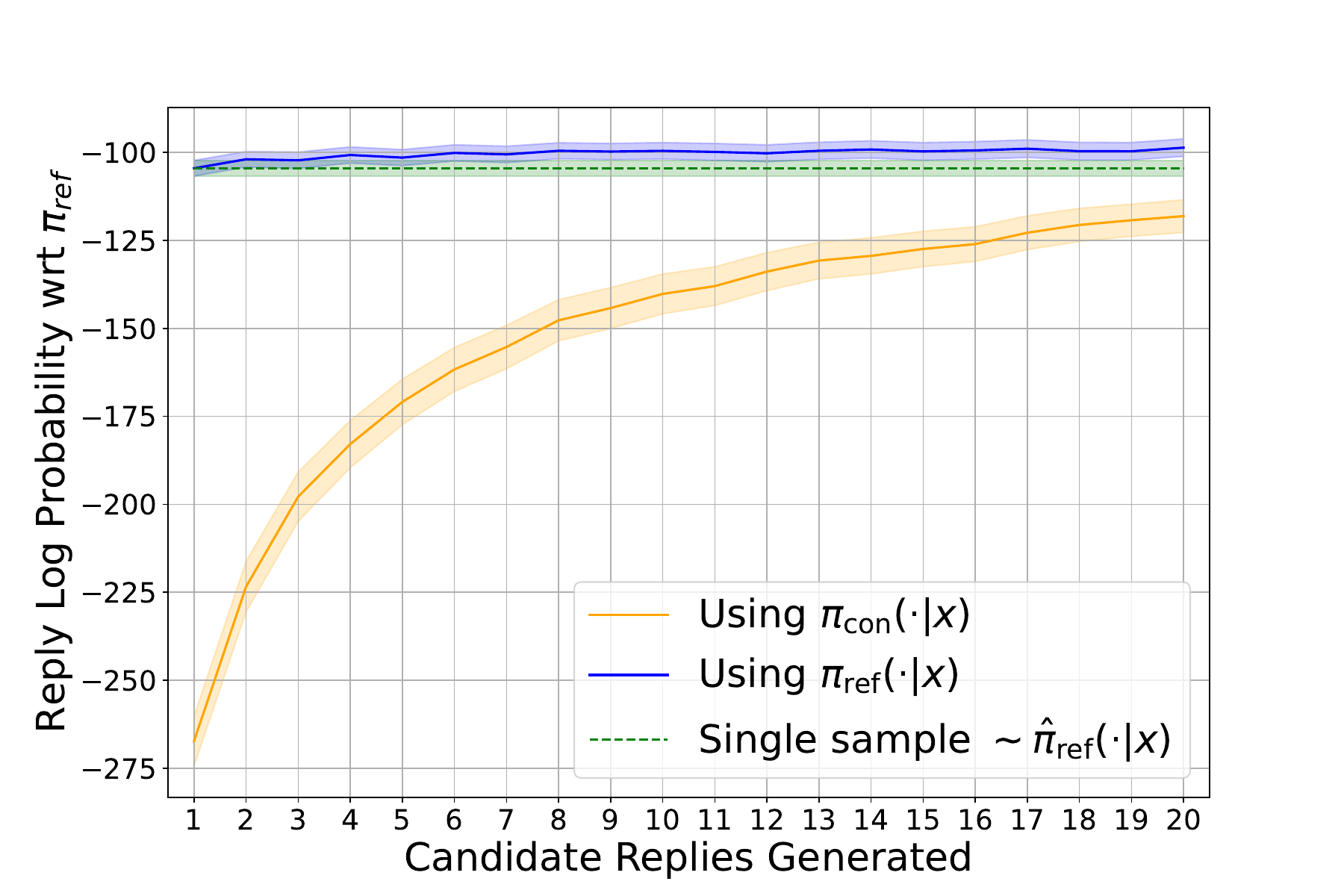}
        \caption{Log probability w.r.t. $\pi_{\text{ref}}$ for 5 advertisers.}
        \label{subfig:log_probability_ref_n5}
    \end{subfigure}
    \hfill
    \begin{subfigure}[b]{0.45\textwidth}
        \centering
        \includegraphics[scale=0.2]{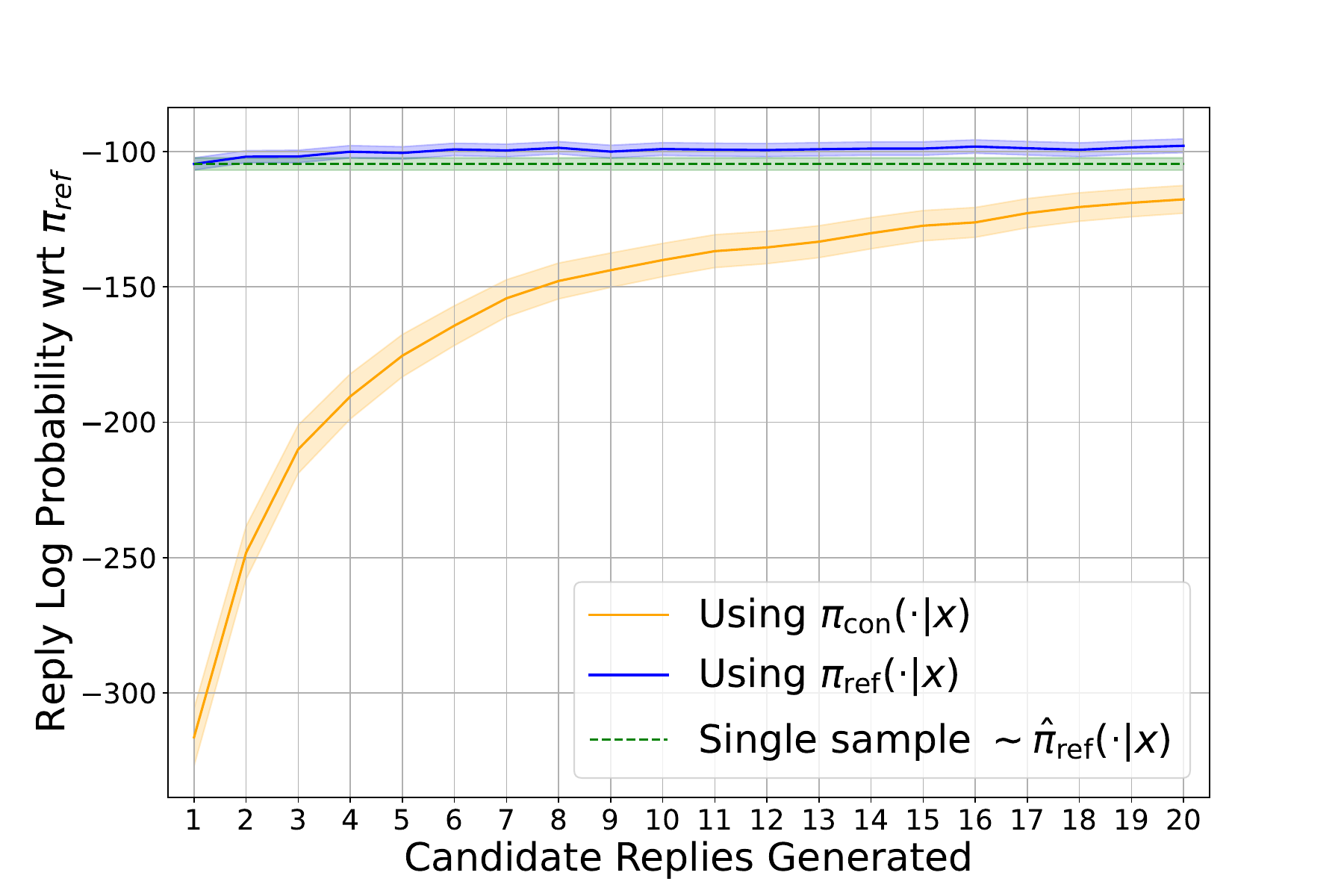}
        \caption{Log probability w.r.t. $\pi_{\text{ref}}$ for 10 advertisers.}
        \label{subfig:log_probability_ref_n10}
    \end{subfigure}
    
    \begin{subfigure}[b]{0.45\textwidth}
        \centering
        \includegraphics[scale=0.2]{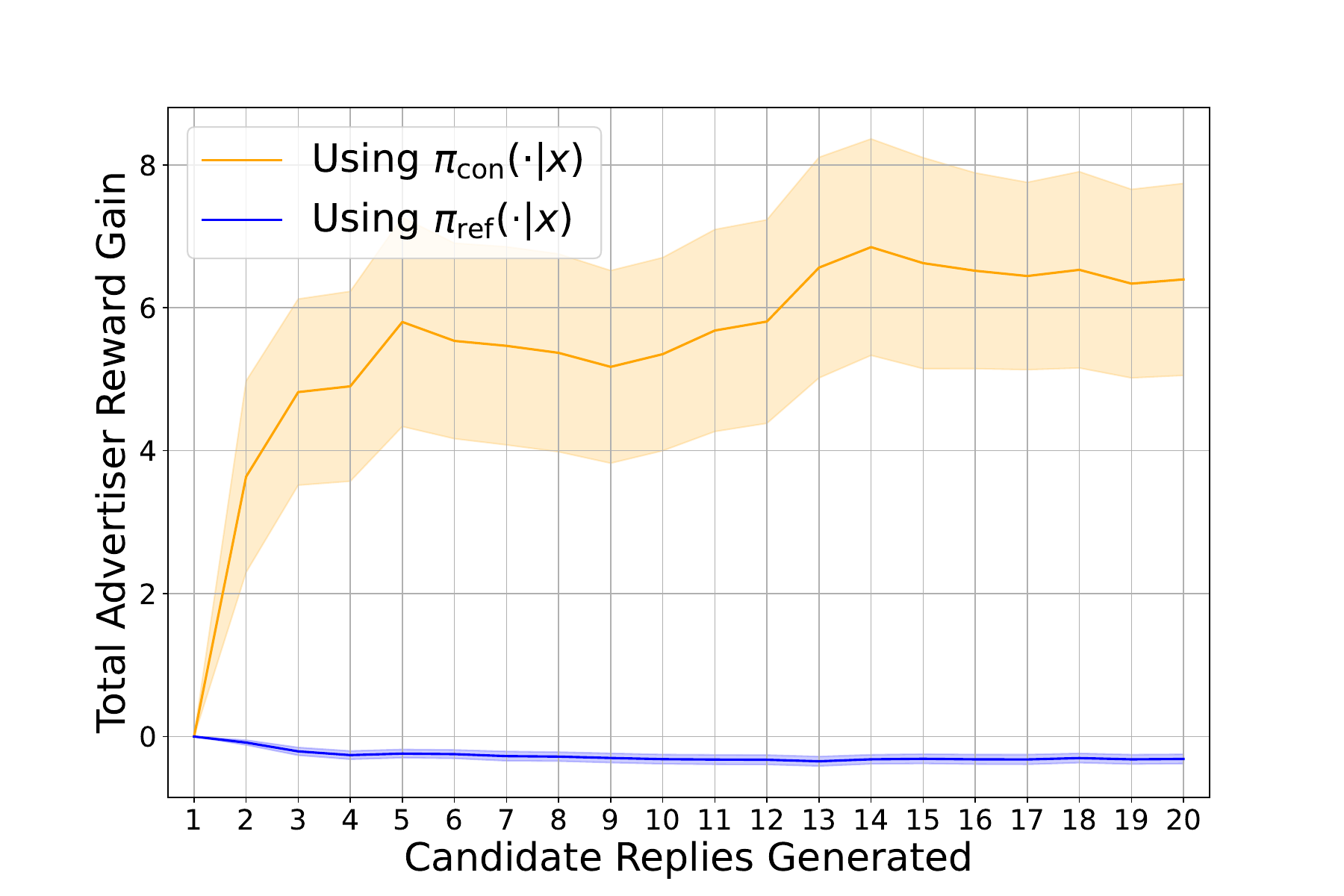}
        \caption{Advertiser value gain for 5 advertisers.}
        \label{subfig:value_gain_n5}
    \end{subfigure}
    \hfill
    \begin{subfigure}[b]{0.45\textwidth}
        \centering
        \includegraphics[scale=0.2]{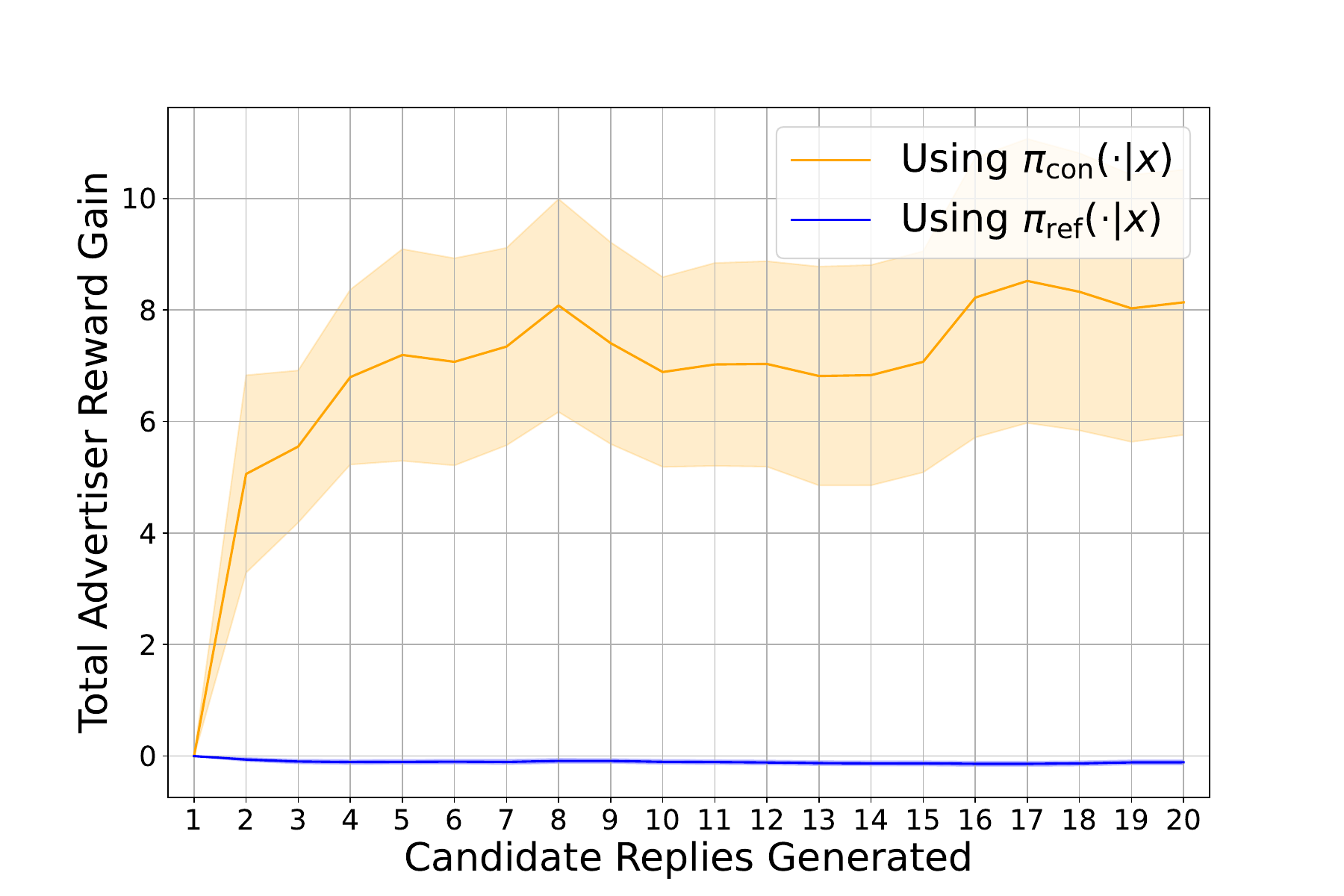}
        \caption{Advertiser value gain for 10 advertisers.}
        \label{subfig:value_gain_n10}
    \end{subfigure}

    \begin{subfigure}[b]{0.45\textwidth}
        \centering
        \includegraphics[scale=0.2]{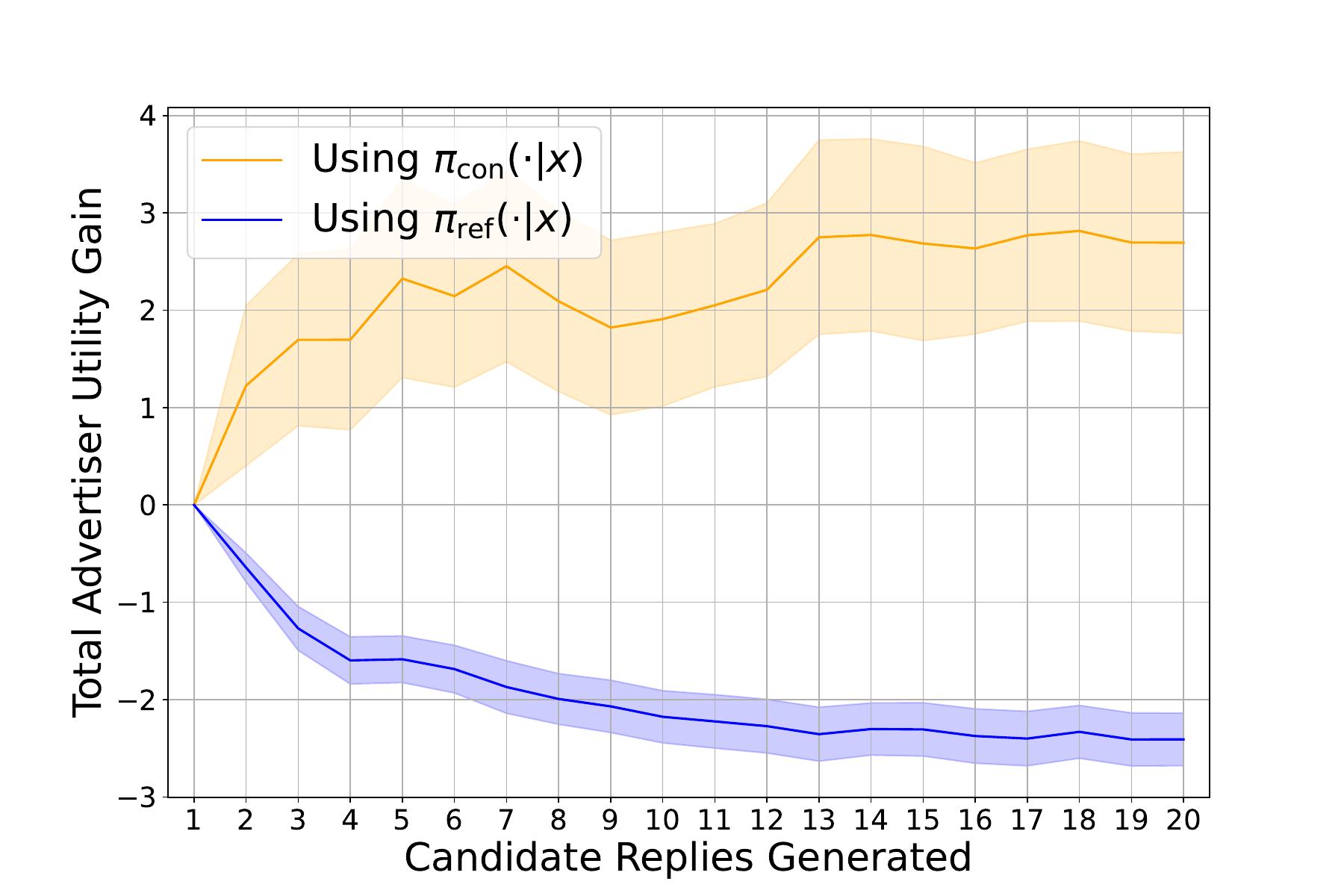}
        \caption{Advertiser utility gain for 5 advertisers.}
        \label{subfig:utility_gain_n5}
    \end{subfigure}
    \hfill
    \begin{subfigure}[b]{0.45\textwidth}
        \centering
        \includegraphics[scale=0.2]{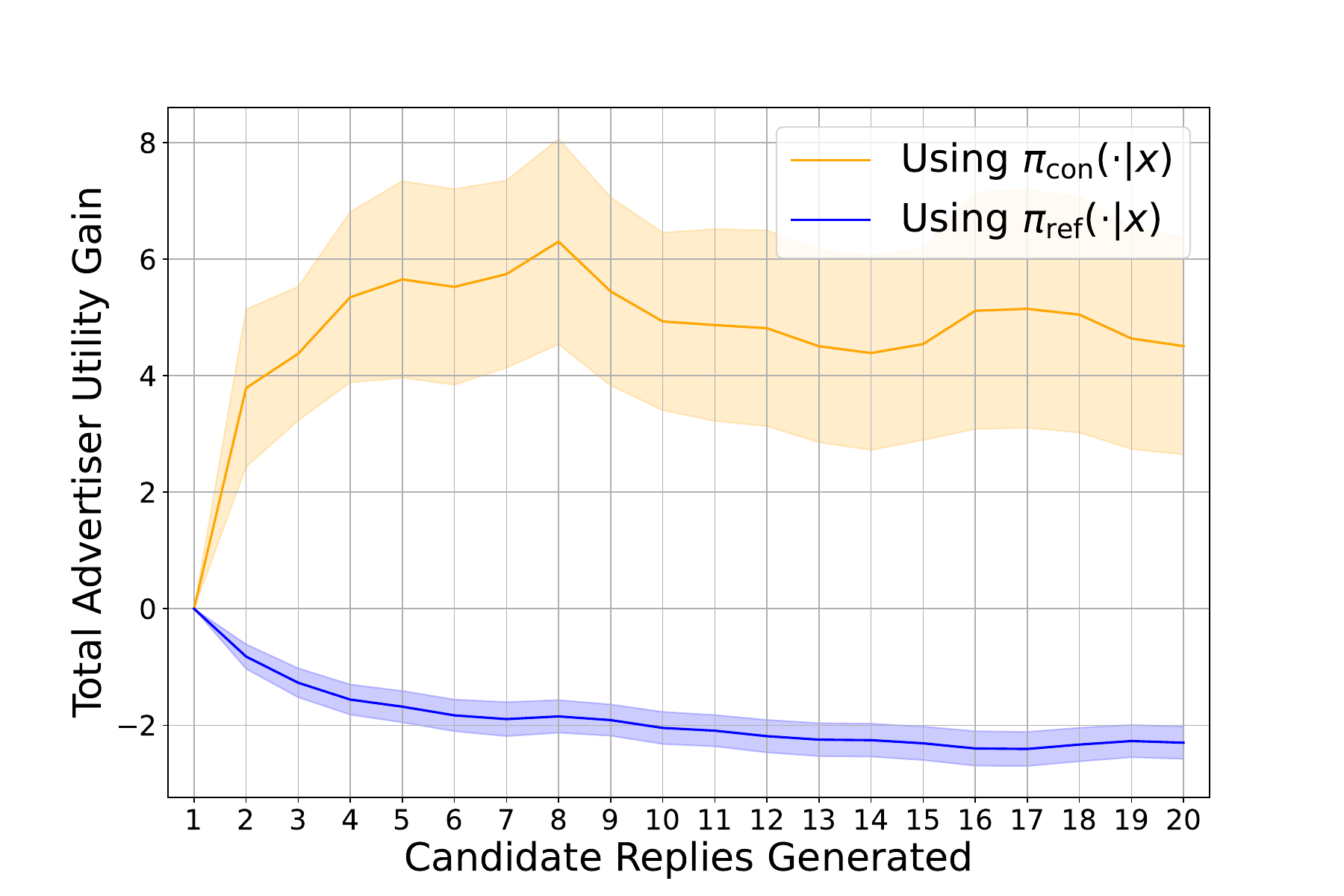}
        \caption{Advertiser utility gain for 10 advertisers.}
        \label{subfig:utility_gain_n10}
    \end{subfigure}
    
    \begin{subfigure}[b]{0.45\textwidth}
        \centering
        \includegraphics[scale=0.2]{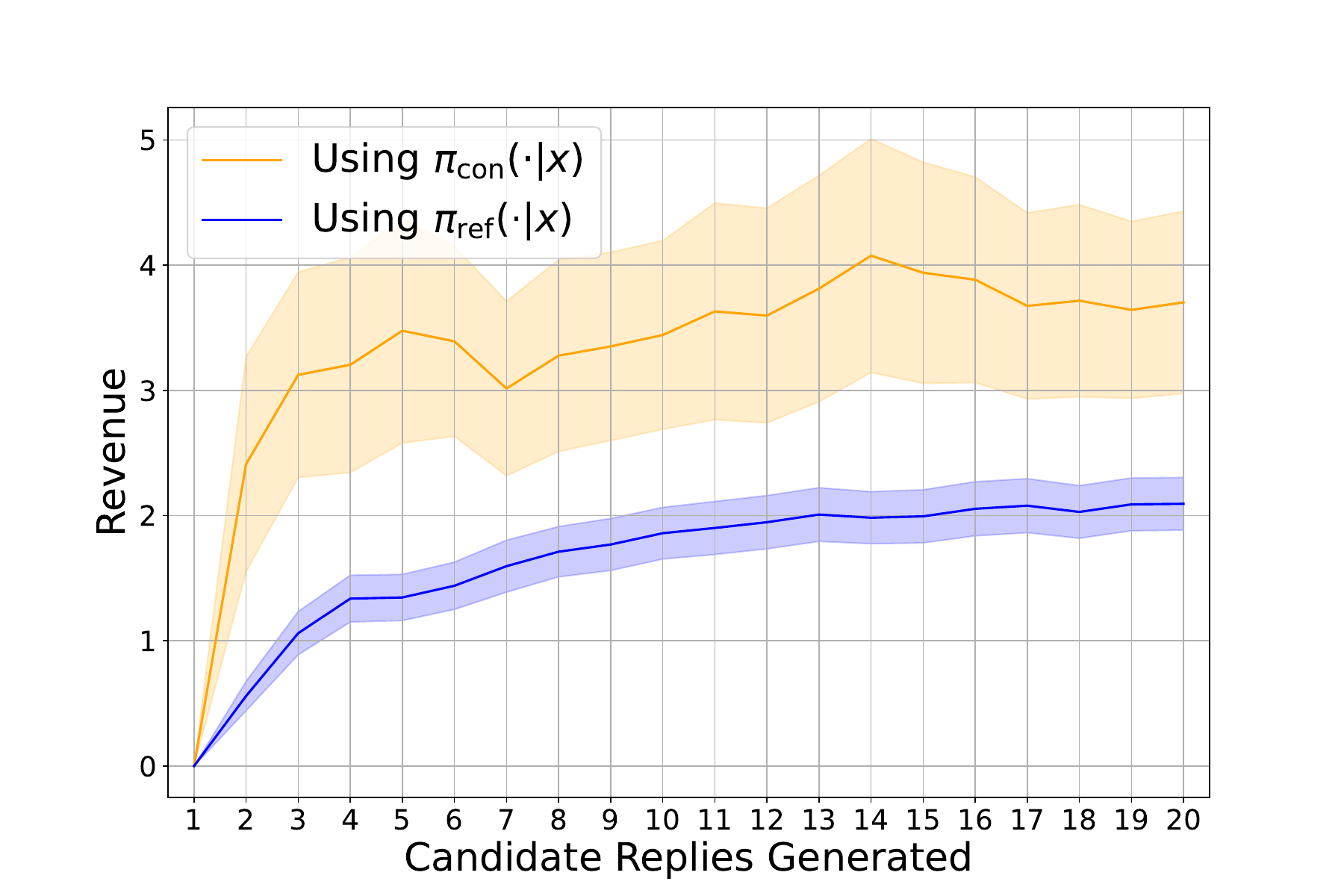}
        \caption{Revenue for 5 advertisers.}
        \label{subfig:revenue_n5}
    \end{subfigure}
    \hfill
    \begin{subfigure}[b]{0.45\textwidth}
        \centering
        \includegraphics[scale=0.2]{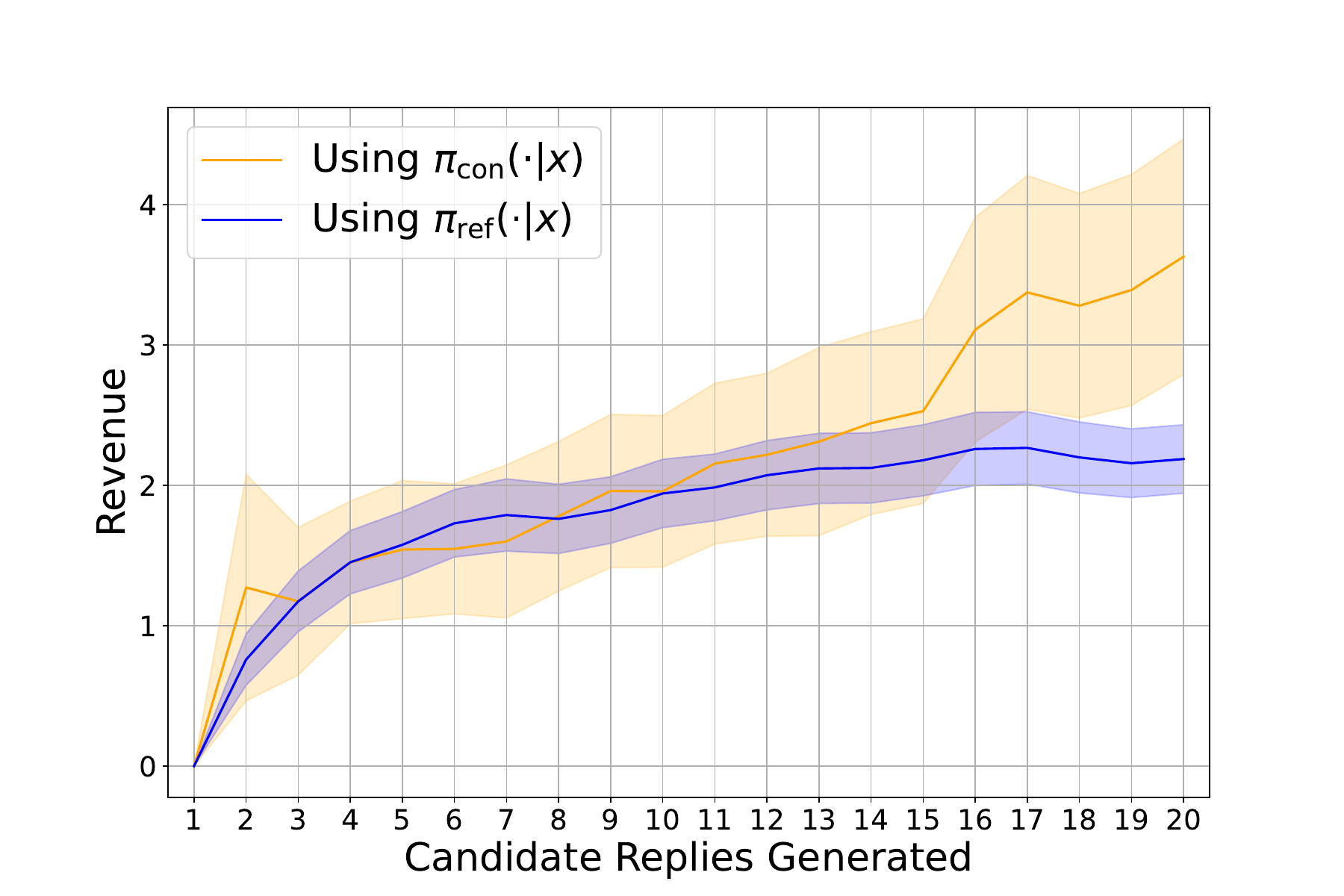}
        \caption{Revenue for 10 advertisers.}
        \label{subfig:revenue_n10}
    \end{subfigure}
    
    \caption{Comparison of results for 5 advertisers (left) and 10 advertisers (right). Each row corresponds to a specific metric: log probability with respect to the optimal LLM $\pi^*_{r}$, the reference LLM $\pi_{\text{ref}}$, advertiser value and utility gain, and revenue.
    Shown are averages over $1000$ instances including $95$\% CIs.
    }
    \vspace{-20pt}
    \label{fig:comparison_5_10_advertisers}
\end{figure}

\newpage
\clearpage

\subsection{Compute Experiments} \label{sec:app:compute_experiments}

In this section, we experimentally evaluate MOSAIC's computational requirements. Using the setup described in \Cref{sec:experiment_setup}, we measure the total wall time required by MOSAIC on a \emph{single} A100 GPU, focusing on the number of candidate replies generated for evaluation. This total time includes:
\begin{enumerate}
    \item Generating candidate replies using the context-aware LLM.
    \item Evaluating these replies with the advertisers' LLMs implemented as discussed in \Cref{sec:experiment_setup}.
    \item Calculating the Rochet payments, as detailed in \Cref{section:payments}. In practice the time to calculate payments is negligible (under $10$ milliseconds), and need not increase user-perceived latency because they can be calculated after showing outputs to the user.
\end{enumerate}

\Cref{fig:total_time} presents the total time taken by MOSAIC, which includes both the generation and evaluation of candidate replies. These components are further broken down in \Cref{fig:generation_time,fig:evaluation_time}, respectively.

As previously demonstrated in \Cref{section:experiments} and \Cref{sec:app:experiments_many_advertisers}, MOSAIC converges to the optimal distribution using only 20 candidate replies, regardless of the number of advertisers. In \Cref{fig:generation_time}, we observe that generating 10 and 20 candidate replies on a single A100 GPU takes $30.5$ and $60.5$ seconds, respectively. This represents a 2.5- and 5-fold increase over the $12.0$ seconds required to generate a single reply from the same LLM, i.e., the latency perceived by the user when directly querying the LLM. 

To reduce user-perceived latency, MOSAIC can parallelize candidate generation across multiple GPUs. For example, distributing the generation of 20 replies across two GPUs (each generating 10 replies) reduces the total time to $30.5$ seconds, assuming efficient evaluation techniques as outlined in \Cref{section:practical_considerations}. This approach results in MOSAIC requiring approximately $5$ times the compute time of generating a single reply to converge, but with a perceived latency to the user of only $2.5$ times that of directly querying an LLM (of the same architecture and similar size).
Although our experiments were run on a single GPU, note that MOSAIC can always be parallelized further so that the user-perceived latency is the same as directly querying a \emph{single} LLM, as explained in \Cref{section:practical_considerations}.

\begin{figure}[h!]
    \centering
    \includegraphics[width=0.7\linewidth]{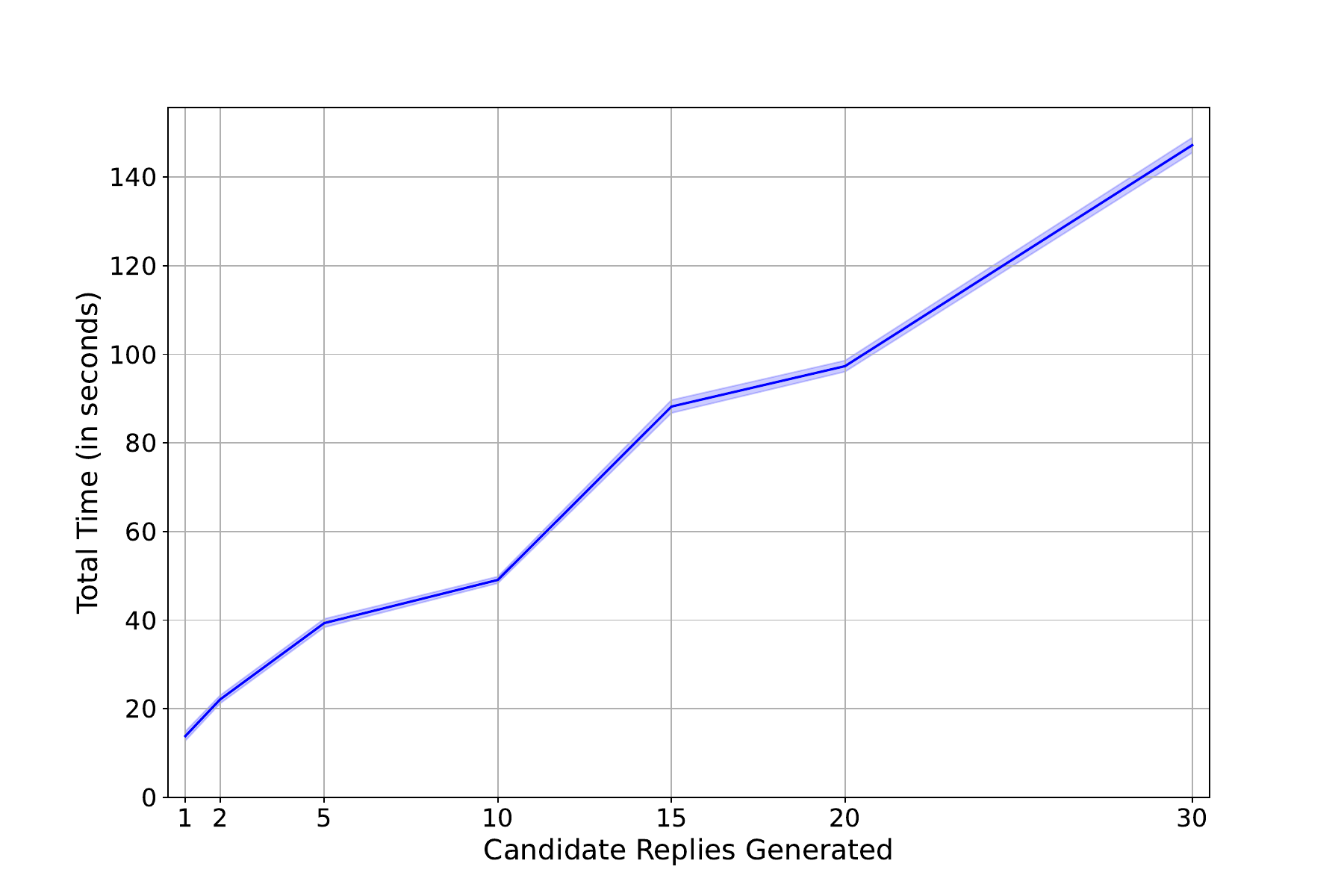}
    \caption{MOSAIC total wall time (seconds) as a function of the number of candidate replies generated using $\pi_{\text{con}}$. Shown are averages over $50$ instances including $95$ \% CIs.
    }
    \label{fig:total_time}
\end{figure}

\begin{figure}[h!]
    \centering
    \includegraphics[width=0.7\linewidth]{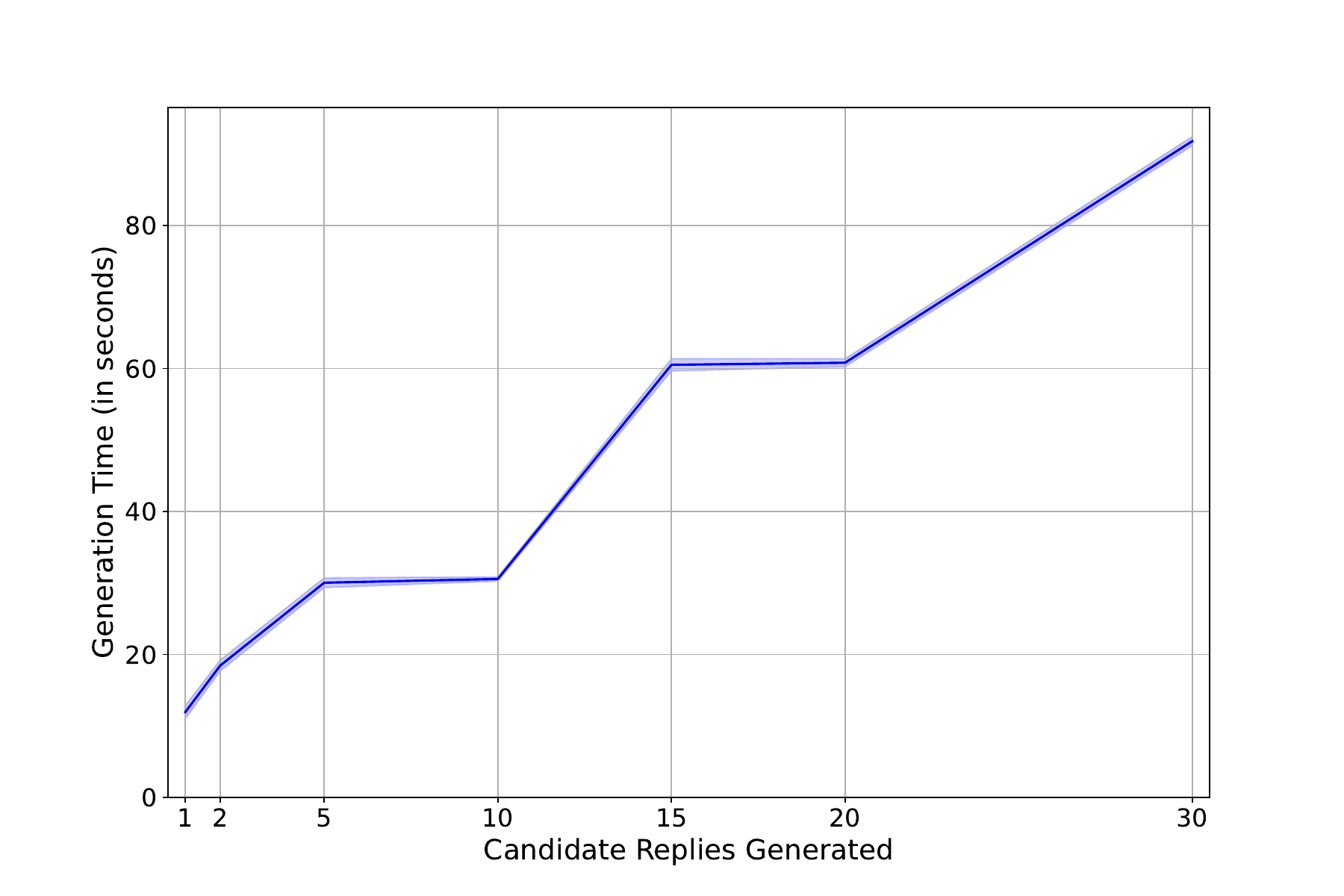}
    \caption{MOSAIC generation time as a function of the number of candidate replies generated using $\pi_{\text{con}}$. Shown are averages over $50$ instances including $95$ \% CIs.
    }
    \label{fig:generation_time}
\end{figure}

\begin{figure}[h!]
    \centering
    \includegraphics[width=0.7\linewidth]{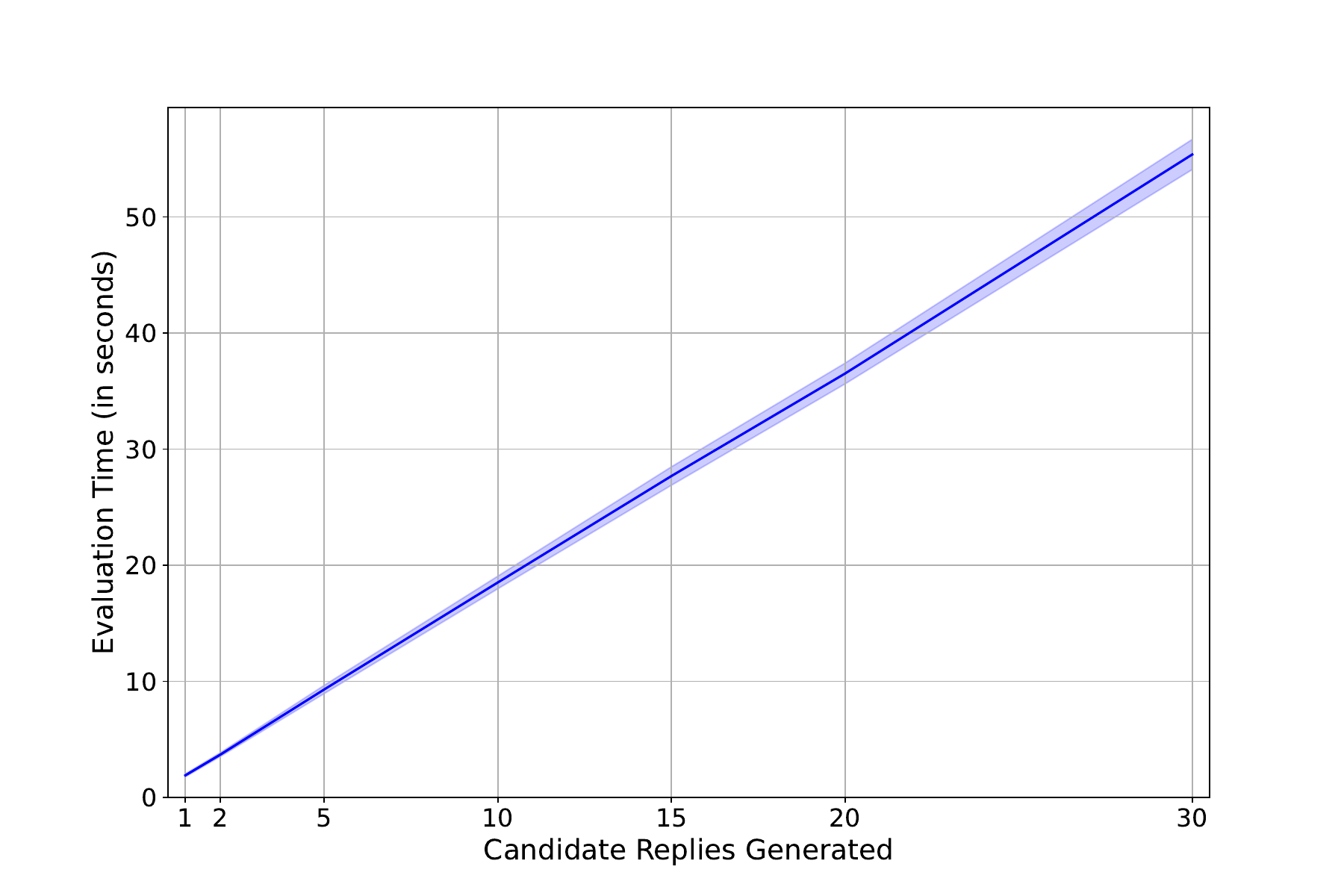}
    \caption{MOSAIC evaluation time as a function of the number of candidate replies. Shown are averages over $50$ instances including $95$ \% CIs.
    }
    \label{fig:evaluation_time}
\end{figure}

\newpage
\clearpage

\subsection{Conflicts between advertisers}
The output of MOSAIC is a single text response to the user.
In trying to reflect the interests of multiple advertisers, there is a risk of a single response being incoherent, just mentioning as many advertisers as possible without being a good promotion for any of them.
However, the platform's objective, which takes into account both advertiser utility and usefulness for the user, should account for this: if advertisers are happy to be cross-promoted, those outcomes should be more likely, but incoherent advertisements or those that simultaneously promote competitors should be less likely.

In \Cref{fig:advertisers_mentioned} we show the fraction of advertisers mentioned by the context-aware version of MOSAIC as a function of the candidate number of replies used. 
We observe that initially, the fraction of advertisers mentioned is very high, but drops significantly once the mechanism has converged. 
The reason is precisely the fact that mentioning more advertisers in this case would cause 
either a significant drop in advertiser utility, or a drop in the usefulness of the reply to the user. 
By placing more weight towards the advertisers (i.e., decreasing the value of $\tau$ in \Cref{eq:objective}), the platform can increase the number of advertisers shown when the mechanism has converged. 

\begin{figure}[h!]
    \centering
    \resizebox{0.6\textwidth}{!}{
    \includegraphics[trim = 0 0 0 60, clip]{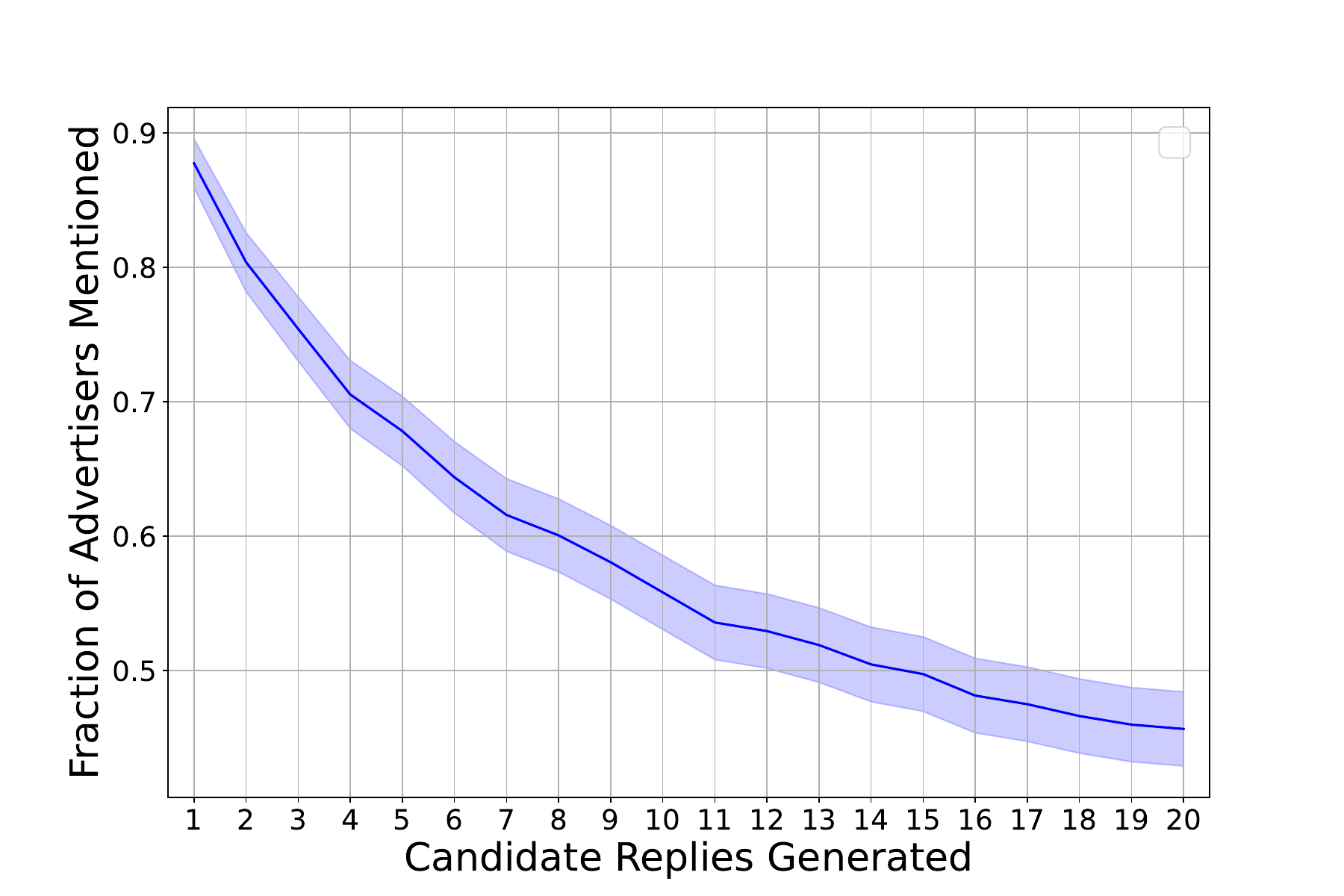}
    }
    \caption{Fraction of advertisers mentioned in the output reply as a function of the number of candidate replies. 
    }
    \label{fig:advertisers_mentioned}
\end{figure}

\newpage
\clearpage

\subsection{Testing with additional language models} \label{subsec:testing_with_other_models}
In this section, we test MOSAIC's performance using Google's flan-t5-large model \citep{flan}. 

We repeat exactly the same procedure as detailed in \Cref{sec:experiment_setup}, but using flan-t5-large as the basis for both $\pi_{\text{gen}}$, the advertisers' LLMs, and the reference LLM. 
We use the same 50 user queries, each with two advertisers, and test each query on $10$ different random seeds, resulting in $500$ instances. 

In \Cref{fig:advertiser_participating-value_gain_llm_t5} to \Cref{fig:log_probability_reference_llm_t5}, we present the total advertiser value gain, reward gain from participating in MOSAIC, platform revenue, and the usefulness of the returned reply to users as functions of the number of candidate replies generated.

Consistent with the main paper’s results using Llama-2-7b-chat-hf \citep{touvron2023llama}, the context-aware version of MOSAIC demonstrates significant value and utility generation for advertisers and successfully captures a substantial portion as platform revenue. In contrast, though, the naive version of MOSAIC fails to generate value or utility for advertisers, resulting in negligible revenue for the platform. 
Similar to the results of the main paper, despite both mechanisms converging in the limit to the platform's objective, as shown in \Cref{Corollary:mechanism_follows_optimal_distribution}, their performance for a computationally tractable number of generated replies diverges. 

Similar to the findings in the main paper, the naive version retains the reference LLM’s usefulness to the user (which is optimal for the task), while the context-aware version prioritizes the platform’s objective, leading to lower user usefulness. Overall, our results using flan-t5-large agree closely with those obtained using Llama-2-7b-chat-hf, demonstrating consistency across models. The main difference is that the context-aware mechanism is even more important when using flan-t5-large.

\begin{figure}[h!]
    \centering
    \includegraphics[width=0.65\linewidth]{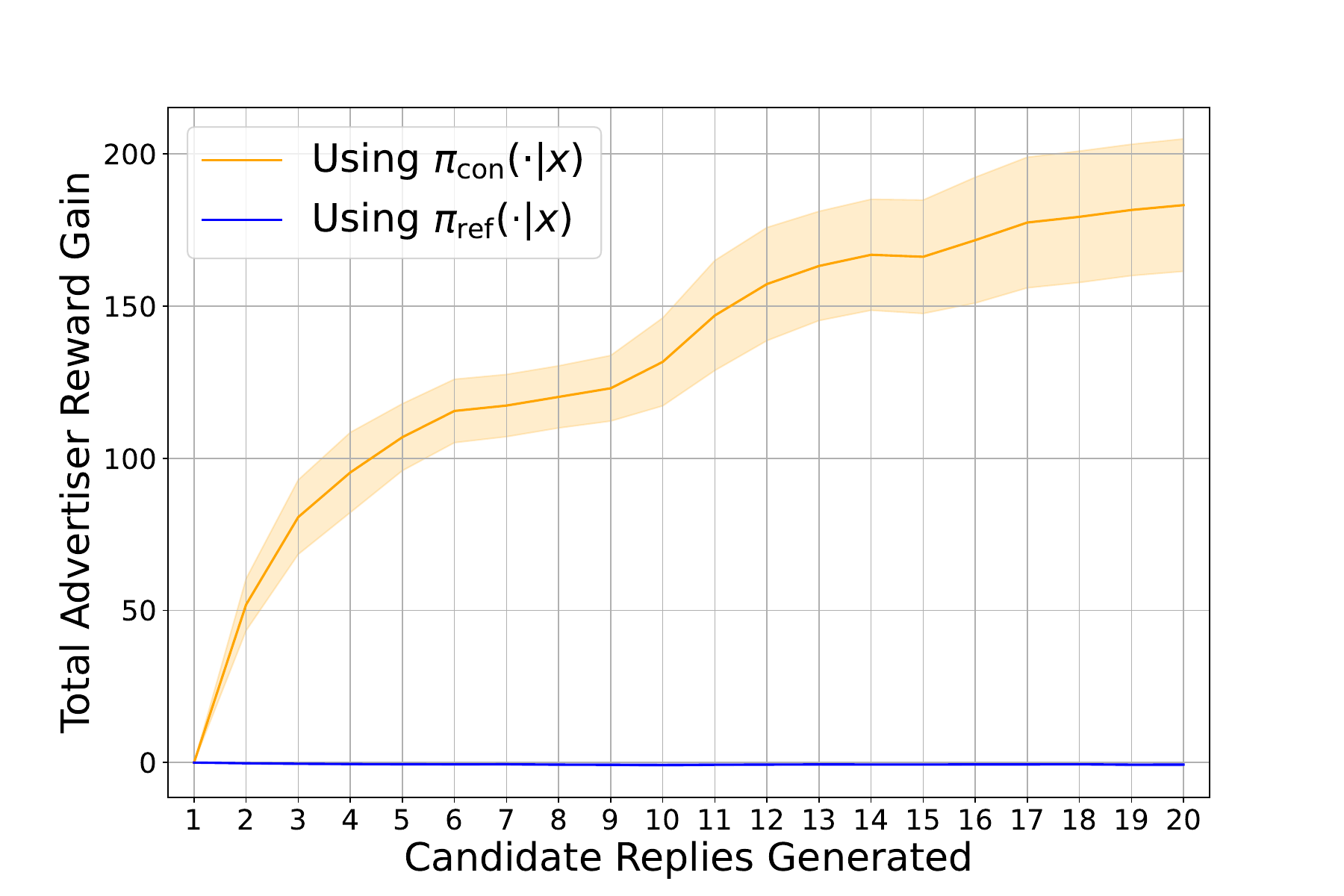}
    \caption{Advertiser value gain as a function of the number of candidate replies generated using $\pi_{\text{ref}}$ and $\pi_{\text{con}}$ (flan-t5-large model).
    }
    \label{fig:advertiser_participating-value_gain_llm_t5}
\end{figure}

\begin{figure}[h!]
    \centering
    \includegraphics[width=0.65\linewidth]{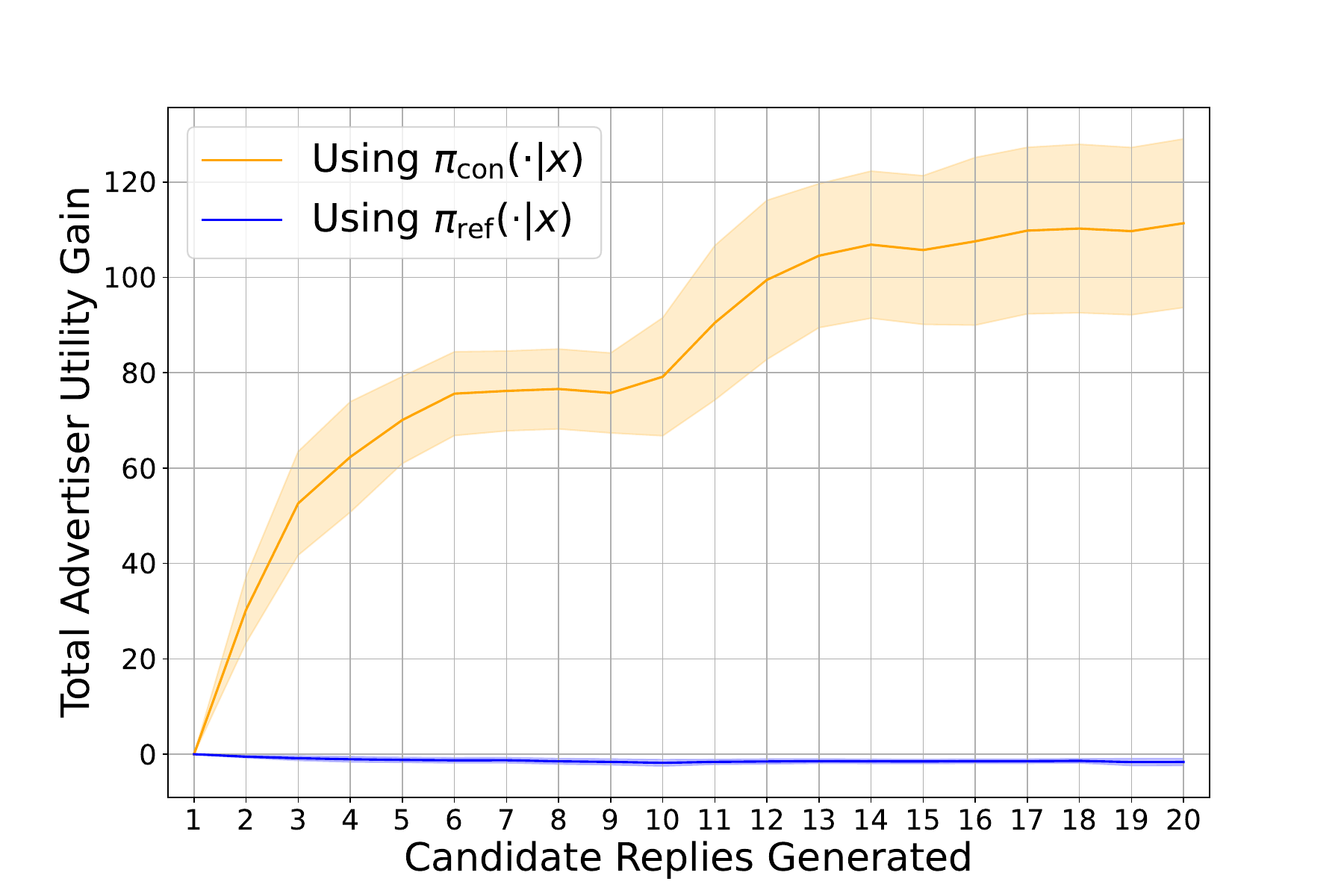}
    \caption{Advertiser value gain as a function of the number of candidate replies generated using $\pi_{\text{ref}}$ and $\pi_{\text{con}}$ (flan-t5-large model).
    }
    \label{fig:advertiser_participating-utility_gain_llm_t5}
\end{figure}

\begin{figure}[h!]
    \centering
    \includegraphics[width=0.65\linewidth]{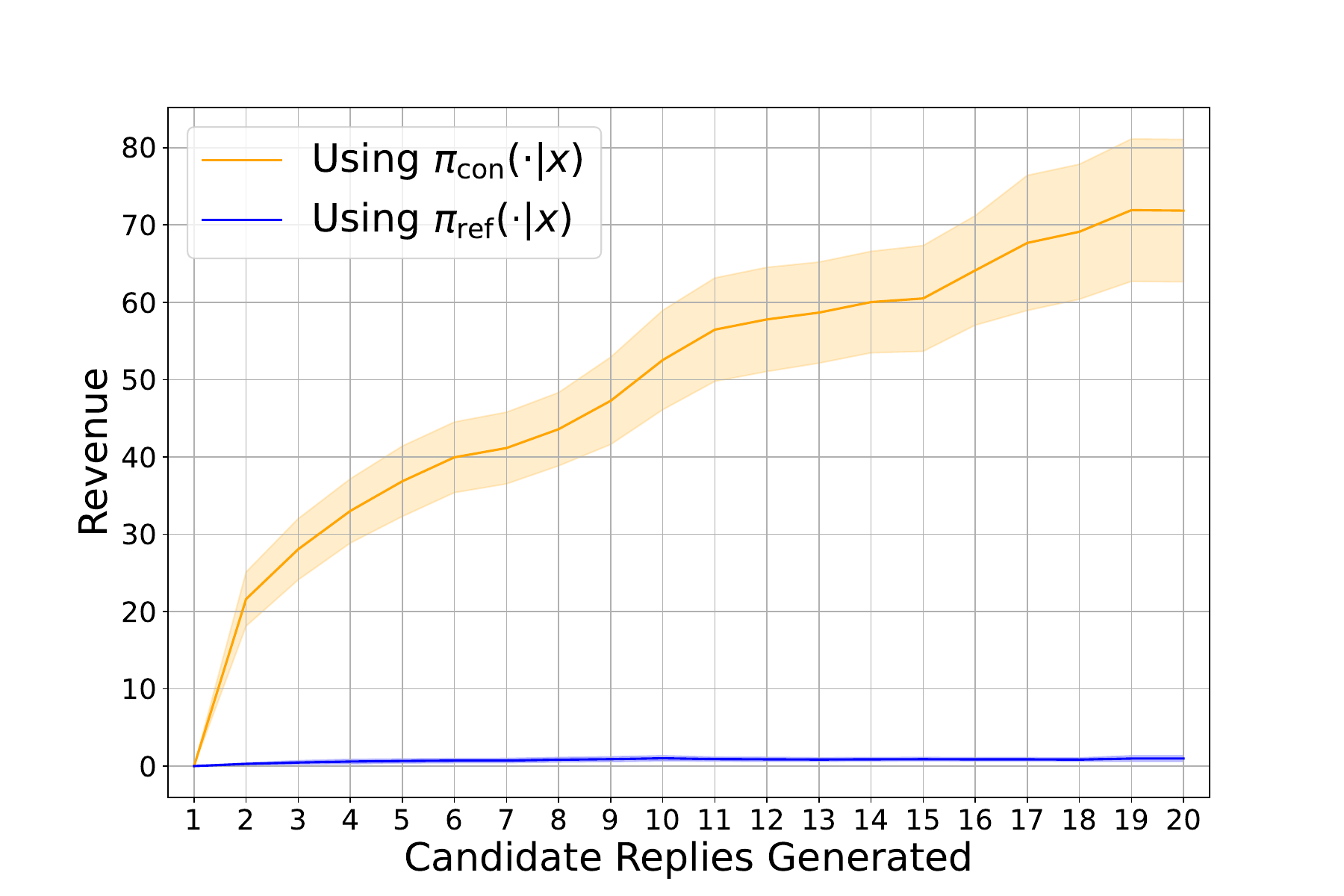}
    \caption{Revenue as a function of the number of candidate replies generated using $\pi_{\text{ref}}$ and $\pi_{\text{con}}$ (flan-t5-large model).
    }
    \label{fig:revenue_llm_t5}
\end{figure}

\begin{figure}[h!]
    \centering
    \includegraphics[width=0.65\linewidth]{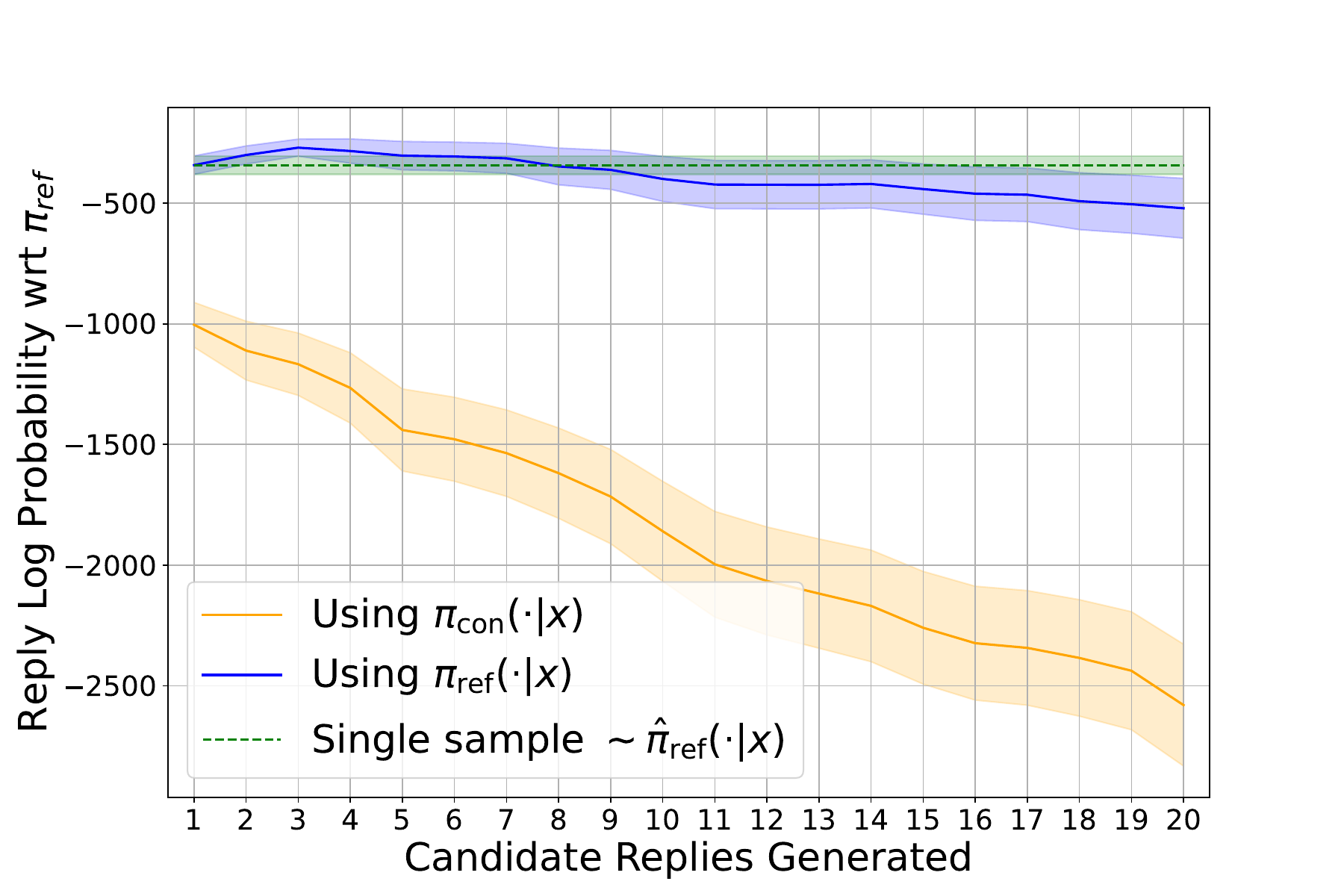}
    \caption{Reply log probability with respect to the reference LLM as a function of the number of candidate replies generated using $\pi_{\text{ref}}$ and $\pi_{\text{con}}$ (flan-t5-large model).
    }
    \label{fig:log_probability_reference_llm_t5}
\end{figure}

\newpage
\clearpage

\newpage
\clearpage

\section{Full text of prompts} \label{sec:all_prompts}

\begin{longtable}{ll}
\caption{Table of user prompts with corresponding advertisers and their descriptions. Only the first two advertisers are shown for each query; the full prompts are available in the code supplement.}
\label{tab:user_prompts_advertisers}\\
&\textbf{User Query 1} \\
&Learning to play the guitar. \\
&\textbf{Advertisers} \\
&- StrumMaster: selling guitars, tuners, and learning materials for beginners \\
&- Chordify: offering online guitar lessons and interactive practice tools \\
\midrule
&\textbf{User Query 2} \\
&Best skin care routine for dry skin. \\
&\textbf{Advertisers} \\
&- HydraSkin: providing moisturizers and serums for dry skin \\
&- GlowEssentials: selling natural and organic skin care products \\
\midrule
&\textbf{User Query 3} \\
&Improving productivity while working from home. \\
&\textbf{Advertisers} \\
&- SpaceMakers: offering ergonomic furniture and home office setups \\
&- FocusApp: providing productivity apps and time management tools \\
\midrule
&\textbf{User Query 4} \\
&Starting a small online business. \\
&\textbf{Advertisers} \\
&- EcomLaunch: offering e-commerce platform solutions and web design services \\
&- MarketMover: providing digital marketing services and SEO optimization \\
\midrule
&\textbf{User Query 5} \\
&Healthy meal planning on a budget. \\
&\textbf{Advertisers} \\
&- BudgetBites: selling affordable meal kits and recipe books \\
&- NutriSaver: offering discounts on healthy groceries and food delivery services \\
\midrule
&\textbf{User Query 6} \\
&Mastering digital photography. \\
&\textbf{Advertisers} \\
&- PixelPro: selling cameras, lenses, and photography accessories \\
&- EditCraft: offering photo editing software and online tutorials \\
\midrule
&\textbf{User Query 7} \\
&Effective ways to reduce household energy use. \\
&\textbf{Advertisers} \\
&- EcoSave: offering energy-efficient home appliances and lighting solutions \\
&- InsulaTech: providing home insulation and energy audit services \\
\midrule
&\textbf{User Query 8} \\
&Finding the perfect hiking trails. \\
&\textbf{Advertisers} \\
&- TrailFinder: offering a mobile app with detailed maps and trail reviews \\
&- GearUp: selling outdoor gear and apparel for hiking enthusiasts \\
\midrule
&\textbf{User Query 9} \\
&Building a personal brand on social media. \\
&\textbf{Advertisers} \\
&- BrandBuilder: offering personal branding courses and social media strategy consultations \\
&- VisualizeMe: providing graphic design services for social media content \\
\midrule
&\textbf{User Query 10} \\
&Learning a new language effectively. \\
&\textbf{Advertisers} \\
&- LingoLeap: offering online language learning courses and tutoring \\
&- SpeakEasy: providing language learning apps with speech recognition technology \\
\midrule
&\textbf{User Query 11} \\
&Staying fit without a gym. \\
&\textbf{Advertisers} \\
&- HomeFit: selling home workout equipment and fitness accessories \\
&- MoveIt: offering online fitness classes and personal training sessions \\
\midrule
&\textbf{User Query 12} \\
&Eco-friendly travel options. \\
&\textbf{Advertisers} \\
&- GreenPath: offering eco-friendly travel packages and sustainable tourism experiences \\
&- EcoStay: providing listings for green hotels and accommodations \\
\midrule
&\textbf{User Query 13} \\
&Mastering the art of cooking steak. \\
&\textbf{Advertisers} \\
&- GrillMaster: selling premium grills and barbecue accessories \\
&- SteakPerfection: offering online cooking classes focused on meat preparation \\
\midrule
&\textbf{User Query 14} \\
&Creating a successful YouTube channel. \\
&\textbf{Advertisers} \\
&- VidGrowth: offering video production courses and YouTube growth strategies \\
&- ChannelDesign: providing custom YouTube channel art and video thumbnails \\
\midrule
&\textbf{User Query 15} \\
&Decorating your home on a budget. \\
&\textbf{Advertisers} \\
&- DecorDeals: selling affordable home decor and furniture \\
&- StyleSavvy: offering interior design consultations and budget-friendly decorating tips \\
\midrule
&\textbf{User Query 16} \\
&Managing stress and anxiety. \\
&\textbf{Advertisers} \\
&- CalmSpace: offering mindfulness apps and stress reduction tools \\
&- WellnessWave: providing online therapy sessions and mental health resources \\
\midrule
&\textbf{User Query 17} \\
&How can I make my home more energy efficient? \\
&\textbf{Advertisers} \\
&- EcoWise: offering smart home energy solutions and solar panels \\
&- GreenInsulate: providing eco-friendly insulation materials and installation services \\
\midrule
&\textbf{User Query 18} \\
&What are the best strategies for investing in the stock market? \\
&\textbf{Advertisers} \\
&- InvestSmart: offering investment advice and financial planning services \\
&- StockGenius: providing an AI-powered stock market analysis tool \\
\midrule
&\textbf{User Query 19} \\
&How can I improve my mental health? \\
&\textbf{Advertisers} \\
&- MindBalance: offering mental health apps and mindfulness courses \\
&- SerenityNow: providing online counseling and therapy sessions \\
\midrule
&\textbf{User Query 20} \\
&Where can I find organic and sustainable clothing? \\
&\textbf{Advertisers} \\
&- EcoWear: selling sustainable and organic clothing options \\
&- GreenThreads: offering eco-friendly fabrics and fair-trade fashion \\
\midrule
&\textbf{User Query 21} \\
&How do I start a vegetable garden in my backyard? \\
&\textbf{Advertisers} \\
&- GardenGuru: selling gardening tools, seeds, and starter kits \\
&- PlantPerfect: offering online gardening courses and personalized garden planning \\
\midrule
&\textbf{User Query 22} \\
&What are the best tools for remote team collaboration? \\
&\textbf{Advertisers} \\
&- TeamSync: offering collaboration software and project management tools \\
&- ConnectSuite: providing video conferencing solutions and team communication platforms \\
\midrule
&\textbf{User Query 23} \\
&How can I learn to cook healthy meals quickly? \\
&\textbf{Advertisers} \\
&- QuickChef: offering quick and healthy cooking classes online \\
&- NutriFast: selling kitchen gadgets and appliances for healthy cooking \\
\midrule
&\textbf{User Query 24} \\
&What are the best ways to save for retirement? \\
&\textbf{Advertisers} \\
&- FutureFund: offering retirement planning and investment services \\
&- PensionPlus: providing personalized financial advice for retirement savings \\
\midrule
&\textbf{User Query 25} \\
&How can I decorate my apartment on a minimalist budget? \\
&\textbf{Advertisers} \\
&- MinimalSpace: selling minimalist furniture and decor \\
&- SimplicityDesigns: offering interior design consultation for minimalist living spaces \\
\midrule
&\textbf{User Query 26} \\
&What are the best practices for data security in small businesses? \\
&\textbf{Advertisers} \\
&- SecureIT: offering cybersecurity solutions for small businesses \\
&- DataGuard: providing data protection services and training \\
\midrule
&\textbf{User Query 27} \\
&How can I reduce my carbon footprint with daily habits? \\
&\textbf{Advertisers} \\
&- EcoHabits: offering products and tips for sustainable living \\
&- GreenStep: providing carbon offset subscriptions and eco-friendly lifestyle goods \\
\midrule
&\textbf{User Query 28} \\
&What are effective ways to learn a musical instrument online? \\
&\textbf{Advertisers} \\
&- MusicMastery: offering online music lessons and virtual tutoring \\
&- InstaTune: selling musical instruments and learning aids \\
\midrule
&\textbf{User Query 29} \\
&How can I keep my pets healthy and happy? \\
&\textbf{Advertisers} \\
&- PetWell: offering pet health insurance and wellness products \\
&- HappyPaws: selling organic pet food and eco-friendly toys \\
\midrule
&\textbf{User Query 30} \\
&What's the best way to start learning coding from scratch? \\
&\textbf{Advertisers} \\
&- CodeBegin: offering coding bootcamps and online tutorials for beginners \\
&- DevPath: providing interactive coding courses and learning platforms \\
\midrule
&\textbf{User Query 31} \\
&How to find affordable eco-friendly travel destinations? \\
&\textbf{Advertisers} \\
&- GreenTraveler: offering eco-friendly travel guides and affordable green vacation packages \\
&- EcoTripAdvisor: providing reviews and bookings for sustainable travel options \\
\midrule
&\textbf{User Query 32} \\
&What are the best ways to improve indoor air quality? \\
&\textbf{Advertisers} \\
&- AirPure: selling air purifiers and natural air freshening solutions \\
&- BreatheEasy: offering indoor plant subscriptions and ventilation solutions \\
\midrule
&\textbf{User Query 33} \\
&How to make my own natural beauty products? \\
&\textbf{Advertisers} \\
&- NatureGlow: offering DIY beauty product kits and organic ingredients \\
&- EcoBeauty: selling guides and supplies for creating homemade natural cosmetics \\
\midrule
&\textbf{User Query 34} \\
&What are the best techniques for effective online teaching? \\
&\textbf{Advertisers} \\
&- EduTech: offering online teaching tools and educational technology solutions \\
&- TeachMaster: providing courses on online pedagogy and digital classroom management \\
\midrule
&\textbf{User Query 35} \\
&How to create a successful podcast from home? \\
&\textbf{Advertisers} \\
&- PodCastify: offering podcast creation courses and audio equipment \\
&- SoundWave: selling podcast editing software and hosting services \\
\midrule
&\textbf{User Query 36} \\
&What are the most effective ways to market a small business locally? \\
&\textbf{Advertisers} \\
&- LocalBoost: offering local SEO services and marketing strategies for small businesses \\
&- MarketTown: providing targeted advertising solutions and local business networking opportunities \\
\midrule
&\textbf{User Query 37} \\
&How can I optimize my home office for productivity? \\
&\textbf{Advertisers} \\
&- WorkZen: offering ergonomic office furniture and productivity tools \\
&- SpaceCraft: providing home office design services and organization solutions \\
\midrule
&\textbf{User Query 38} \\
&What are the best platforms for freelance writers to find work? \\
&\textbf{Advertisers} \\
&- WriteFreely: offering a job board and resources for freelance writers \\
&- ContentConnect: providing a platform to connect writers with businesses \\
\midrule
&\textbf{User Query 39} \\
&How to get started with meditation and mindfulness? \\
&\textbf{Advertisers} \\
&- MindfulStart: offering guided meditation apps and mindfulness training \\
&- ZenSpace: selling meditation cushions and mindfulness journals \\
\midrule
&\textbf{User Query 40} \\
&Where can I find unique handmade gifts? \\
&\textbf{Advertisers} \\
&- CraftCorner: selling handmade gifts and artisan crafts \\
&- MakerMarket: offering a marketplace for unique handmade items from local artists \\
\midrule
&\textbf{User Query 41} \\
&How can I improve my credit score quickly? \\
&\textbf{Advertisers} \\
&- CreditBoost: offering credit monitoring services and financial advice \\
&- ScoreSurge: providing tools and tips for improving your credit score \\
\midrule
&\textbf{User Query 42} \\
&What are the best apps for tracking fitness and nutrition? \\
&\textbf{Advertisers} \\
&- FitTrack: offering comprehensive fitness and nutrition tracking apps \\
&- NutriSync: selling personalized nutrition plans and diet tracking tools \\
\midrule
&\textbf{User Query 43} \\
&How to plan an environmentally friendly wedding? \\
&\textbf{Advertisers} \\
&- GreenBride: offering eco-friendly wedding planning services and supplies \\
&- EcoWed: providing sustainable wedding attire and decor \\
\midrule
&\textbf{User Query 44} \\
&What are the safest ways to travel during a pandemic? \\
&\textbf{Advertisers} \\
&- SafeJourney: offering travel safety kits and pandemic travel advice \\
&- HealthTravel: providing information on safe destinations and travel insurance \\
\midrule
&\textbf{User Query 45} \\
&How can I learn DIY home repairs and improvements? \\
&\textbf{Advertisers} \\
&- FixItYourself: offering online courses and tutorials for home repair \\
&- DIYHome: selling DIY home improvement tools and kits \\
\midrule
&\textbf{User Query 46} \\
&What are the best resources for starting a plant-based diet? \\
&\textbf{Advertisers} \\
&- PlantEats: offering guides and meal plans for starting a plant-based diet \\
&- VeggieVibe: selling plant-based cookbooks and kitchen gadgets \\
\midrule
&\textbf{User Query 47} \\
&Tips for running a marathon. \\
&\textbf{Advertisers} \\
&- RunFastGear: specializing in high-performance running shoes \\
&- HydraFuel: offering electrolyte drinks and energy bars for athletes \\
\midrule
&\textbf{User Query 48} \\
&Best practices for organic gardening. \\
&\textbf{Advertisers} \\
&- GreenThumb Solutions: providing organic fertilizers and pest control products \\
&- EcoGrow: selling heirloom seeds and sustainable gardening tools \\
\midrule
&\textbf{User Query 49} \\
&How to improve home WiFi? \\
&\textbf{Advertisers} \\
&- SignalBoost: which offers advanced WiFi routers and extenders \\
&- NetWizard: providing network optimization services and support \\
\midrule
&\textbf{User Query 50} \\
&Ways to save on travel. \\
&\textbf{Advertisers} \\
&- BudgetJourneys: specializing in affordable travel packages and deals \\
&- StayLocal: offering discounts on boutique hotels and unique accommodations \\
\bottomrule
\end{longtable}

\newpage
\clearpage

\newpage

\end{document}